\documentclass[11pt]{extarticle}

\usepackage{soul}
\usepackage{url}
\usepackage[hidelinks,breaklinks]{hyperref}
\usepackage[utf8]{inputenc}
\usepackage[small]{caption}
\usepackage{graphicx}
\usepackage{booktabs}

\usepackage{amsmath,multirow}

\usepackage{enumerate}
\usepackage{etoolbox}
\usepackage{subcaption}
\usepackage{xcolor,xfrac}
\usepackage{natbib}
\usepackage{hyperref}
\usepackage[english]{babel}
\usepackage{graphicx,charter}
\usepackage{authblk}
\usepackage{framed}
\usepackage[normalem]{ulem}
\usepackage{amsmath,multirow}
\usepackage{amsthm,thm-restate,thmtools}
\usepackage{tikz}
\usepackage{amssymb}
\usepackage{amsfonts}
\usepackage{enumerate}
\usepackage[T1]{fontenc}
\usepackage[utf8]{inputenc}
\usepackage[top=1 in,bottom=1in, left=1 in, right=1 in]{geometry}
\usepackage{url}
\usepackage{enumitem,natbib}
\usepackage{tabularx}
\usepackage{algorithm}
\usepackage{algorithmic}
\usepackage{chngcntr}

\hypersetup{
     colorlinks   = true,
     linkcolor    = red, % color of internal links
     urlcolor     = teal, % color of external links
	 citecolor    = blue % color of links to bibliography
}

% OUR PACKAGES:

% For \rhd sign
\usepackage{amssymb}
\usepackage{mathtools}
\usepackage{thmtools,thm-restate}
\usepackage{etoolbox}

\newcommand{\citepapp}[1]{\cite{#1}}
\newcommand{\citetapp}[1]{\citeauthor{#1}~[\citeyear{#1}]}

\newcommand{\appplaceholder}{}

% For ą
\usepackage[T1]{fontenc}

% For refering paragraph
\usepackage{nameref}

\usepackage{pifont}
\usepackage{xcolor}
\usepackage{colortbl}
\usepackage{hyperref}[backref=page]
% For subfigure environment 
\usepackage{subcaption}

\usepackage[capitalise,noabbrev]{cleveref}
\Crefname{remark}{Remark}{Remarks}
\Crefname{rmk}{Remark}{Remarks}
\Crefname{dfn}{Definition}{Definitions}
\Crefname{thm}{Theorem}{Theorems}
\Crefname{cor}{Corollary}{Corollaries}
\Crefname{lem}{Lemma}{Lemmas}
\Crefname{examplex}{Example}{Examples}
\Crefname{prop}{Proposition}{Propositions}
\Crefname{claim}{Claim}{Claim}
% Add cleveref configuration for the new claim_app environment
\Crefname{claim_app_EF}{Claim}{Claim}
\Crefname{claim_app_EFX}{Claim}{Claim}
\Crefname{claim_app_EF1}{Claim}{Claim}

% For pictures
\usepackage{tikz}

% OUR ENVIRONMENTS: 

\newtheorem{theorem}{Theorem}
\newtheorem{example}{Example}

\newtheorem{proposition}{Proposition}
\newtheorem{corollary}{Corollary}
\newtheorem{claim}{Claim}
\newtheorem{remark}{Remark}
\newtheorem{definition}{Definition}
% New environment for claims in the appendix
\newtheorem{claim_app_EF}{Claim}
\newtheorem{claim_app_EFX}{Claim}
\newtheorem{claim_app_EF1}{Claim}

% OUR COMMANDS:

\newcommand{\MMS}{\textrm{\textup{MMS}}}

\newcommand{\EFX}{\textrm{\textup{EFX}}}
\newcommand{\EF}[1]{\ifstrempty{#1}{\textrm{\textup{EF}}}{\textrm{\textup{EF{$#1$}}}}}

\newcommand{\PO}{\textup{PO}}

\newcolumntype{P}[1]{>{\centering\arraybackslash}p{#1}}
\newcolumntype{M}[1]{>{\centering\arraybackslash}m{#1}}

% Commenting commands 

% "Input" and "Output" in algorithms

\title{Almost Envy-Freeness 

under Weakly Lexicographic Preferences}

\author{Hadi Hosseini}
\author{Aghaheybat Mammadov}
\author{Tomasz Wąs}
\affil{Pennsylvania State University\vspace{0.15cm}\\
 {\normalsize \{hadi, mammadovagha, twas\}@psu.edu}}
\date{}

\begin{document}
\maketitle
\begin{abstract}
\noindent
In fair division of indivisible items, domain restriction has played a key role in escaping from negative results and providing structural insights into the computational and axiomatic boundaries of fairness. 
One notable subdomain of additive preferences, the lexicographic domain, has yielded several positive results in dealing with goods, chores, and mixtures thereof.
However, the majority of work within this domain primarily consider strict linear orders over items, which do not allow the modeling of more expressive preferences that contain indifferences (ties).
We investigate the most prominent fairness notions of envy-freeness up to any (\EFX{}) or some (\EF{1}) item under weakly lexicographic preferences. 
For the goods-only setting, we develop an algorithm that can be customized to guarantee \EF{1}, \EFX{}, maximin share (\MMS{}), or a combination thereof, along the efficiency notion of Pareto optimality (\PO{}).
From the conceptual perspective, we propose techniques such as preference graphs and potential envy that are independently of interest when dealing with ties.
Finally, we demonstrate challenges in dealing with chores and highlight key algorithmic and axiomatic differences of finding EFX solutions with the goods-only setting.
Nevertheless, we show that there is an algorithm that always returns an \EF{1} and \PO{} allocation for the chores-only instances. 

\end{abstract}

\section{Introduction}

The distribution of indivisible items is a fundamental problem in a wide array of societal, computational, and economic settings.
Over the past few decades, the field of fair division has emerged to promote fairness in designing scalable algorithms for distributing resources and tasks.
Arguably, one of the primary drivers behind recent advancements in algorithmic fairness was the focus on restricted domains (e.g., additive valuations) that enabled concise representation of preferences in complex resource allocation problems.
Domain restriction has played a key role in escaping from negative results and providing structural insights into the computational and axiomatic boundaries in computational social choice \citep{lang2018voting,HL19multiple,nguyen2020fairly,fujita2018complexity}. 
One notable subdomain of additive preferences, the lexicographic domain, has received much attention due to its succinct representation of complex preferences and its natural proxy for modeling consumer behavior \citep{gigerenzer1996reasoning,fishburn1974exceptional}.

Within fair division, the lexicographic domain has resulted in several positive results in dealing with goods \citep{hosseini2021fair}, chores \citep{ebadian2022fairly,hosseini2022fairly}, and mixtures thereof \citep{hosseini2023fairly}.
These results primarily consider strict linear orderings over the items, which do not allow for the modeling of more expressive preferences that contain ties (aka weak preferences).
Yet, individuals are often indifferent between sets of items and tend to group those into `equivalence classes'.
For example, in conference paper reviewing, one may strictly prefer papers from computational social choice to those from computer vision, but the same reviewer may be indifferent between papers that cover topics in fair division.

The introduction of weak preferences poses several intriguing algorithmic challenges, particularly when dealing with economic efficiency; as such ties have been in the center of attention in a large body of work, for example, in 
probabilistic assignments with ordinal or lexicographic preferences \citep{katta2006solution,saban2014note,aziz2015fair}, 
mechanism design for object allocation \citep{bogomolnaia2005strategy,krysta2014size},
assigning papers to reviewers \citep{garg2010assigning}, and
Shapley-Scarf housing markets \citep{klaus2023core,saban2013house,jaramillo2012difference,aziz2012housing}.

Along this line, we focus on expanding the existing computational and axiomatic results in fair division to the more expressive preference class that allows for `mild continuity' in preferences over items.
We investigate the most prominent fairness notion, envy-freeness \EF{}, and its relaxations envy-freeness up to \textit{any} item (\EFX{}) \citep{caragiannis2019unreasonable} and envy-freeness up to \textit{some} item (\EF{1}) \citep{lipton2004approximately,budish2011combinatorial}, along with an efficiency requirement of Pareto optimality (\PO{}).
\EFX{} requires that a pairwise envy between the two agents is eliminated by the removal of any good (or chore) from the envied (envious) agent's bundle, while \EF{1} relaxes this requirement to the removal of some single item from the bundle of one of the two agents.
Furthermore, we consider fairness notion of \emph{maximin share} (\MMS{}) \citep{budish2011combinatorial},
which states that an agent should receive a bundle at least as preferable
as the best possible bundle it can get
in a partition 
from which it receives the worst bundle.

\subsection{Contributions}
We study fair and efficient allocations 
in goods-only and chores-only instances 
under weakly lexicographic preferences. 
We present computational and axiomatic results 
regarding well-established fairness notions of
\EF{} and its prominent relaxations, 
\EFX{} and \EF{1}, and \MMS{}, 
as well as conceptual techniques which 
hold independent significance when dealing with ties.

\paragraph{Goods.} Our first result shows that 
with the introduction of ties, deciding if 
an \EF{} allocation of goods exists becomes NP-complete, 
even when agents have at most two indifference classes 
(\cref{thm:goods:ef:hardness}). 
Given the nonexistence and computational hardness for \EF{}, 
we develop a customizable algorithm that
finds a \PO{} allocation satisfying
an \EF{} relaxation of \EF{1}, \MMS{}, \EFX{}, 
or a combination thereof,
depending on the chosen parameters (\cref{thm:goods:efx}).

This way, we are able to pinpoint to an exact place in our algorithm that is responsible for guaranteeing fairness.

\paragraph{Chores.} Under additive preferences, 
there has been much less progress for chores, and as such, 
the existence of \EF{1} and \PO{} is still open. We 
prove the existence and computation of such an allocation 
under weakly lexicographic preferences (\cref{thm:chores:ef1}) as the \textit{largest} known subclass of additive valuations for which \EF{1} and \PO{} can be guaranteed.
Moreover, we illustrate the challenges 
involved in fairly allocating chores 
(\cref{ex:chores:not_easy1,ex:chores:not_easy2}), and show that in the chores-only instances, \EFX{} implies \MMS{} (\cref{prop:chores:efx:implies:mms}), which surprisingly stands in contrast to the case of goods (\cref{app:efx:mms:relation}). 

\paragraph{Conceptual Techniques.} 
From the technical perspective, we develop 
novel techniques such as 
preference graphs and potential envy (\cref{def:potential_envy}
and \cref{fig:goods:ex:g_pref}) for handling indifferences when the goal is to guarantee fairness along with efficiency in this setting.

\section{Related Work}

It is well-known that for goods-only instances with additive preferences an \EF{1} allocation 
can be efficiently computed~\citep{lipton2004approximately}. 
Moreover, an allocation satisfying \EF{1} and \PO{} always exists~\citep{caragiannis2019unreasonable} and can be computed in
pseudo-polynomial time~\citep{barman2018finding}. 
Nevertheless, the existence of 
\EFX{} allocations remains unresolved for this general setting. 
To gain insights into the problem,
a wide variety of restricted domains have been considered. 
In particular, it has been shown that 
an \EFX{} allocation always exists
for instances with identical valuations~\citep{PR20almost},
under submodular utilities 
with binary marginals~\citep{BEF21fair},
and for additive preferences with interval valuations or 
at most two distinct values~\citep{ABF+21maximum,garg2021computing}.
However, 
\EFX{} is not compatible with \PO{} 
under submodular valuations~\citep{PR20almost}. 
Nevertheless, under strict lexicographic preferences, 
an \EFX{} and \PO{}
allocation always exists and can be computed 
in polynomial time~\citep{hosseini2021fair}. 

In studying possible fairness, \citet{aziz2023possible} proposed
an algorithm for finding an \EFX{}, \MMS{}, and \PO{} allocation
under weakly lexicographic preferences.
Our algorithm achieving the same guarantees
has been developed concurrently and independently.
Moreover, there are significant differences between these algorithms,
both in terms of their outputs
(see \cref{ex:goods:efx})
as well as construction.
The main difference on the conceptual level
is that our algorithm can be easily alternated to cover different notions of fairness, allowing for better understanding of where the particular fairness guarantee is invoked.
For more detailed comparison see \cref{app:comparison}.

Analogous problems for chores-only instances seem much more complex.
\EF{1} allocations can still be efficiently computed~\citep{bhaskar2021approximate,aziz2022fair}, however the existence of \EF{1} and \PO{} allocations remains an important open problem, except for a few restricted domains~\citep{ebadian2022fairly,garg2022fair,aziz2023fairtypes}.
Despite much efforts, the progress for \EFX{} is also limited. Recent works established its existence for 
ordered instances~\citep{li2022almost}, 
generalized binary 
preferences~\citep{camacho2023generalized},
strictly lexicographic preferences~\citep{hosseini2022fairly},
and 
instances with two types of chores~\citep{aziz2023fairtypes}.

The next most studied fairness notion is arguably \MMS{}.
It may fail to exist for both 
goods-only~\citep{kurokawa2018fair} and 
chores-only~\citep{aziz2017algorithms}
instances under additive preferences. 
Therefore, both multiplicative approximations~\citep{aziz2017algorithms,ghodsi2018fair,garg2020improved} and
ordinal ones ~\citep{babaioff2019fair,hosseini2022ordinal} 
have been proposed and analyzed. 
\citep{ebadian2022fairly} showed, among others, that an \MMS{} and \PO{} allocation can be efficiently computed in every goods-only or chores-only weakly lexicographic instance.

\section{Preliminaries}

For every $k \in \mathbb{N}$, let $[k]=\{1,\dots,k\}$.
A weakly lexicographic instance
is defined as a triple $(N,M,\rhd)$,
in which $N = [n]$ is a set of \emph{agents},
$M$ a set of \emph{items},
and $\rhd = (\rhd_i)_{i \in N}$ a family of
weak linear orders
over $M$,
based on which we will define the preferences of the agents.
We will consider two types of instances:
in a \emph{goods-only} instance,
we say that
$M = \{g_1,\dots,g_m\}$ is a set of \emph{goods},
while in a \emph{chores-only} instance,
it is a set of \emph{chores}, $M = \{c_1,\dots,c_m\}$.

\paragraph{Indifference classes.}
\noindent
For each agent $i \in N$, a weak linear order $\rhd_i$
partitions the set $M$ into $k_i$ indifference classes,
$\rhd_i(1), \dots, \rhd_i(k_i)$,
for some integer $k_i \le |M|$.
Intuitively, the agent is indifferent between
two items from the same class,
but for every $k \in [k_i]$,
a single element of the $k$-th class, i.e., $\rhd_i(k)$
cannot be compensated by any number of items
from the later classes, i.e., $\rhd_i(k')$ for $k'> k$.
In examples,
we will define the weak linear orders
by explicitly writing the indifference classes
divided by a triangle sign, $\rhd$, to denote relation between them, e.g.,
\begin{equation}
    \label{eq:ex}
    1: \ g_1 \rhd \{g_2, g_3\} \rhd g_4.
\end{equation}

\paragraph{Preferences.}
For every subset $X \subseteq M$,
and indifference class $k \in [k_i]$,
let $\rhd_i(k, X) = \rhd_i(k) \cap X$.
Then, we will denote a \emph{score} vector,
$s_i(X) = (s_i(1,X),\dots,s_i(k_i,X))$,
in which for each $k \in [k_i]$
its $k$-th coordinate is defined as
$s_i(k, X) = |\rhd_i(k, X)|$, for goods-only instances,
and $s_i(k, X) = - |\rhd_i(k, X)|$, for chores-only instances.
For $X = M$, we will write simply $s_i(k)$ for brevity.
For two sets of items, $X, Y \subseteq M$,
agent $i$ \emph{strictly prefers} set $X$ over $Y$,
i.e., $X \succ_i Y$,
if the score of $X$
lexicographically dominates the score of $Y$,
i.e., $s_i(X) >_{lex} s_i(Y)$.
In other words,
there exists $\bar{k} \in [k_i]$ such that
$s_i(\bar{k},X) > s_i(\bar{k},Y)$, and
$s_i(k,X) = s_i(k,Y)$, for every $k < \bar{k}$.

For example, agent 1 from \eqref{eq:ex}
prefers $X = \{g_1,g_4\}$ over $Y = \{g_2,g_3,g_4\}$ as
\(
    s_i(X) = (1,0,1) >_{lex} (0,2,1) = s_i(Y).
\)
If $s_i(X) = s_i(Y)$,
then agent $i$ is \emph{indifferent} between the sets,
i.e., $X \sim_i Y$.
Finally, agent $i$ \emph{weakly prefers} $X$ over $Y$,
denoted by $X \succeq_i Y$, if either $X \succ_i Y$ or $X \sim_i Y$.
For singleton sets,
we will skip the brackets in these relations, e.g.,
we will write $g_1 \succ_1 g_2 \sim_1 g_3$.

\paragraph{Allocations.}
An allocation $A = (A_1,\dots,A_n)$ is an $n$ partition of $M$,
such that the \emph{bundles} of agents are disjoint, i.e.,
$A_i \cap A_j = \emptyset$, for every $i,j \in N$.
If all items are allocated, i.e., $\bigcup_{i\in N} A_i = M$,
we say that $A$ is \emph{complete}.
It is \emph{partial} otherwise.
The goal is to find a complete allocation
that is fair and efficient.

\paragraph{Envy-freeness.}
An agent $j$ \emph{envies} $i$ if $A_i \succ_j A_j$.
An allocation $A$ is said to be 
\emph{envy-free} (\EF{})
if no agent envies any other.
It is \emph{envy-free up to one item} (\EF{1}),
            if for every pair of agents $i,j \in N$ such that 
            $j$ envies $i$,
            $A_j \succeq_j A_i \setminus \{g\}$
            for some $g \in A_i$ in case of 
            a goods-only instance, or
            $A_j \setminus \{c\} \succeq_j A_i $
            for some $c \in A_j$ in case of 
            a chores-only instance.
Finally, allocation $A$ is \emph{envy-free up to any item} (\EFX{}),
            if for every pair of agents $i,j \in N$,
            $A_j \succeq_j A_i \setminus \{g\}$
            for \textit{every} $g \in A_i$ in case of 
            a goods-only instance, or
            $A_j \setminus \{c\} \succeq_j A_i $
            for \textit{every} $c \in A_j$ in case of 
            a chores-only instance.

\paragraph{Maximin share.}
A \emph{maximin share} of an agent is the score vector
of the most preferred bundle it can guarantee for itself
by dividing the items into $n$ bundles and
receiving the worst one.
Formally,
\(
    \MMS{}_i = \max_{(A_1,\dots,A_n) \in \Pi^n} \min_{j \in [n]} s_i(A_j),
\)
where $\Pi^n$ is a set of all possible $n$-partitions
and $\max$ and $\min$ are determined based on lexicographic dominance.
An allocation $A$ satisfies \emph{maximin share} (\MMS{})
if for every agent $i \in N$,
it holds that
$s_i(A_i) \ge_{lex} \MMS{}_i$.

\paragraph{Pareto optimality.}
An (possibly partial) allocation $A$ \emph{Pareto dominates} allocation $B$
if $A$ assigns the same set of goods as $B$, i.e.,
$\bigcup_{i\in N}  A_i = \bigcup_{i\in N}  B_i$, and
$A_i \succeq_i B_i$ for every $i \in N$
and there exists $j \in N$ such that $A_j \succ_j B_j$.
Allocation $A$ is \emph{Pareto optimal} (\PO{}), if it is not Pareto dominated by any other allocation. 

\section{Goods}
\label{sec:goods}

Let us start with weakly lexicographic goods-only instances.

\citep{hosseini2021fair} proved that we can
decide if a strict lexicographic goods-only instance admits an \EF{} allocation
in polynomial time.
However, we show that the same is no longer true
for weakly lexicographic instances.

\begin{restatable}{theorem}{thmgoodsefhardness}\label{thm:goods:ef:hardness}
Deciding whether an \EF{} allocation exists for a given weakly lexicographic goods-only instance is NP-complete, even if every agent has at most two indifference classes.
\end{restatable}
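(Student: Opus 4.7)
The plan is to prove both NP-membership and NP-hardness. Membership is routine: given a candidate allocation $A$, one can verify envy-freeness in polynomial time by computing the score vectors $s_i(A_j)$ for every ordered pair $(i,j)$ of agents and checking lexicographic dominance $s_i(A_i) \ge_{lex} s_i(A_j)$. Each comparison takes $O(m)$ time and there are $O(n^2)$ pairs, so verification is polynomial.

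For NP-hardness, I would reduce from an NP-complete combinatorial problem of a partitioning or exact-cover flavor, e.g., X3C (Exact Cover by 3-Sets), 3-Dimensional Matching, or Not-All-Equal 3-SAT. Taking X3C as the source, given a ground set $X$ with $|X|=3q$ and a family $\mathcal{F} = \{S_1,\ldots,S_n\}$ of 3-subsets, the construction would introduce one good per element of $X$ plus carefully chosen padding goods, and one agent $\alpha_j$ per $S_j \in \mathcal{F}$ whose top indifference class consists of the goods corresponding to $S_j$ together with agent-specific padding, and whose bottom class is everything else. Additional dummy agents or balancing goods may be needed to absorb excess in a controlled way, while keeping every agent at two classes.

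The forward direction would take an exact cover $\mathcal{F}' \subseteq \mathcal{F}$ and assign, for each $S_j \in \mathcal{F}'$, agent $\alpha_j$ its three top goods; the padding is then distributed so that top- and bottom-counts equalize across all bundles. The backward direction is the real work: the EF conditions $|T_j \cap A_{\alpha_j}| \ge |T_j \cap A_k|$ for every $k$, together with bottom-count tie-breaking, should force any EF allocation to hand each agent either exactly its full top set or none of it, and exactly $q$ agents must get a full top set, thereby inducing an exact cover of $X$.

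The main obstacle, and the heart of the proof, is designing the padding and dummy gadgets tightly enough that the only EF allocations are those encoding yes-instances, while every agent still has at most two indifference classes. With only two classes the designer has limited expressiveness: EF reduces to a pair of counting inequalities per ordered pair of agents, so without careful balance it is easy to admit spurious EF allocations unrelated to the X3C instance. The backward direction should therefore rest on a structural rigidity lemma that, by combining the top-count balance inequalities with the bottom-count tie-breakers, pins down the cover structure of any EF allocation; developing this lemma is where I expect the bulk of the combinatorial effort to lie.
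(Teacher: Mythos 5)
There is a genuine gap: what you have written is a plan, not a proof, and the part you defer---the ``structural rigidity lemma'' plus the design of the padding gadgets---is precisely the entire difficulty of the theorem. You never specify the padding goods, the dummy agents, or the counting argument that forces an \EF{} allocation to encode an exact cover; you only state what the construction ``should'' force. Moreover, the specific obstacle that makes this statement nontrivial in the weakly lexicographic domain works directly against your padding idea: unlike binary valuations, here no good can be worthless to any agent (every additional good strictly improves a bundle within its class), so ``padding goods distributed so that top- and bottom-counts equalize'' themselves generate envy pressure---an agent with the same top-count as another will envy it as soon as it holds one more bottom good, and a single extra top-class good dominates any number of bottom goods. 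Controlling this simultaneously for all ordered pairs (set agents versus set agents, set agents versus dummies, dummies versus everyone) is exactly the combinatorial work you have not done, and it is not clear that an X3C-style gadget with per-agent padding can be balanced at all, since the number of selected sets ($q$) versus unselected sets ($n-q$) makes uniform equalization of bundle profiles delicate.

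For comparison, the paper's proof reduces from Equitable Coloring (with all color classes of exactly equal size) rather than X3C: it uses one agent per edge and one per color, two goods per edge and one per vertex, with each edge agent's top class consisting of all edge goods plus its two endpoints and each color agent's top class consisting of all vertex goods. The equitable-size requirement of the source problem supplies the balance you hoped padding would provide, and the backward direction is established through a sequence of counting claims plus cycle-chasing arguments (using a minimum-degree-two assumption on the graph) showing that every edge agent must receive exactly its two edge goods and every color agent exactly $n/l$ vertex goods with no shared edge endpoints. Your proposal would only become a proof if you supplied an analogous chain of forcing arguments for your X3C gadget, and at present there is no evidence the gadget admits one.
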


The proof is relegated to \cref{app:ef:np-hardness}, where we also show NP-completeness of \EF{} combined with \PO{}.
We build upon similar proofs by
\citet{aziz2015fair} and \citet{hosseini2020HEF}
for binary preferences.
However, under binary preferences a good may
hold no additional value to an agent,
which is not possible in our setting
(i.e., an agent always
strictly prefers to have a good rather than not).
As a result, our construction is more intricate
and requires additional steps in the proof.

Motivated by the non-existence and computational hardness for \EF{},
we move to its most compelling relaxation: \EFX{}. 
In what follows, we will
provide an algorithm (\cref{alg:goods:efx}) that,
depending on a given requirement,
finds an allocation that satisfies \PO{} as well as
\EF{1}, \MMS{}, \EFX{}, or \MMS{} and \EFX{} simultaneously (\cref{thm:goods:efx}).

\subsection{Preference Graph and Potential Envy}

We begin by introducing additional constructions and definitions which our algorithm utilizes.

\paragraph{Preference graph.}
First,
let us introduce a \emph{preference graph},
which is a weighted complete bipartite graph, $G_{pref} = (N,M,E,\psi)$,
in which both agents and goods are vertices,
all agent-good pairs are edges $E = N \times M$,
and the weights, $\psi : E \rightarrow \mathbb{N}$,
denote in which indifference class is a given good
for a given agent, i.e., for every $(i, g) \in E$,
$\psi(i, g) = k$ such that $g \in \rhd_i(k)$.
An example of a preference graph can be found in
\cref{fig:goods:ex:g_pref}.
For every (possibly partial) allocation $A$,
we will slightly abuse the notation and treat $A$
both as a collection of subsets, $(A_1,\dots,A_n)$,
and as a subset of edges in a preference graph,
$\bigcup_{i \in N} \{ (i,g) : g \in A_i\}$,
depending on the context.
Given a preference graph $G_{pref} = (N,M,E,\psi)$
and an (possibly partial) allocation $A$,
an \emph{alternating path}, $p = (g_0, i_1, g_1, \dots, i_s, g_s)$,
is a path in $G_{pref}$ 
(possibly of length zero) such that
\begin{itemize}
    \item the goods in $p$ are pairwise distinct,
    \item $p$ alternates between edges
    that belong and do not belong to $A$,
    i.e., $(g_{r-1}, i_{r}) \in A$, for every $r \in [s]$, and
    $(i_{r}, g_r) \not \in A$, for every $r \in [s]$, and
    \item for every $r \in [s]$,
    agent $i_{r}$ weakly prefers good $g_{r}$ over $g_{r-1}$,
    i.e., $\psi(g_{r-1}, i_r) \ge \psi(i_{r}, g_{r})$.
\end{itemize}

\paragraph{Available goods.}
We say that a good $g$ is \emph{available}
if there exists
an alternating path from $g$ to
an unallocated good
(or $g$ is unallocated itself).
Otherwise $g$ is \emph{unavailable}.
If an alternating path $p = (g_0, i_1, g_1, \dots, i_s, g_s)$
ends with an unallocated good, then
allocation $A$ can be \emph{updated}
along $p$,
to obtain allocation $A'$
such that
\(
    A' =  A \setminus
        \{(g_{r-1}, i_{r}) : r \in [s]\} \cup
        \{(i_r, g_r) : r \in [s]\}.
\)
Observe that every agent $i \in N$
weakly prefers its bundle in $A'$ compared to $A$.
Note also that for a given good $g$,
we can find an alternating path starting in $g$
and ending in an unallocated good
(or conclude that there is none)
using the BFS algorithm in time $O(nm)$.

Moreover, a preference graph can be used
to efficiently check
whether an allocation is PO.
Rewriting the result of \citet{aziz2019reallocation}
using a preference graph,
we get that the allocation is PO,
if and only if,
there is no alternating path, $p$, and agent, $i$,
such that the last good in the path, $g_s$,
belongs to agent $i$,
but $i$ strictly prefers
the first good in the path over $g_s$.

\begin{theorem}{\cite[Theorem 5]{aziz2019reallocation}}
    \label{thm:po:aziz}
    Given a weakly lexicographic
    goods-only instance $(N,M,\rhd)$,
    an (possibly partial) allocation $A$ is PO,
    if and only if,
    there is no alternating path,
    $p = (g_0, i_1, g_1, \dots, i_s, g_s)$ and agent $i_0$ such that
    $(g_s, i_0) \in A$ and
    $\psi(i_0, g_0) < \psi(g_s,i_0)$.
\end{theorem}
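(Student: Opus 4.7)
The plan is to establish the biconditional by handling each direction separately.

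\textbf{Sufficiency (path implies not PO).} Given the alternating path $p = (g_0, i_1, g_1, \dots, i_s, g_s)$ and closing agent $i_0$ with $(g_s, i_0) \in A$ and $\psi(i_0, g_0) < \psi(g_s, i_0)$, I would construct $A'$ by the cyclic reallocation in which each $i_r$ for $r \in [s]$ gives up $g_{r-1}$ and receives $g_r$, while $i_0$ gives up $g_s$ and receives $g_0$. By the alternating-path property $\psi(i_r, g_r) \le \psi(i_r, g_{r-1})$, each such single-item swap either leaves $i_r$'s score vector unchanged (when the two classes coincide) or strictly increases it at the earlier coordinate $\psi(i_r, g_r)$; either way $A'_{i_r} \succeq_{i_r} A_{i_r}$. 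Agent $i_0$ strictly improves because $g_0$ lies in a strictly better class than $g_s$ for $i_0$. Since no other bundle is affected, $A'$ Pareto dominates $A$.

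\textbf{Necessity (not PO implies path).} Conversely, suppose some $A'$ Pareto dominates $A$. I would extract the witness by tracing reassigned goods through the change of ownership. Pick any $i_0$ with $A'_{i_0} \succ_{i_0} A_{i_0}$ and let $k^*$ be the first class (in $i_0$'s ordering) at which $s_{i_0}(A'_{i_0})$ strictly exceeds $s_{i_0}(A_{i_0})$; some $g_0 \in \rhd_{i_0}(k^*) \cap (A'_{i_0} \setminus A_{i_0})$ must exist. Setting $i_1$ to be the $A$-owner of $g_0$, I would inductively extend: given $(g_{r-1}, i_r)$ with $i_r \ne i_0$, since $i_r$ loses $g_{r-1}$ in class $\psi(i_r, g_{r-1})$ yet still satisfies $A'_{i_r} \succeq_{i_r} A_{i_r}$, a first-differing-class analysis of $i_r$'s score vector forces the existence of $g_r \in A'_{i_r} \setminus A_{i_r}$ with $\psi(i_r, g_r) \le \psi(i_r, g_{r-1})$ (otherwise all of $i_r$'s gains lie in strictly worse classes than $\psi(i_r, g_{r-1})$, making the new score lexicographically worse). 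I terminate as soon as the $A$-owner of some $g_s$ is $i_0$.

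\textbf{Main obstacle.} The delicate step is to guarantee termination with the closing strict inequality $\psi(i_0, g_0) < \psi(i_0, g_s)$. Termination itself is straightforward: the goods $g_0, g_1, \dots$ are pairwise distinct (each $g_r \in A'_{i_r}$ and the $A'$-bundles are disjoint), so finiteness of $M$ forces the sequence to return to $i_0$. The strict class inequality at closure, however, can fail for an arbitrary starting choice because $g_s$ may land in a class $\le k^*$ for $i_0$. To address this, I would refine the extraction: apply the sufficiency direction to peel off any ``trivial'' cycles that leave $i_0$'s score vector unchanged, reducing the discrepancy between $A$ and $A'$ while maintaining $A' \succeq_i A_i$ for all agents. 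Since $A'$ strictly dominates $A$, this peeling cannot continue indefinitely without eventually surfacing a cycle closing at some $i_0$ via a good $g_s$ in a strictly worse class than $g_0$; otherwise every extracted cycle would be class-preserving and imply $A'_i \sim_i A_i$ for every $i$, contradicting the strict domination. Making this bookkeeping rigorous---mirroring the classical top-trading-cycles decomposition adapted to weakly lexicographic preferences---is the technical heart of the argument.
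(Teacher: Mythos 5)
Your overall architecture (easy direction by executing the cyclic exchange along the path; hard direction by extracting a trading cycle from a Pareto-dominating $A'$ and killing the all-indifferent case by a minimality/peeling argument) is the same as in the paper, which does not reprove the goods statement (it is imported from \citet{aziz2019reallocation}) but proves the chores analogue, \cref{thm:po:ebadian}, this way in \cref{app:chores:po}. The sufficiency half of your proposal is fine. The gap is in the necessity half, and it is more than bookkeeping. First, the chain need not return to $i_0$: it is only guaranteed to revisit \emph{some} agent (your distinctness argument for the $g_r$ presupposes the $i_r$ are distinct), so what you extract is a cycle that may exclude $i_0$. More seriously, your choice rule---at agent $i_r$, pick \emph{any} gained good in a class weakly better than the particular lost good $g_{r-1}$ you arrived by---does not make the extracted cycle usable. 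The agent at which the cycle closes loses a second good (the one it owns in $A$ that closes the cycle), and nothing relates the class of that good to the class of the gained good you selected for it; it can be strictly better. Such a cycle is not a witness, rotation cannot repair it (every non-closing agent must satisfy the weak comparison, and the offending agent cannot be the closer because the strict inequality goes the wrong way), and it is not covered by your peeling either, since some agent on it may strictly improve along the cycle, so undoing it need not preserve $A'_i \succeq_i A_i$. Concretely, take $A_1=\{p,z\}$, $A_2=\{q\}$, $A_3=\{h\}$ and $A'_1=\{h,q\}$, $A'_2=\{z\}$, $A'_3=\{p\}$ with preferences $1: \{z,h\} \rhd \{p,q\}$, $2: z \rhd q \rhd \{h,p\}$, $3: p \rhd \{h,z,q\}$; here $A'$ Pareto dominates $A$. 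Starting at $i_0=3$ with $g_0=p$ and choosing $g_1=q$ at agent $1$ (allowed by your rule, since $\psi(1,q)\le\psi(1,p)$) leads to the dead-end cycle on $\{1,2\}$ closing at agent $1$ via $z$, with $\psi(1,q)>\psi(1,z)$ and agent $2$ strictly improving on the cycle---yet a valid witness exists, e.g.\ the path $(p,1,h)$ closing at agent $3$.

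The missing idea---and it is exactly how the paper's proof of the chores analogue is organized---is to anchor each step at the agent's \emph{most important affected class} rather than at the good you happened to arrive by. For goods, let $y_i$ be the earliest class in which $i$ gains a good; Pareto domination forces every good that $i$ loses to lie in class $\ge y_i$ (otherwise the first changed class of $i$ would show a strict loss). If you only ever follow goods gained in class $y_i$ (the mirror of the paper's $x_i$, the most important class of given-away chores), then the weak comparison holds automatically at \emph{every} agent of \emph{any} cycle, including the closing one. The remaining all-ties case is then excluded either by fixing $A'$ with the minimum number of reallocated items up front (the paper's one-shot version) or by your iterative peeling, which becomes legitimate because an all-ties cycle leaves every score unchanged; the surviving cycle contains a strictly improving agent, and rotating it into the closing position yields precisely the path in the statement. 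With that repair your argument goes through and coincides with the paper's.
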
 

Next, we introduce a notion of \emph{potential envy}
in a partial allocation,
which we will use to guarantee \EFX{}.
Intuitively, potential envy towards an agent exists,
if giving all still available goods to this agent,
would result in an (actual) envy.

\begin{definition}(Potential envy)
\label{def:potential_envy}
Given an instance $(N,M,\rhd)$,
a partial allocation $A$,
and agents $i,j \in N$, agent $j$ \emph{potentially envies} agent $i$ if $A_i \cup B \succ_j A_j$, where $B$ denotes the set of all available goods.
\end{definition}

Then, a \emph{potential envy graph}
is a directed graph, $G_{envy}(U, A) = (U, E)$,
that given a subset of agents $U$ 
and a partial allocation $A$,
puts an edge $(i,j)$ from agent $i \in U$  to $j \in U \setminus \{i\}$,
if and only if, 
$i$ potentially envies $j$
(see \cref{fig:goods:ex:g_envy} for an illustration).

Finally, let us include a characterization of MMS allocations
that will be useful in proving that
our algorithm always outputs an MMS allocation.
This characterization is based on the algorithm for
establishing an MMS threshold for 
a given agent, $i$,
which was developed by \citet{ebadian2022fairly}.
The algorithm starts with
a family of $n$ empty bundles.
Then, in each step, 
it assigns an unallocated good from
the first indifference class of agent $i$ with still unallocated goods
to a bundle with lexicographically minimal score $s_i$.
After all goods are allocated,
the \MMS{} threshold is then the
lexicographically minimal score $s_i$
among the $n$ bundles.
Based on the algorithm,
we provide a numerical
formula for the threshold.

\begin{proposition}
\label{prop:goods:mms:threshold}
Given a weakly lexicographic goods-only instance $(N,M,\rhd)$
and agent $i \in N$, it holds that
$\MMS{}_i = (x_1,x_2,\dots,x_k)$,
where $x_l \!=\! \lfloor s_i(l)/r_l \rfloor$, for every $l \!\in\! [k]$, while
$r_1 \!=\! n$, and 
$r_{l+1} \!=\! r_l \!-\! (s_i(l) \!-\! r_l x_l)$,
for every $l \!\in\! [k\!-\!1]$.
\end{proposition}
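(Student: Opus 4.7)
The plan is to invoke the MMS-computation algorithm of \citet{ebadian2022fairly} recalled above and argue by induction on $l \in \{0,1,\dots,k\}$ that the formulas defining $x_l$ and $r_l$ correctly track the lex-minimum bundle score and the number of bundles attaining it at each stage of the algorithm. The target invariant is: after the algorithm has distributed all goods from indifference classes $1, \dots, l$, there are exactly $r_{l+1}$ bundles whose score equals $(x_1, \dots, x_l, 0, \dots, 0)$, this score is the lex-minimum among all $n$ bundles, and the remaining $n - r_{l+1}$ bundles have lex-strictly greater scores. Evaluating the invariant at $l = k$ then directly yields $\MMS{}_i = (x_1,\dots,x_k)$.

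The base case $l = 0$ is immediate: all $n = r_1$ bundles start empty with identical score $(0,\dots,0)$. For the inductive step, the hypothesis guarantees that when the algorithm begins processing class $l$, the $r_l$ currently lex-minimum bundles all share score $(x_1, \dots, x_{l-1}, 0, \dots, 0)$. Since goods from earlier classes are exhausted and the algorithm always assigns to a lex-minimum bundle, class-$l$ goods are distributed in a round-robin fashion among these $r_l$ bundles: once a bundle receives a class-$l$ good its $l$-th coordinate becomes strictly positive so it is no longer lex-minimum, until every other tied bundle has also received one. With $s_i(l)$ class-$l$ goods cycling through $r_l$ minimum bundles, each receives either $x_l = \lfloor s_i(l)/r_l \rfloor$ or $x_l + 1$ goods; exactly $s_i(l) - r_l x_l$ escape above the minimum with $x_l+1$ goods, while the remaining $r_l - (s_i(l) - r_l x_l) = r_{l+1}$ retain minimum status with updated score $(x_1,\dots,x_l,0,\dots,0)$. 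Bundles outside the original $r_l$ already had lex-strictly greater scores on the first $l-1$ coordinates and continue to do so, closing the induction.

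The main subtlety lies in justifying the round-robin behavior on class-$l$ goods, which hinges on two observations: ties at the lex-minimum can be broken only when a tied bundle receives a class-$l$ good, and no bundle can become lex-minimum retroactively since scores only grow lexicographically as goods are added. Once this structural fact is in place, matching $r_{l+1}$ with $r_l - (s_i(l) - r_l x_l)$ reduces to bookkeeping over the remainder of $s_i(l)$ modulo $r_l$.
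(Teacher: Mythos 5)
Your proposal is correct and follows exactly the route the paper intends: the paper states the formula as a direct consequence of the MMS-computation algorithm of \citet{ebadian2022fairly} without writing out a proof, and your induction (tracking that after class $l$ exactly $r_{l+1}$ bundles sit at the lex-minimum score $(x_1,\dots,x_l,0,\dots,0)$ while the rest are strictly above, so class-$l$ goods are spread round-robin over the $r_l$ tied bundles with $s_i(l)-r_l x_l$ of them escaping the minimum) is precisely the bookkeeping that derivation requires. Your justification of the round-robin behavior and of why previously non-minimal bundles never re-enter the minimum is sound, so the write-up supplies, correctly, the argument the paper leaves implicit.
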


\subsection{The Algorithm}

Now, let us move to the main result of this section
which is \cref{alg:goods:efx} that finds a
fair (\EF{1}, \MMS{}, or \EFX{}) allocation which satisfies \PO{}
in polynomial time for every weakly lexicographic goods-only instance.

Our algorithm allows for flexibility in regard to the fairness guarantees it obtains. Specifically, in each iteration of our algorithm, for each agent, our algorithm decides whether we want to stop giving goods to this agent or not, based on certain criteria, $F$.
The choice of criteria affects the fairness guarantee for the outcome.
In general, these criteria can be arbitrary, but we focus on three, which, with a slight abuse of terminology, we call based on the fairness notions they turn out to guarantee:
\EFX{}, \MMS{}, and the conjunction of both \EFX{}+\MMS{}.
We also analyze the baseline version of the algorithm which never differentiates agents
(i.e., $F=$ \textit{null}), 
and show that such algorithm guarantees \EF{1} and \PO{} allocations.

\begin{algorithm}[t]\small
	\caption{Finding a fair and efficient allocation of goods}
    \label{alg:goods:efx}
    \begin{algorithmic}[1]
        \REQUIRE A weakly lexicographic goods-only instance
        $\langle N, M, \rhd \rangle$, an ordering $\sigma$, and
        criteria $F$\\
        \STATE $A \leftarrow (\emptyset,\dots,\emptyset),
        \quad G_{pref} \leftarrow (N, M, E, \psi), \quad U \leftarrow N$

        \WHILE{there is an unallocated good}
            \STATE Take $i_0 \in U$ (or $i_0 \in N$, if $U = \emptyset$) s.t. $|A_{i_0}|$ 
            is minimal
            (break ties by $\sigma$)
            \STATE Find an alternating path $p=(g_0, i_1, g_1,\dots,i_s,g_s)$
                in $G_{pref}$
                such that $g_s$ is unallocated and
                $\psi(i_0,g_0)$ is smallest possible
            \STATE $A \leftarrow$ $A$ updated along $p$
            \STATE $A_{i_0} \leftarrow A_{i_0} \cup \{g_0\}$
            \STATE $U \leftarrow \{ i \in U : \textsc{CheckCriteria}(i, A, U, F) = \mbox{\textit{True}}\}$ (\cref{alg:checkcriteria:goods:efx})
        \ENDWHILE
    \RETURN $A$
    \end{algorithmic}
\end{algorithm}

\begin{algorithm}[t]\small
        \caption{Checking 
        fairness criteria}
    \label{alg:checkcriteria:goods:efx}
    \begin{algorithmic}[1]
        \REQUIRE An agent $i$, a partial alloc. $A$, a set $U$, and criteria $F$
        \STATE \textbf{if} $F=null$ \textbf{then} \textbf{return} \textit{True}
        \IF{$F=\EFX{}$ \textbf{or} $F=\EFX{}+\MMS{}$}
            \STATE $U_e \leftarrow$ a 
            strongly connected component 
            with no incoming edge in $G_{envy}(A, U)$
            (if there is more than one, take the one with the earliest agent according to $\sigma$)
            \STATE \textbf{if} $i \in U_e$ \textbf{then} \textbf{return} \textit{True}
        \ENDIF
        \IF{$F = \MMS{}$ \textbf{or} $F=\EFX{}+\MMS{}$}
            \STATE \textbf{if} $s_i(A_i) \le_{lex} \MMS{}_i$ \textbf{then} \textbf{return} \textit{True}
        \ENDIF  
        \RETURN \textit{False}
    \end{algorithmic}
\end{algorithm} 

\paragraph{Description.}
Fix an ordering of agents, $\sigma$. 

We initialize the algorithm with empty bundles. 
In $U$ we keep the set of prioritized agents,
for which we put $U = N$ at the start. 
In each iteration of the main loop (lines 2--8),
we take an agent $i_0$ from $U$
that has the least number of goods
(the earliest in ordering $\sigma$
if there is a tie).
In case $U=\emptyset$
(which is only possible if $F=\MMS{}$),
we take $i_0$ from $N$.
Next, agent $i_0$ chooses
the most preferred available good, $g_0$.
We update the allocation along alternating path
from $g_0$ to an unallocated good
(if $g_0$ is allocated), 
and give $g_0$ to $i_0$.

Then, we update set $U$ using
subroutine \textsc{CheckCriteria} (\cref{alg:checkcriteria:goods:efx})
to give the priority to agents
satisfying given criteria $F$.
If $F=\EFX{}$, 
in potential envy graph $G_{envy}(A,U)$ 
we look for a strongly connected component
without an incoming edge
and remove from $U$ agents outside it.
Intuitively, since there are no edges to
the strongly connected component,
no agent outside of $U$
will ever start envying an agent in $U$
at the later steps of the algorithm.
If $F=\MMS{}$,
we remove from $U$ agents that crossed their \MMS{} thresholds.
Finally, if $F=\EFX{}+\MMS{}$, we remove from $U$ agents satisfying both conditions simultaneously, and if $F=\textit{null}$, we never remove agents from $U$.
After $m$ iterations,
all goods are allocated and we output the final allocation.

Let us illustrate the most restrictive version of \cref{alg:goods:efx}, i.e., when $F=\EFX{}+\MMS{}$, with 
an example.
We provide an analogous example for 
the other criteria in \cref{app:goods:examples}.
For each variable used in the algorithm,
we will write it with superscript $t$,
to denote its value at the end of $t$-th
iteration of the main loop of our algorithm (lines 2--8). We use superscript $0$ to denote 
the initial values of variables. 

\begin{example}
    \label{ex:goods:efx}
    Consider a goods-only instance with three agents, four goods, and the following preferences (see \cref{fig:goods:ex:g_pref} for an illustration of the respective preference graph).
    \begin{align*}
        1 &: \quad \{g_1, \underline{g_2}\} \rhd \{g_3, g_4\} \\
        2 &: \quad \underline{g_1} \rhd \{g_2, g_3, g_4\} \\
        3 &: \quad g_1 \rhd \{g_2, \underline{g_3},
        \underline{g_4}\} 
    \end{align*} 
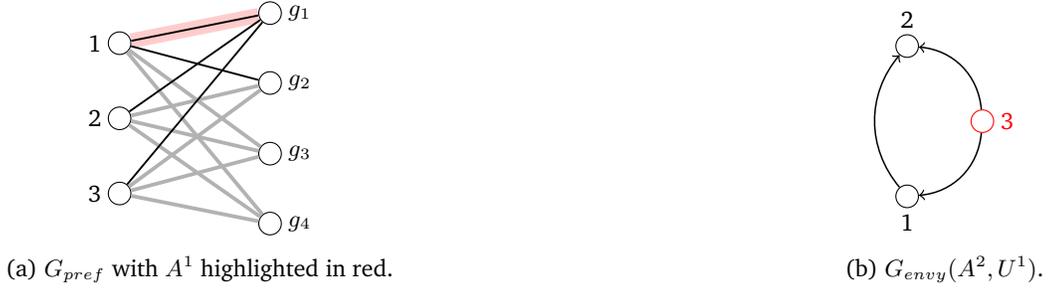
\begin{figure}[t]
    \centering
    \begin{subfigure}[t]{0.5\textwidth}
        \centering
        \begin{tikzpicture}
            \tikzset{
                vertex/.style={circle, draw=black, minimum size=0.3cm, inner sep=0},
                vertex_scc/.style={circle, draw=red, minimum size=0.3cm, inner sep=0},
                highlight/.style={draw = black!30, line width = 0.05cm,-},
                highlight_thick/.style={draw = black, line width = 0.025cm,-},
                highlight_alloc/.style={draw = red!20, line width = 0.2cm,-},
            }
            
            \node [vertex, label = {[label distance=-0.05cm]180:\footnotesize 1}] (1) at (0,2) {};
            \node [vertex, label = {[label distance=-0.05cm]180:\footnotesize 2}] (2) at (0,1) {};
            \node [vertex, label = {[label distance=-0.05cm]180:\footnotesize 3}] (3) at (0,0) {};

            \node [vertex, label = {[label distance=-0.05cm]0:\footnotesize $g_1$}] (g1) at (2,2.4) {};
            \node [vertex, label = {[label distance=-0.05cm]0:\footnotesize $g_2$}] (g2) at (2,1.47) {};
            \node [vertex, label = {[label distance=-0.05cm]0:\footnotesize $g_3$}] (g3) at (2,0.53) {};
            \node [vertex, label = {[label distance=-0.05cm]0:\footnotesize $g_4$}] (g4) at (2,-0.4) {};

            \path[draw] (1) edge[highlight_alloc](g1);
            
            \path[draw] (1) edge[highlight_thick] (g1);
            \path[draw] (1) edge[highlight_thick] (g2);
            \path[draw] (1) edge[highlight] (g3);
            \path[draw] (1) edge[highlight] (g4);
            
            \path[draw] (2) edge[highlight_thick] (g1);
            \path[draw] (2) edge[highlight] (g2);
            \path[draw] (2) edge[highlight] (g3);
            \path[draw] (2) edge[highlight] (g4);

            \path[draw] (3) edge[highlight_thick] (g1);
            \path[draw] (3) edge[highlight] (g2);
            \path[draw] (3) edge[highlight] (g3);
            \path[draw] (3) edge[highlight] (g4);

        \end{tikzpicture}
        \caption{$G_{pref}$ with $A^1$ highlighted in red.}
        \label{fig:goods:ex:g_pref}
     \end{subfigure}
     \hfill 
     \begin{subfigure}[t]{0.3\textwidth}
        \centering
        \begin{tikzpicture}
            \tikzset{
                vertex/.style={circle, draw=black, minimum size=0.3cm, inner sep=0},
                vertex_scc/.style={circle, draw=red, minimum size=0.3cm, inner sep=0},
                curve/.style={draw = black, bend right, line width = 0.02cm,->},
                curve_scc/.style={draw = red, bend right, line width = 0.02cm,->}
            }

            \node [vertex_scc, label = {[label distance=-0.05cm]0:\footnotesize \textcolor{red}{3}}] (3) at (2,1) {};
            \node [vertex, label = {[label distance=-0.05cm]90:\footnotesize 2}] (2) at (1,2) {};
            \node [vertex, label = {[label distance=-0.05cm]270:\footnotesize 1}] (1) at (1,0) {};

            \path[draw] (1) edge[curve][out=40,in=140] (2);
            \path[draw] (3) edge[curve][out=40,in=140] (1);
            \path[draw] (3) edge[curve][out=-40,in=-140] (2);
        \end{tikzpicture}      
        \caption{$G_{envy}(A^2,U^1)$.}
        \label{fig:goods:ex:g_envy}
    \end{subfigure}
    \caption{The preference graph $G_{pref}$ and potential envy graph $G_{envy}(A^2, U^1)$ for the instance from \cref{ex:goods:efx}. Thin black edges in the graph $G_{pref}$ represent agents' first indifference classes, i.e., these $e \in E$, for which $\psi(e)=1$, and gray thick edges the second ones.
    In (b), a strongly connected component without an incoming edge, $U_e^2$, is given in red,
    which is a single node. 
    }
    \label{fig:goods:ex}
\end{figure}
Let $\sigma = (1,2,3)$. 
In iteration $1$, agent $1$ 
receives $g_1$ (see the red highlight in \cref{fig:goods:ex:g_pref}). 
Observe that good $g_1$ is still available since 
there exists an alternating path $(g_1, 1, g_2)$ 
ending in an unallocated good.
Thus, in iteration $2$, agent $1$ exchanges $g_1$ 
for $g_2$ and agent $2$ receives $g_1$.
Then, in the potential envy graph 
at the end of iteration $2$, i.e., $G_{envy}(A^2, U^1)$,
agent $3$ is the only agent that forms
a strongly connected component 
with no incoming edge, $U_e^2$
(see \cref{fig:goods:ex:g_envy}). 
Moreover, since for each of agents $1$ and $2$,
its bundle is above the respective 
\MMS{} threshold, we remove them from $U$, 
thus $U^2 = \{3\}$. 
Therefore, agent $3$ receives goods $g_3$ and $g_4$.  
The final allocation is underlined.

We note that the algorithm of \citep{aziz2023possible} returns the allocation
$(\{g_2,g_4\},\{g_1\},\{g_3\})$ in the above example.
Observe that in this allocation,
agent $3$ is envious of agent $1$,
which is not the case in the
allocation returned by \cref{alg:goods:efx}.
See \cref{app:comparison} for a detailed discussion.
\end{example}

Let us now sketch the proof of correctness of our algorithm
(the full proof can be found in \cref{app:goods:efx}).
\begin{restatable}{theorem}
{thmgoodsefx}\label{thm:goods:efx}
    Given a weakly lexicographic goods-only instance $(N,M,\rhd)$, \cref{alg:goods:efx} always returns 
    \begin{itemize}
        \item[(A)] if criteria $F=\textit{null}$,  
        an \EF{1} and \PO{} allocation;
        \item[(B)] if criteria $F=\EFX{}$, 
        an \EFX{} and \PO{} allocation;
        \item[(C)] if criteria $F=\MMS{}$, 
        an \MMS{} and \PO{} allocation; and
        \item[(D)] if criteria $F=\EFX{}+\MMS{}$,
        an \EFX{}, \MMS{} and \PO{} allocation.
    \end{itemize}
\end{restatable}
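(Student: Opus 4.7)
The plan is to split the argument into two phases: first, a uniform proof that every output of \cref{alg:goods:efx} is \PO{} regardless of $F$; and second, a case analysis of the fairness guarantee corresponding to each criterion.

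For \PO{}, I would induct on iterations $t$, maintaining that the partial allocation $A^t$ is \PO{} (with respect to the goods allocated so far) via the characterization of \cref{thm:po:aziz}. The base case is trivial. For the inductive step, the update along the alternating path $p$ only weakly improves each agent's bundle, and then $i_0^t$ receives a brand new good $g_0^t$. If $A^t$ admitted a \PO{}-violating alternating path $p'$ in the sense of \cref{thm:po:aziz}, I would show it either witnesses such a violation already in $A^{t-1}$ (contradicting the inductive hypothesis) or extends to an alternating path from some good strictly preferred by $i_0^t$ to a good at the end of an unallocated chain, contradicting the algorithm's minimum-$\psi(i_0^t, g_0^t)$ selection rule.

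For case (A) with $F=\textit{null}$, the invariant $U^t = N$ forces bundle sizes to remain balanced, $||A_i^t| - |A_j^t|| \le 1$ at every iteration. Given a final envy $A_i \succ_j A_j$, I would pick the latest iteration $t^*$ at which $i = i_0^{t^*}$ received its new good $g^* = g_0^{t^*}$, and show $A_j \succeq_j A_i \setminus \{g^*\}$ by comparing score vectors: just before $t^*$, agent $j$ had a bundle no larger than $i$'s, and in all later iterations $j$ was eligible to draw from an indifference class at least as preferred (to $j$) as the class from which $i$ drew, so $j$'s final score vector lexicographically dominates $s_j(A_i \setminus \{g^*\})$. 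For cases (B)-(D), I would inductively preserve the joint invariant that the partial allocation is \EFX{} (in B, D) and that $s_i(A_i^t) \ge_{lex} \MMS{}_i$ for every agent $i$ removed from $U$ (in C, D). The \EFX{} invariant leans on the strongly connected component $U_e$ having no incoming edge in $G_{envy}(A, U)$: no agent outside $U_e$ potentially envies any agent in $U_e$, so as the prioritized agents accumulate goods, no external actual envy (let alone \EFX{}-envy) can emerge, while agents inside $U_e$ grow together in a balanced, round-robin manner that preserves \EFX{} among themselves. The \MMS{} invariant follows from \cref{prop:goods:mms:threshold}: as long as $i \in U^t$, the minimum-$\psi$ rule hands $i$ a good from its currently best available class, exactly mirroring Ebadian et al.'s threshold construction, so $i$'s bundle meets $\MMS{}_i$ by the time it is removed from $U$.

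The main obstacle, I expect, is the \EFX{} invariant: alternating-path updates can reshuffle goods inside the bundles of agents already removed from $U$, and the weakly lexicographic structure demands that these swaps never spawn a fresh \EFX{}-envy. Handling this requires combining the weak-improvement guarantee of alternating paths (each agent's score vector lexicographically weakly improves) with the fact that, at the iteration when each agent was removed from $U$, its exclusion from the chosen $U_e$ was certified by the absence of an incoming edge in the potential envy graph; an induction on the order of removals from $U$, using the monotone structure of $G_{envy}$ over time, should close the argument.
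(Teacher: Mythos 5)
Your overall architecture (a common inductive \PO{} argument via \cref{thm:po:aziz}, plus a per-criterion fairness analysis resting on bundle-size balancedness and the no-incoming-edge SCC) matches the paper's, and the \PO{} and \EF{1} sketches are in the right spirit. However, the two places where the real work lies are missing or asserted. For \EFX{}, the difficulty is not (mainly) the one you flag: alternating-path exchanges never change any agent's score vector (swapped goods lie in the same indifference class, by the analogue of \cref{claim_body:goods:efx:tracking}), and an agent removed from $U$ cannot later be potentially envied by anyone, by monotonicity of availability. The genuinely hard case is envy \emph{between two agents that are both still prioritized and hold bundles of equal size}, which can occur under ties. Your claim that ``agents inside $U_e$ grow together in a balanced, round-robin manner that preserves \EFX{} among themselves'' is exactly what must be proved and is not self-evident: one needs the paper's key lemma (\cref{claim_body:goods:efx:equal}) that if $j$ envies $i$ with $|A_i^t|\le|A_j^t|$ then there is \emph{no} potential-envy path from $i$ back to $j$, so $i$ cannot sit in any SCC without incoming edges and is therefore expelled from $U$; proving this requires showing all of $i$'s goods are unavailable and an induction along the envy path (\cref{claim_body:goods:efx:unavailable}). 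Without this lemma the soundness of the SCC-based removal rule for equal-size envy, and the subsequent argument that the single extra good $i$ receives cannot create an \EFX{} violation, do not follow. For case (D) there is a further wrinkle you do not address: with $F=\EFX{}+\MMS{}$ an agent stays in $U$ if it is in $U_e$ \emph{or} below its \MMS{} threshold, so ``$i\in U$'' no longer certifies absence of incoming potential envy, and the paper must combine \cref{claim_body:goods:efx:mms}(b) with \cref{claim_body:goods:efx:equal} to rule this out.

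For \MMS{} you have the direction of difficulty backwards. With $F=\MMS{}$ (or $\EFX{}+\MMS{}$), \cref{alg:checkcriteria:goods:efx} removes an agent only when its bundle already strictly exceeds its threshold, so ``$i$ meets $\MMS{}_i$ by the time it is removed from $U$'' is true by construction and proves nothing. The hard statement is that agents that are \emph{never} removed (or remain in $U$ when it empties) still end up at or above their thresholds; the picking rule ``take your best available class'' does not mirror the construction behind \cref{prop:goods:mms:threshold}, because the other agents deplete $i$'s top classes according to \emph{their own} preferences. The paper handles this with a global counting induction (\cref{claim_body:goods:efx:mms}): whenever $\rhd_i(1\colon k)$ becomes unavailable, $s_i(1\colon k,A_i^t)\ge_{lex}\MMS{}_i(1\colon k)$, proved by exhibiting a set $L$ of $n-r_{k+1}$ de-prioritized agents holding a precisely quantified number of goods and deriving a contradiction with the total supply $\sum_{u\le k}s_i(u)$ otherwise. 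Your proposal contains no substitute for this counting argument, so as written the \MMS{} cases (C) and (D) are unproven.
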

\begin{proof}[Proof (Sketch)]

We begin with the series of observations that hold for all considered by us criteria $F$ (\cref{claim_body:goods:efx:tracking,claim_body:goods:efx:basic,claim_body:goods:efx:unavailable}) or only those with either \EFX{} or \MMS{}
(\cref{claim_body:goods:efx:nonempty,claim_body:goods:efx:equal,claim_body:goods:efx:mms}).
Using them, we will prove the fairness guarantees in  each of the statements (A)--(D).
Finally, \PO{} can be easily proved using \cref{thm:po:aziz}.
In this sketch we will mainly focus on showing statement (D).

We first prove several basic observations, which we group in the following two claims.
Let $X^t$ be a set of unavailable goods at the end of iteration $t$.

\begin{claim}
    \label{claim_body:goods:efx:tracking}
    For every agent $i \in N$, its indifference classes $k,k' \in [k_i]$, and iterations $t,t' \in [m]$, such that $k' < k$ and $t' < t$, it holds that
    \begin{itemize}
        \item[(a)] an unavailable good stays  unavailable,
            i.e., $X^{t'} \subseteq X^{t}$,
        \item[(b)] agents weakly prefer their goods over the available ones, i.e.,
        if $s_i(k,A^t_i) > 0$, then $\rhd_i(k') \subseteq X^{t-1}$,
        \item[(c)] score of the bundle of an agent is (weakly) monotonic, i.e.,
        $s_i(k,A^t_i) \ge s_i(k,A^{t'}_i)$
        \item[(d)] if an agent has an unavailable good, then the whole indifference class with this good is unavailable, i.e., if $\rhd_i(k, A^t_i) \cap X^t \neq \emptyset$, then $\rhd_i(k) \subseteq X^t$.
    \end{itemize}
\end{claim}

\begin{claim}
\label{claim_body:goods:efx:basic}
For arbitrary iterations $t', t \in [m]$
and agents $i,j \in N$ such that
$t' < t$,
it holds that
    \begin{itemize}
        \item[(a)] no agent becomes prioritized again, i.e., $U^{t} \subseteq U^{t'}$,
        \item[(b)] the sizes of bundles of agents in $U$ are as equal as possible, i.e., $0 \le |A^t_i| - |A^t_j| \le 1$, if $i, j \in U^t$ and $\sigma(i) < \sigma(j)$, and
        \item[(c)] an agent cannot start envying an agent that did not receive more goods, i.e., if $A^{t'}_j \succeq_j A^{t'}_i$ and $|A^{t'}_i|=|A^t_i|$, then $A^{t}_j \succeq_j A^{t}_i$.
        \item[(d)] an agent never envies agents that are later in the ordering, i.e., if $\sigma(i) < \sigma(j)$, then $A^t_i \succeq_i A^t_j$.
    \end{itemize}
\end{claim}

Next, we prove a quite technical result
that if an agent has only unavailable goods
and potentially envies other agent,
then the envied agent has at least the
same amount of unavailable goods.

\begin{claim}\vspace{-.5em}
    \label{claim_body:goods:efx:unavailable}
    For every iteration $t \in [m]$
    and agents $i,j \in N$ such that
    $A^t_i \subseteq X^t$,
    if $i$ potentially envies $j$,
    but not actually envies,
    then $|A^t_i| \le |A^t_j \cap X^t|$.
\end{claim}

Next, we focus on the case where \EFX{} is a part of criteria $F$ and show that in such a case, the set $U$ is never empty.

\begin{claim}
    \label{claim_body:goods:efx:nonempty}
    If criteria $F = \EFX{}$ or $F=\EFX{}+\MMS{}$, then for every $t \in [m]$
    it holds that $U^t \neq \emptyset$.
\end{claim}

Using this and \cref{claim_body:goods:efx:unavailable},
we show one of the key observation of the proof that allows us to guarantee \EFX{}:
if an agent, $j$, envies another agent, $i$,
with at most the same number of goods,
then there is no potential envy path from $i$ to $j$.
In consequence, in such cases, $i$ can be eliminated from $U$.

\begin{claim}
    \label{claim_body:goods:efx:equal}
    If criteria $F = \EFX{}$ or $F=\EFX{}+\MMS{}$,
    then for every agents $i,j \in N$,
    and iteration $t \in [m]$
    such that $j$ envies $i$ and $|A^t_i| \le |A^t_j|$,
    there is no path
    $p = (i_1, \dots, i_r)$ in
    $G_{envy}(A^{t-1},N)$ such that 
    $i = i_1$ and $j = i_r$.
\end{claim}

Next, we inductively show the key result 
for the proof of \MMS{}.
For this claim, for an agent $i$, let us denote set
$\rhd_i(1 \colon k)=\rhd_i(1)\cup \dots \cup \rhd_i(k)$
and vectors $s_i(1 \colon k,A^t_i)=(s_i(1,A^t_i),\dots,s_i(k,A^t_i))$
and 
$\MMS{}_i(1 \colon k)=(\MMS{}_i(1),\dots,\MMS{}_i(k))$.

\begin{claim}
    \label{claim_body:goods:efx:mms}
    If fairness criterion $F=\MMS{}$ or $F=\EFX{}+\MMS{}$, then for every iteration $t \in [m]$, agent $i \in N$, and $k \in \mathbb{N}$ such that $\rhd_i(1 \colon k) \subseteq X^t$, it holds that:
\begin{itemize}
        \item[(a)] $s_i(1 \colon k, A^t_i) \ge_{lex} \MMS{}_i(1 \colon k)$,
        \item[(b)] if $s_i(k', A^t_i) = 0$ for every $k' > k$ and
        there is an agent $j \in U^t$ such that
        $|A^t_j| < |A^t_i|$,
        then $A^t_i >_{lex} \MMS{}_i$,
        \item[(c)] if $s_i(1 \colon k, A^t_i) = \MMS{}_i(1 \colon k)$,
        then
        there is a subset of agents $L \subseteq N \setminus U^t$ such that $|L|=n - r_{k+1}$ and
        \[
        \sum_{j \in L} |A^t_j| =
            \sum_{u = 1}^k s_i(u) - r_{k+1} \cdot \sum_{u = 1}^k \MMS{}_i(u),
        \]
        where, as in \cref{prop:goods:mms:threshold},
        $r_1 = n$ and $r_{u+1} = s_i(u) - \MMS{}_i(u)\cdot r_{u}$,
        for every $u \in [k_i - 1]$.
    \end{itemize}
\end{claim}

Observe that point (a) of \cref{claim_body:goods:efx:mms} directly implies \MMS{} of the output allocation
as at the end of the algorithm all goods are unavailable.
Thus, let us show that if $F=\EFX{}+\MMS{}$, then the output allocation satisfies \EFX{} as well.
Assume otherwise, and take first $t \in [m]$ such that there exist $i,j \in N$ and $g \in A_i$ for which
$A_i \setminus \{g\} \succ_j A_j$.

From \cref{claim_body:goods:efx:basic}d
we know that $i$ is before $j$ in
ordering $\sigma$.
Let us take the last iteration $t'<t$
in which $i$ had smaller number of goods,
i.e., $|A_i^{t'}|<|A_i^t|$.
By \cref{claim_body:goods:efx:nonempty}, this means that $i \in U^{t'}$
and from \cref{claim_body:goods:efx:tracking}c
we get that $|A_i^{t'}|=|A_j^{t'}|$.
Based on this, using \cref{claim_body:goods:efx:equal},
we deduce that $j$ does not envy $i$
at the end of iteration $t'$
(otherwise it could not be that $i \in U^t$).
Since between iteration $t'$ and $t$
agent $i$ receives just one more good,
we can show that this implies
that $i$ and $j$
do not violate \EFX{}
in iteration $t$---a contradiction.
\end{proof}

\section{Chores}

In many aspects, the situation for chores is more complex than that of goods.
For example,
we can decide if there exists an \EF{} allocation for goods-only strict lexicographic instances in polynomial time,
but the same problem is NP-complete for chores~\citep{hosseini2022fairly}. 
For additive preferences, even the existence of an \EF{1} and \PO{} allocation remains an open problem for the chores case.

In this section, we will show that
\cref{alg:goods:efx}
can be adapted so that
we obtain an \EF{1} and \PO{} allocation
for every weakly 
lexicographic chores-only instance.
However, let us first discuss the
differences and challenges of finding \EFX{}
allocations for chores in comparison with goods. 
We believe that this sheds some light
on the nature of the general problem of dividing the chores.

\subsection{EFX implies MMS}
\label{subsec:chores:efxtomms}
Let us start by noting that \EFX{}
implies \MMS{} for weakly lexicographic instances, which is not the case for goods (see \cref{app:efx:mms:relation}).
To this end, we first provide a compact formulation of an agent's \MMS{} threshold analogous to the ones for goods in \cref{prop:goods:mms:threshold}.
 
\begin{proposition}
\label{prop:chores:mms:threshold}
    For every weakly lexicographic chores-only instance $(N,M,\rhd)$
    and agent $i \in N$,
    it holds that
    \[
        \MMS{}_i = ( s_i(1)/n, s_i(2)/n, \dots, \lfloor s_i(k)/ n \rfloor, 0, \dots, 0),
    \]
    where $k$ is the number of the first indifference class of $i$ such that $n$ does not divide $s_i(k)$.
    Note that since $s_i(k)$ is negative,
    the absolute value of $\lfloor s_i(k)/ n \rfloor$
    is the number of chores in the $k$-th indifference class of $i$
    divided by $n$ and \emph{rounded up}.
\end{proposition}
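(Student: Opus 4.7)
My plan is to exhibit an $n$-partition that achieves the claimed score and then prove a matching upper bound showing that no $n$-partition has a lex-greater minimum bundle. Write $c_l := |\rhd_i(l)|$ so that $s_i(l) = -c_l$, and recall that by definition of $k$, the integer $n$ divides $c_l$ for every $l < k$.

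For achievability, I would construct the witness partition as follows: for every $l < k$, split the $c_l$ chores of class $l$ into $n$ equal groups of size $c_l/n$ and assign one group to each bundle; write $c_k = qn+r$ with $0 < r < n$, designate $r$ \emph{heavy} bundles each receiving $q+1 = \lceil c_k/n \rceil$ chores from class $k$, and give each of the remaining $n-r$ \emph{light} bundles $q = \lfloor c_k/n \rfloor$ chores from class $k$; finally, for every $l > k$, dump all chores of class $l$ on one fixed light bundle. A direct verification shows that every heavy bundle has exactly the claimed score $(s_i(1)/n,\dots,s_i(k-1)/n,\lfloor s_i(k)/n \rfloor, 0, \dots, 0)$, while every light bundle agrees with the claim through coordinate $k-1$ and strictly exceeds it at coordinate $k$ (since $-q > -(q+1) = \lfloor s_i(k)/n \rfloor$). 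Thus the lex-minimum bundle of this partition equals the claimed $\MMS{}_i$ vector.

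For the upper bound, I would take any $n$-partition $(B_1,\dots,B_n)$ and show by induction on $l = 1, \dots, k$ that at least one bundle has score lex-at-most the claim. In the inductive step for $l < k$, assume all bundles match the claim on coordinates $1,\dots,l-1$. If some bundle has $s_i(l,B_j) < -c_l/n$, we are done immediately; otherwise all bundles satisfy $s_i(l,B_j) \ge -c_l/n$, and since $\sum_j s_i(l,B_j) = -c_l$, this forces equality everywhere, so all bundles have coordinate $l$ exactly $-c_l/n = s_i(l)/n$, closing the step. At $l = k$, summing gives $\sum_j s_i(k,B_j) = -(qn+r)$, so by pigeonhole some bundle has coordinate $k$ at most $-(q+1) = \lfloor s_i(k)/n \rfloor$; combined with the matched earlier coordinates, this bundle is lex-at-most the claim through coordinate $k$. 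If strict at $k$ we are done; if equal, then to strictly beat the claim lex this bundle would need a strictly positive coordinate at some $l > k$, which is impossible for chore scores.

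The main subtlety is the mild edge case in which $n$ divides $s_i(l)$ for every $l$, so $k$ is undefined; there the inductive argument runs through every indifference class without ever needing the terminal pigeonhole step, and the formula degenerates to the equal-split vector $(s_i(1)/n,\dots,s_i(k_i)/n)$, which is trivially both achievable and optimal. Beyond this, the argument is a standard blend of pigeonhole and lexicographic comparison, dual to the goods-side proof underlying \cref{prop:goods:mms:threshold}, with the notable structural twist that leftover chores from lower-priority classes should be \emph{concentrated} on already-light bundles rather than spread out, since this does not worsen the lex-minimum.
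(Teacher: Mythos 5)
Your proof is correct. Both directions check out: the witness partition (equal split of every class before $k$, then $r$ heavy bundles with $\lceil c_k/n\rceil$ chores and $n-r$ light bundles with $\lfloor c_k/n\rfloor$ chores from class $k$, all later classes dumped on a single \emph{light} bundle) has its lex-minimum exactly at the heavy bundles, whose score is the claimed vector; and the upper bound via induction-plus-pigeonhole is sound, since a bundle that falls below $s_i(l)/n$ at some coordinate $l\le k$ while matching earlier coordinates is already lex-below the claim irrespective of its later coordinates, and a bundle matching the claim through coordinate $k$ cannot lex-exceed it because chore scores are nonpositive beyond coordinate $k$. You also correctly identify the one structural subtlety for chores, namely that the low-priority leftovers must be concentrated on a bundle that is already strictly above the minimum at coordinate $k$.

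The route differs from the paper's in emphasis rather than substance: the paper does not give a standalone proof of \cref{prop:chores:mms:threshold} but derives it as the chores analogue of \cref{prop:goods:mms:threshold}, i.e., by reading off the threshold produced by the greedy bundle-filling procedure of Ebadian et al.\ (repeatedly placing a chore from the currently relevant class into the lexicographically best-off bundle) and arguing that this procedure attains the maximin value. Your argument is self-contained: you exhibit an explicit optimal partition and prove optimality directly by a counting/pigeonhole bound, which avoids invoking correctness of that algorithm and makes the "round up in class $k$, then zeros" shape of the vector transparent. The only cosmetic caveat is the degenerate case where $n$ divides every $s_i(l)$, so the class $k$ in the statement does not exist; you handle it explicitly (the threshold is the exact equal-split vector), whereas the proposition's wording tacitly assumes such a $k$ exists, so flagging it is fine but not a gap in your argument.
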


Using it let us show that \EFX{} implies \MMS{}.

\begin{proposition}
    \label{prop:chores:efx:implies:mms}
    Given a weakly lexicographic chores-only instance $(N,M,\rhd)$, every \EFX{} allocation satisfies \MMS{}. 
\end{proposition}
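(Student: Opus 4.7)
The plan is to argue by contradiction: suppose $A$ is EFX yet some agent $i \in N$ satisfies $s_i(A_i) <_{lex} \MMS_i$, and let $l$ be the smallest index with $s_i(l, A_i) < \MMS_i(l)$. Since $\MMS_i(l) \le 0$, agent $i$ holds at least one chore from $\rhd_i(l)$, so fix any $c \in A_i \cap \rhd_i(l)$. Because $c \in A_i$, the EFX condition gives
\[
    A_i \setminus \{c\} \succeq_i A_j, \quad \text{equivalently, } s_i(A_i \setminus \{c\}) \ge_{lex} s_i(A_j), \qquad \text{for every } j \in N \setminus \{i\}.
\]
This family of inequalities, together with the partition identity $\sum_{j \in N} s_i(k, A_j) = s_i(k)$, is the only ingredient used throughout.

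Let $k^{\MMS}$ denote the first class for which $n \nmid s_i(k)$, as in \cref{prop:chores:mms:threshold}. The core is a coordinate-wise induction on $k$ running from $1$ up to $\min(l, k^{\MMS}) - 1$. Inductively assume that at every earlier coordinate the lex comparison between $A_i \setminus \{c\}$ and every $A_j$ is already tied, i.e., $s_i(k', A_j) = s_i(k')/n$ for all $j$ and all $k' < k$. Because $c \notin \rhd_i(k)$ (since $k < l$), one has $s_i(k, A_i \setminus \{c\}) = s_i(k, A_i) = \MMS_i(k) = s_i(k)/n$, so the lex dominance forces $s_i(k, A_j) \le s_i(k)/n$ for every $j \ne i$. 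The partition identity immediately upgrades this weak inequality into an equality for every $j$, reestablishing the tie at coordinate $k$ and advancing the induction.

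The contradiction appears at the critical coordinate $\min(l, k^{\MMS})$. If $l \le k^{\MMS}$, the removal of $c$ shifts coordinate $l$ upward by one, so EFX gives $|A_j \cap \rhd_i(l)| \ge |A_i \cap \rhd_i(l)| - 1$ for every $j \ne i$; summing via the partition identity bounds $|A_i \cap \rhd_i(l)| \le \lceil |\rhd_i(l)|/n \rceil$, contradicting the hypothesis $s_i(l, A_i) < \MMS_i(l)$, which forces $|A_i \cap \rhd_i(l)| \ge \lceil |\rhd_i(l)|/n \rceil + 1$. If instead $l > k^{\MMS}$, the same reasoning applied at coordinate $k^{\MMS}$ (where $c$ is absent) forces $|A_j \cap \rhd_i(k^{\MMS})| \ge \lceil |\rhd_i(k^{\MMS})|/n \rceil$ for every $j$, whose sum strictly exceeds $|\rhd_i(k^{\MMS})|$ precisely because $n \nmid |\rhd_i(k^{\MMS})|$.

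The main subtlety is the choice of $c$: removing a chore from $\rhd_i(l)$ rather than from the most natural, earliest nonempty class of $A_i$ is what keeps every coordinate below $l$ unchanged, so the tie-driven induction runs cleanly and the final inequality is sharp. Both sub-cases of the contradiction ultimately hinge on the rounding-up built into $\MMS_i$, as reflected in the defining property $n \nmid s_i(k^{\MMS})$.
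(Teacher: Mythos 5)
Your proof is correct, but it takes a genuinely different route from the paper's. The paper argues in two quick steps: a pigeonhole/counting argument produces a single agent $j$ with $s_i(A_j) >_{lex} \MMS{}_i$, and then removing one chore $c$ from a class where $i$ is below its \MMS{} threshold is claimed to give $s_i(A_i \setminus \{c\}) \le_{lex} \MMS{}_i <_{lex} s_i(A_j)$, contradicting \EFX{}. You instead never exhibit such a ``rich'' agent: you run a coordinate-by-coordinate induction, using the \EFX{} inequalities $s_i(A_i\setminus\{c\}) \ge_{lex} s_i(A_j)$ against \emph{all} agents simultaneously together with the per-class partition identity $\sum_j |A_j \cap \rhd_i(k)| = |\rhd_i(k)|$, to force every class before the critical one to be split exactly as $s_i(k)/n$, and then derive the contradiction from an integer counting bound inside the single critical class (split by the two cases $l \le k^{\MMS{}}$ and $l > k^{\MMS{}}$). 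The paper's argument is shorter; yours is longer but arguably more robust, since it sidesteps the delicate intermediate claim $s_i(A_i\setminus\{c\}) \le_{lex} \MMS{}_i$ (which needs care when removing $c$ restores equality at the deficient coordinate and the lex comparison passes to later coordinates), and it yields the extra structural fact that under \EFX{} every indifference class preceding the critical one must be divided perfectly evenly. Your key choice of $c$ from $\rhd_i(l)$ (rather than from $i$'s earliest nonempty class) is exactly what keeps the coordinates below $l$ untouched so the tie-propagation works; the only cosmetic gap is the implicit assumption that $k^{\MMS{}}$ exists, but if $n$ divides every $s_i(k)$ your first case applies verbatim, so nothing is lost.
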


\begin{proof}
Assume for a contradiction that there is an allocation $A$ that is \EFX{} but not \MMS{}, i.e., there is agent $i \in N$ such that its $s_i(A_i) <_{lex} \MMS{}_i$. 

From \cref{prop:chores:mms:threshold}
we know that
\[
    \MMS{}_i = ( s_i(1)/n, s_i(2)/n, \dots, \lfloor s_i(k)/ n \rfloor, 0, \dots, 0),
\]
where $k$ is the number of the first indifference class of $i$ such that $n$ does not divide $s_i(k)$.

Given that for agent $i$, $s_i(A_i) <_{lex} \MMS{}_i$, it is not possible to have that $s_i(A_j) \le_{lex} \MMS{}_i$,
for every agent $j \in N \setminus \{i\}$
(otherwise, from one of the first $k$ indifference classes of agent $i$,
there would be more chores assigned to agents than there are in total).
Hence, there exists $j \in N$ such that $s_i(A_j) >_{lex} \MMS{}_i$.
Let $\bar{k}$ be an indifference class
such that $s_i(\bar{k},A_i) < \MMS{}_i(\bar{k})$,
and $c \in \rhd_i(\bar{k}, A_i)$ be 
an arbitrary chore assigned to $i$ from this class.
Observe that $s_i(A_i \setminus \{c\}) \le_{lex} \MMS{}_i <_{lex} s_i(A_j)$.
Hence, agent $i$ envies agent $j$ even after the removal of chore $c$ from its bundle.
This contradicts \EFX{}.
\end{proof}

Despite the fact that an \EFX{} allocation 
implies \MMS{}, we note that there is no implication from 
\MMS{} (with or without \PO{}) to \EFX{} 
for chores-only instances even when each agent has at most two indifference classes
(see \cref{ex:chores:mmsponotefx} in \cref{app:efx:mms:relation}).

\subsection{EFX and PO Challenges}

Given a chores-only instance, a
straightforward approach to obtain an \EFX{} and \PO{} 
allocation would be to directly copy the idea behind 
\cref{alg:goods:efx}: assign to agents their most preferred 
available items (using the alternating paths) in rounds,
removing certain agents from the set of
prioritized agents $U$ on the way,
to guarantee \EFX{}.
In the goods-only case,
the agents start picking the items from
their first indifference classes,
which works well,
as the same items are key to establishing envy relations.
However, in the chores case,
the order is reversed:
the agents start picking the items from the last indifference classes,
but still the envy relations depend predominantly
on the first indifference classes.
Hence, an initial fair allocation of chores in the last indifference classes,
may turn out to be impossible to extend to an \EFX{} allocation.

\begin{example} 
    \label{ex:chores:not_easy1}
    Consider a chores-only instance with two agents, three 
    chores and preferences as follows: 

    \begin{align*}
        1 &: \quad c_1 \rhd \{c_2, \underline{c_3}\} \\
        2 &: \quad c_1 \rhd \{\underline{c_2}, c_3\}
    \end{align*}
    If we allow agents to choose their chores,
    agent 1 can pick chore $c_3$ in iteration $1$
    and agent 2 chore $c_2$ in iteration $2$.
    Observe that the resulting partial allocation is \EFX{} (indeed, \EF{}).
    However, if we assign $c_1$ to either of the agents,
    then the recipient would envy the other agent,
    and the envy will not be eliminated by the removal of the other chore.
    Hence, this allocation cannot be extended to an \EFX{} allocation.
\end{example}

Given the above observation,
another strategy would be to start assigning chores to the agents
from their first indifference classes (worst chores).
On the other hand, for the sake of \PO{},
we need to allow agents to exchange the chores among themselves if it is beneficial for all sides.
However, such exchanges can lead to a situation similar to that in \cref{ex:chores:not_easy1}.

\begin{example} 
    \label{ex:chores:not_easy2}
    Consider a chores-only instance with four agents, five 
    chores and preferences as follows:
    \begin{align*}
        1 &: \quad \{c_1, c_2\} \rhd c_3 \rhd \{\underline{c_4}, c_5\} \\
        2 &: \quad \{c_1, c_2\} \rhd c_3 \rhd \{c_4, \underline{c_5}\} \\
        3 &: \quad \{c_4, c_5\} \rhd c_3 \rhd \{\underline{c_1}, c_2\} \\
        4 &: \quad \{c_4, c_5\} \rhd c_3 \rhd \{c_1, \underline{c_2}\}
    \end{align*} 
    Assume we start allocating chores from the first indifference classes of the agents: agent $1$ receives chore $c_1$,
    agent $2$ chore $c_2$, agent $3$ $c_4$ and $4$ $c_5$.

    Such partial allocation is not \PO{}, thus agents have to exchange their chores.
    
    As a result, each agent has one chore
    from its last indifference class.
    Such partial allocation is \EFX{} and \PO{},
    and in fact it is the only \EFX{} and \PO{} allocation of this subset of chores (up to relabeling of the agents).
    However, assigning chore $c_3$ to any one agent will result in an allocation that violates \EFX{}. 

\end{example}

Finally, we note that we can overcome the above problems and
find an \EFX{} and \PO{} allocation
when we deal with two agents only.
Indeed, \citet{gafni2021unified} have shown that for each chores-only instance we can construct an equivalent goods-only instance with $n-1$ goods for each chore of the original instance 
(the goods can be interpreted as ``not doing a chore'').
Then, each \EFX{} and \PO{} allocation in the chores-only instance correspond to an \EFX{} and \PO{} allocation in the goods-only instance, and vice-versa, assuming that no agent receives two goods that come from the same chore.
For two agents, this condition is always satisfied, thus whenever we have an algorithm guaranteeing \EFX{} and \PO{} for goods (and here we have \cref{alg:goods:efx}), we have one for chores as well.

\begin{remark}
    Given a weakly lexicographic chores-only instance $(N,M,\rhd)$ with two agents,
    we can always find an \EFX{} and \PO{} allocation in polynomial time.
\end{remark}

\subsection{EF1 and PO}
\label{subsec:chores:ef1}

In this section, we show that despite the challenges in obtaining \EFX{} and \PO{} shown in \cref{ex:chores:not_easy1,ex:chores:not_easy2},
if we relax \EFX{} to \EF{1},
then we can always find an \EF{1} and \PO{} allocation.
 
To this end,
we need to adapt the definitions from \cref{sec:goods} to chores-only setting.
Most of them are symmetrically 
applicable. To point out 
the only difference, we explicitly define alternating paths 
for chores.
Given a preference graph $G_{pref} = (N,M,E,\psi)$
and an (possibly partial) allocation $A$,
an \emph{alternating path}, $p = (c_0, i_1, c_1, \dots, i_s, c_s)$,
is a path in $G_{pref}$ 
(possibly of length zero) such that
\begin{itemize}
    \item the chores in $p$ are pairwise distinct,
    \item $p$ alternates between edges
    that belong and do not belong to $A$,
    i.e., $(c_{r-1}, i_{r}) \in A$, for every $r \in [s]$, and
    $(i_{r}, c_r) \not \in A$, for every $r \in [s]$, and
    \item for every $r \in [s]$,
    agent $i_{r}$ weakly prefers chore $c_{r}$ over $c_{r-1}$,
    i.e., $\psi(c_{r-1}, i_r) \le \psi(i_{r}, c_{r})$. 
    Note that since items are chores, an agent prefers its 
    higher indifference class, 
    which is the difference with the goods. 
\end{itemize}

\begin{algorithm}[t]\small
	\caption{Finding an EF1 and PO allocation of chores}
    \label{alg:chores:ef1}
    \begin{algorithmic}[1]
        \REQUIRE A weakly lexicographic chores-only  instance $\langle N, M, \rhd \rangle$ and the ordering $\sigma$ \\
        \ENSURE An EF1 and PO allocation $A$
        \STATE $A \leftarrow (\emptyset,\dots,\emptyset), \quad
            G_{pref} \leftarrow (N, M, E, \psi)$
        \WHILE{there is an unallocated chore}
            \STATE Take $i_0 \in N$ s.t. $|A_{i_0}|$ is minimal
            (break ties by $\sigma$)
            \STATE Find an alternating path $p=(c_0, i_1, c_1,\dots,i_s,c_s)$
                in $G_{pref}$
                such that $c_s$ is unallocated and
                $\psi(i_0,c_0)$ is largest possible
            \STATE $A \leftarrow$ $A$ updated along $p$,
            \STATE $A_{i_0} \leftarrow A_{i_0} \cup \{c_0\}$
        \ENDWHILE
    \RETURN $A$
    \end{algorithmic}
\end{algorithm} 

\paragraph{The Algorithm.}
\cref{alg:chores:ef1} proceeds similarly to \cref{alg:goods:efx} with criteria $F=$ \textit{null}.
Fix an ordering of agents, $\sigma$. We initialize the algorithm
with the empty allocation. Then, in each iteration of 
the main loop (lines 2--7), we take an agent $i$ that has 
the least number of chores (the earliest in $\sigma$ if there is a ties). Next, agent $i$ chooses the 
most preferred (i.e., least important) 
available chore, $c_0$. We update the allocation 
along alternating path from $c_0$ to an unallocated chore 
(if $c_0$ is allocated), and assign $c_0$ to $i$. 
After $m$ iterations of the main loop, all chores 
are allocated and we return the final allocation. 

We illustrate \cref{alg:chores:ef1} in the following example. 

\begin{example} 
    \label{ex:chores:ef1}
    Consider a chores-only instance with three agents, four 
    chores and preferences as follows:
    \begin{align*}
        1 &: \quad \underline{c_1} \rhd \{\underline{c_2}, c_3, c_4\} \\
        2 &: \quad \{c_1, c_2, c_3\} \rhd \underline{c_4} \\
        3 &: \quad \{c_1, c_2\} \rhd \{\underline{c_3}, c_4\} 
    \end{align*}
    Let $\sigma = (1,2,3)$. 
    In iteration $1$, agent $1$ receives $c_4$.
    However, $c_4$ is still available, so
    in iteration $2$, agent $2$ gets $c_4$, while agent $1$ exchanges $c_4$ for $c_3$ in an alternating path.

    Next, 
    agent $3$ receives $c_3$ and we update the allocation 
    along the alternating path, $p = (c_3, 1, c_2)$. Finally, 
    agent $1$ gets  $c_1$. 
\end{example}

The proof of correctness of \cref{alg:chores:ef1} is similar to the proof of \cref{thm:goods:efx}A and is relegated to \cref{app:chores:ef1}.

\begin{restatable}{theorem}{thmchoresefone}\label{thm:chores:ef1}
    Given a weakly lexicographic chores-only instance $(N,M,\rhd)$, \cref{alg:chores:ef1} always returns an \EF{1} and \PO{} allocation.
\end{restatable}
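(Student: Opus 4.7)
The plan is to prove by induction on iteration $t$ that the partial allocation $A^t$ is both \PO{} and \EF{1}. The argument parallels the proof of \cref{thm:goods:efx}(A) for goods, adapted to chores via the natural dualities: higher-indexed indifference classes are preferred, the alternating-path inequality is reversed to $\psi(c_{r-1},i_r) \le \psi(i_r,c_r)$, and the algorithm selects the alternating path with the \emph{largest} $\psi(i_0,c_0)$.

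I would first set up structural invariants analogous to \cref{claim_body:goods:efx:tracking,claim_body:goods:efx:basic}: (i) unavailable chores stay unavailable across iterations, since the set of unallocated chores only shrinks; (ii) only $i_0$'s bundle size grows per iteration, so bundle sizes differ by at most one, and the $\sigma$-earlier agent weakly has more chores; (iii) each agent, upon being picked as $i_0$, receives a chore from a class no better (from its own view) than any of its previous picks, by combining maximal-$\psi$ greediness with shrinking availability; and (iv) an alternating-path swap weakly improves every intermediate agent from its own perspective.

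For \PO{}, I would apply the chores-adapted version of \cref{thm:po:aziz} (with the inequality $\psi(i_0,g_0) < \psi(g_s,i_0)$ flipped to $\psi(i_0,c_0) > \psi(c_s,i_0)$). Arguing by contradiction, any \PO{}-violating alternating path at $A^t$ can be traced back either to produce an alternating path at $t$ with larger $\psi(i_0,c_0)$ than the one the algorithm chose (contradicting line~4's maximality), or to produce a \PO{}-violating path at $A^{t-1}$ (contradicting induction).

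For \EF{1}, I would perform a case analysis on a pair $(i,j)$ after the update at iteration $t$. If neither $i$ nor $j$ is in the alternating path, \EF{1} holds by induction. If one participates only through a weakly-improving swap while the other is unaffected, the inductive \EF{1} witness remains valid by invariant~(iv). The key case is $j = i_0$: then $A^t_j$ is lexicographically weakly worse from $j$'s own view, so $j$ may newly envy some $i$. I would identify $c_0$ as the \EF{1} witness and verify $A^t_j \setminus \{c_0\} = A^{t-1}_j \succeq_j A^{t-1}_i = A^t_i$. Symmetrically, when $i = i_0$, envy from $j$ toward $i$ can only weaken (since adding a chore to $i$'s bundle lex-worsens it from every viewpoint), and the inductive \EF{1} witness persists.

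The main obstacle is justifying the inequality $A^{t-1}_j \succeq_j A^{t-1}_i$ in the $j = i_0$ subcase above. This requires an auxiliary invariant: whenever $j$ is chosen as $i_0$ at iteration $t$, $j$ weakly prefers its own bundle $A^{t-1}_j$ to every other agent's $A^{t-1}_i$. Establishing this invariant in turn relies on the bundle-balance invariant~(ii) to force $|A^{t-1}_j| \le |A^{t-1}_i|$, the maximal-$\psi$ property that governed $j$'s previous picks, and a careful accounting of how alternating-path swaps during iterations in which $j$ is not picked modify the lex comparison of bundles from $j$'s perspective. This cross-iteration bookkeeping is the most technical part of the argument.
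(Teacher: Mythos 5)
Your overall architecture matches the paper's: the same availability-monotonicity and ``own chores are weakly preferred to available ones'' invariants (the paper's \cref{claim:chores:ef1:available,claim:chores:ef1:best}), \PO{} via the chores analogue of \cref{thm:po:aziz} (\cref{thm:po:ebadian}), and an auxiliary ``the agent about to pick does not envy anyone'' lemma feeding an induction on iterations. However, there is a genuine gap exactly at the place you flag as ``the most technical part'': you never supply the idea that makes that lemma (and the persistence of the \EF{1} witness) provable by induction. The difficulty is that between two consecutive picks of an agent $i$, alternating-path updates can reshuffle which chores other agents hold \emph{from $i$'s point of view}: another agent $j$ may trade a chore that $i$ ranks in a low (unimportant) class for one that $i$ ranks in a high class, since the trading agent only cares about its own indifference. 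So the plain statement ``the picker did not envy at its previous pick, hence does not envy now'' is not inductively self-sustaining, and likewise your case ``one agent participates only through a weakly-improving swap, so the inductive \EF{1} witness remains valid by invariant~(iv)'' is unjustified: invariant~(iv) controls the swapper's own valuation of its own bundle, not how the \emph{envied} agent's bundle moves in the \emph{envier's} eyes, and that movement can be strictly favorable to the envied agent.

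The paper's fix is a strengthened, coarsened invariant: if $k$ is the most important class in which $i$ currently holds a chore, then (by the greedy-best property) every chore that can still change hands lies in $i$'s classes $1,\dots,k$, so only the \emph{total} number of such chores in each bundle can matter. Accordingly, \cref{claim:chores:ef1:just_before} asserts that at the moment $i$ is chosen, $s_i^k(A_i^t)\ge_{lex} s_i^k(A_j^t)$ where $s_i^k$ merges the first $k$ coordinates; this merged comparison is preserved across a full round (each agent gains exactly one chore from classes $1{:}k$, and exchanges preserve the merged counts), whereas the unmerged one need not be. Two further ingredients you state too weakly are also used in this bookkeeping: path updates leave an agent's own score \emph{exactly unchanged} (it only trades chores it is indifferent between, \cref{claim:chores:ef1:scores}), not merely weakly improved; and with $F$ effectively \textit{null} the algorithm is genuinely round-robin, so the iteration $n$ steps earlier is precisely the picker's previous turn, which is what anchors the cross-iteration comparison. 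Without the merged-class invariant (or an equivalent device) your induction does not close, so as written the proposal is a correct skeleton with the central lemma missing rather than a complete proof.
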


\section{Concluding Remarks}

We showed that when dealing with goods allowing agents to express indifferences immediately results in computational hardness for deciding the existence of an \EF{} allocation. Yet, we developed an algorithm that always finds an \EFX{}, \MMS{}, and \PO{} allocation in polynomial time; and more importantly, our algorithm is versatile as it enables a social planner to select a set of desired fairness criteria.
An intriguing future direction is investigating whether our positive results can be extended beyond weak lexicographic orderings, possibly to allow for more complex (e.g. combinatorial) preferences.

We illustrated the challenges in dealing with chores when agents have weakly lexicographic preferences. While in this setting \EFX{} implies \MMS{} for chores---in contrast to the goods-only case---finding an algorithm for achieving \EFX{} allocations (with or without \PO{}) remains an interesting open problem.
Nonetheless, we developed an algorithm for finding an \EF{1} and \PO{} allocation in this domain.

\section*{Acknowledgments}
Hadi Hosseini acknowledges support from NSF IIS grants \#2144413 (CAREER) and \#2107173.
We thank the anonymous reviewers for their helpful comments.

\bibliographystyle{plainnat}
% \bibliography{arxiv-24/weaklex_references}

\appplaceholder

\clearpage
    
\appendix

\section*{Appendix}

\section{Envy-Freeness for Goods}
\label{app:ef:np-hardness}

\thmgoodsefhardness*

\begin{proof}
    Observe that given an allocation, we can check if it is \EF{} in polynomial time. Thus, our problem is in NP. Hence, let us focus on showing the hardness of our problem. 
    
    We prove the hardness by a reduction from \textsc{Equitable Coloring} (\textsc{EC}). The \textsc{EC} problem is defined as follows. For a fixed constant $l \in \mathbb{N}$, given a graph $G$, the task is to determine if there exists a proper $l$-coloring, i.e., an assignment of $l$ colors to the vertices of the graph such that no two adjacent vertices are of the same color, with the property that the numbers of vertices in any two color classes differ by at most one. We further extend the \textsc{EC} problem by requiring that each color class has the same number of vertices. This problem is known to be NP-complete by a reduction from \textsc{Graph $k$-Colorability} \citepapp{gary1979computers,hosseini2020HEF}. 
    Let $G = (V,E)$, where $|V| = n$ and $|E| = m$.\footnote{ To maintain consistency with the standard graph notation, it should be noted that in this context, the variables $n$ and $m$ do not represent the number of agents and the number of goods as in the rest of the paper.} 
    We assume that $l \ge 3$ and that in the graph $G$ each vertex has a degree of at least two.
    Since \textsc{EC} was shown to be NP-hard for much more restricted classes of graphs~\citepapp{furmanczyk2013equitable},
    we still retain NP-hardness in this more general version of the problem.
    The assumption that each vertex has at least two incident edges, implies that there are at least that many edges as vertices, i.e.,
    \begin{equation}
    \label{condition:EFedge}
        m \geq n. 
    \end{equation}
    Finally, since for small number of vertices and fixed $l$ the problem can be solved in polynomial time by a brute-force algorithm, without loss of generality, we assume that the number of vertices is at least three times greater than the number of colors, i.e., $n \geq 3l$.
    
    For an \textsc{EC} instance, a graph $G=(V,E)$, we construct its corresponding weakly lexicographic instance $\langle N, M, \rhd \rangle$.
    To this end, we take $m+l$ agents,
    i.e., one agent, $a_e$, for each edge $e \in E$
    and one agent, $a_r$, for each color $r \in [l]$.
    Furthermore, we take $2m + n$ goods,
    where we have two goods, $e$ and $e'$, for every edge $e \in E$
    and one good, $v$, for every vertex $v \in V$.
    Thus, the set of goods is given by $M = E \cup E' \cup V$,
    where $E' = \{ e' : e \in E\}$. 
    Finally, the preferences are such that each agent has two indifference classes.
    More precisely, each edge agent $a_e$ has all the edge goods and vertex goods which are incident to the edge $e$ in the first indifference class and the rest of the goods in the second one. Meanwhile, each color agent $a_r$ has all the vertex goods in the first indifference class, while all the edge goods are in the second class.
    Formally,
    \begin{align*}
    \label{condition:EFpref}
        a_e &: \left( E \cup E' \cup \{v, v'\} \right) \rhd 
            \left( V \setminus \{v,v'\} \right),
            \quad  \mbox{for every } e=(v,v') \in E, \mbox{ and}\\
        a_r &: V \rhd \left( E \cup E' \right),
            \quad \mbox{for every } r \in [l].
    \end{align*}

    In what follows, we will show that there exists an equitable $l$-coloring of $G$ if and only if there is an EF allocation $A$ in the corresponding allocation instance. 

    First, assume that the graph $G$ admits an equitable $l$-coloring. Consider an allocation $A$ such that for every $e \in E$, edge agent $a_e$ receives two edge goods $e$ and $e'$, and for every $r \in [l]$, color agent $a_r$ receives all vertex goods of color $r$. Since all the edge agents receive exactly two goods which are from their first indifference classes, there is no envy among them. Moreover, for every $e = (v,v')$, the edge agent $a_e$ envies none of the color agents as due to the proper coloring of $G$, the vertex goods $v$ and $v'$, which are in the first indifference class of the agent $a_e$, are not allocated to the same color agent. Each color class is of the same size and therefore, each color agent receives the same number of goods. Thus, there is no envy among the color agents. Each color agent envies none of the edge agents given that each color agent gets at least $3$ vertex goods (by the assumption that $n \geq 3l$) and has all the edge goods in its second indifference set. Hence, the allocation $A$ is EF. 

    Conversely, assume that there is an EF allocation denoted by $A$. We will show that this implies the existence of a proper $l$-coloring for the graph $G$. 
    In order to establish this, we will prove a series of 
    Claims~\ref{claim1:EF} -- \ref{claim9:EF} that pertain to the properties of the allocation $A$. 

    \begin{claim_app_EF}
    \label{claim1:EF}
        For each $e \in E$, edge agent $a_e$ receives at most two edge goods under $A$.  
    \end{claim_app_EF}
    \begin{proof}
        Suppose, for a contradiction, that there is some edge agent $a_e$ who receives three or more edge goods. For every edge agent, all the edge goods are in its first indifference class. Thus, all the edge agents should receive at least three or more goods that are in their first indifference classes (as otherwise they will envy $a_e$). Therefore, there should be at least $3m$ goods to be allocated among the edge agents. By Equation~\ref{condition:EFedge}, the number of goods in the instance, $2m+n$, is bounded by $3m$. This implies that after allocating at least $3m$ goods to the edge agents, no good will be allocated to the color agents. Thus, the color agents will surely be envious---a contradiction. 
    \end{proof}

    \begin{claim_app_EF}
    \label{claim2:EF}
        For every $r \in [l]$, color agent $a_r$ receives at least one vertex good under $A$. 
    \end{claim_app_EF}
    \begin{proof}
        Suppose, for a contradiction, that there is some color agent $a_r$ who does not receive any vertex good. 
        Since its first indifference class consists exactly of all vertex goods,
        it envies an agent that receives some vertex good---a contradiction.
    \end{proof}

    \begin{claim_app_EF}
     \label{claim3:EF}
        For every $r \in [l]$, color agent $a_r$ receives at most one edge good under $A$.
    \end{claim_app_EF}
    \begin{proof}
        Suppose, for a contradiction, that there is some color agent $a_r$ who receives two or more edge goods. By Claim~\ref{claim2:EF}, this agent also receives some vertex good $v_0$. By the assumption that the graph $G$ has minimum degree two, there exists some edge $e_0=(v_0,v_1)$ incident to the vertex $v_0$ and also
        a cycle that contains $e_0$, i.e., $(e_0, e_1, \dots, e_z)$, in which $e_k = (v_k, v_{k+1})$ and $v_{z+1} = v_0$.
        Since the agent $a_r$ receives three goods from the first indifference class of $a_{e_0}$, the latter agent has to receive three such goods as well.
        Thus, by Claim~\ref{claim1:EF}, two edge goods and the vertex good $v_1$ must be assigned to the agent $a_{e_0}$.
        Similarly, the agent $a_{e_1}$ views the bundle of $a_{e_0}$ as having the score vector of at least $(3,0)$. Hence, it has to receive two edge goods and the vertex good $v_2$. 
        Continuing this way, we can show by induction that agent $a_{e_k}$ for every $k \in [z]$ is given two edge goods and vertex good $v_{k+1}$.
        However, since $v_{z+1} = v_0$ this leads to a contradiction.
    \end{proof}

    \begin{claim_app_EF}
    \label{claim4:EF}
        There exists some $e \in E$ such that edge agent $a_e$ receives two edge goods under $A$. 
    \end{claim_app_EF}
    \begin{proof}
        Observe that if none of the color agents receives an edge good, then by 
        a pigeonhole principle some edge agent has to receive at least two edge goods.
        Therefore, let us assume that there is some color agent $a_r$ who receives an edge good (by Claim~\ref{claim3:EF}, it can receive at most one edge good). 
        
        Suppose, for a contradiction, that none of the edge agents receive more than one edge good. By Claim~\ref{claim2:EF}, gent $a_r$ also receives some vertex good $v_0$. Now, consider a similar construction as in the proof of Claim~\ref{claim3:EF}. Take a cycle $(e_0, e_1, \dots, e_z)$ such that $e_k = (v_k, v_{k+1})$ and $v_{z+1} = v_0$. Since the agent $a_{e_0}$ views the bundle of $a_r$ as having the score vector of at least $(2,0)$, it must receive two goods from its first indifference class. Given that the agent $a_{e_0}$ does not receive two edge goods (by the assumption), it has to receive the vertex good $v_1$ together with one edge good. Continuing this way, we can show by induction that agent $a_{e_k}$ for every $k \in [z]$ is given one edge good and vertex good $v_{k+1}$.
        However, since $v_{z+1} = v_0$ this leads to a contradiction.
    \end{proof}

    \begin{claim_app_EF}
    \label{claim5:EF}
        For every $e \in E$, if edge agent $a_e$ has two edge goods, then it does not receive a vertex good under $A$.
    \end{claim_app_EF}
    \begin{proof}        
        Suppose, for a contradiction, that there is some edge agent $a_e$ who receives two edge goods and a vertex good. Then, every other edge agent views the bundle of the edge agent $a_e$ as having the score vector of at least $(2,1)$. Therefore, to guarantee envy-freeness of $A$, every other edge agent should receive at least three goods. Thus, at least $3m$ goods should be allocated among the edge agents. Then, there will be at most one good to be allocated to the color agents, which contradicts Claim~\ref{claim2:EF}. 
    \end{proof}

    \begin{claim_app_EF}
    \label{claim6:EF}
        For each $e \in E$, edge agent $a_e$ receives exactly two goods under $A$.
    \end{claim_app_EF}
    \begin{proof}
        Observe that if none of the color agents receives an edge good, then each edge agent has to receive two edge goods (and nothing more) by Claims~\ref{claim1:EF} and \ref{claim5:EF}. Therefore, let us assume that there is some color agent $a_r$ who receives an edge good.
        
        By Claim~\ref{claim4:EF}, there is at least one edge agent who receives two edge goods. Therefore, all edge agents should receive at least two goods. Suppose now, for a contradiction, that there is some edge agent $a_{\bar{e}}$ who receives more than two goods. By Claim~\ref{claim1:EF}, the edge agent $a_{\bar{e}}$ must receive at least one vertex good. Moreover, by Claim~\ref{claim5:EF}, this agent can not get two edge goods. Therefore, $a_{\bar{e}}$ receives at least two vertex goods. 

        Since all vertex goods are in the first indifference set of the agent $a_r$, it should receive at least two vertex goods as well. Let us denote one of these vertex goods by $v_0$. Once again, consider a similar construction to those in the proofs of Claims~\ref{claim3:EF} and \ref{claim4:EF}. Take a cycle $(e_0, e_1, \dots, e_z)$ such that $e_k = (v_k, v_{k+1})$ and $v_{z+1} = v_0$. Since the agent $a_{e_0}$ views the bundle of $a_r$ as having the score vector of at least $(2,1)$, it must receive two goods from its first indifference class and one good from its second class. By Claim~\ref{claim5:EF}, the agent $a_{e_0}$ has to receive one edge good and two vertex goods one of which is the vertex good $v_1$. Continuing this way, we can show by induction that agent $a_{e_k}$ for every $k \in [z]$ is given one edge good and two vertex goods one of which is vertex good $v_{k+1}$.
        However, since $v_{z+1} = v_0$ this leads to a contradiction.
        
    \end{proof}

    \begin{claim_app_EF}
    \label{claim7:EF}
        There exists some $r \in [l]$, the color agent $a_r$ receives at least $\frac{n}{l}$ goods under $A$. 
    \end{claim_app_EF}
    \begin{proof}
        The total number of goods in the instance is $2m+n$. By Claim~\ref{claim6:EF}, edge agents receive $2m$ goods in total. Therefore, one of the color agents must receive at least $\frac{n}{l}$ goods from the pigeonhole principle. 
    \end{proof}

    \begin{claim_app_EF}
    \label{claim8:EF}
        For each $e \in E$, edge agent $a_e$ receives exactly two edge goods and no vertex good under $A$. 
    \end{claim_app_EF}
    \begin{proof}
        Observe that if none of the color agents receives an edge good, then each edge agent should receive exactly two edge goods and nothing more by Claims~\ref{claim1:EF} and \ref{claim5:EF}. Therefore, assume now that there is some color agent $a_r$ who receives at least one edge good. 

        By Claim~\ref{claim2:EF}, the agent $a_r$ also receives vertex good $v$. By the assumption that the graph $G$ has minimum degree two, there is an edge $e = (v,v')$ which is incident to the vertex $v$. The agent $a_e$ views the bundle of the agent $a_r$ as having the score vector $(2,0)$ or better. Since the edge agent $a_e$ receives exactly two goods (by Claim~\ref{claim6:EF}), the color agent $a_r$ cannot receive any other good (otherwise, the agent $a_e$ will envy the agent $a_r$). Thus, the color agent $a_r$ also receives exactly two goods -- one edge good and the vertex good $v$. By Claim~\ref{claim7:EF}, there is some color agent $a_{r'}$ who receives at least $\frac{n}{l}$ goods, which is at least $3$ goods by the assumption that $n \geq 3l$. Since the agent $a_{r'}$ also receives one or more vertex goods (by Claim~\ref{claim2:EF}), the agent $a_r$ views the bundle of the agent $a_{r'}$ as having the score vector $(1,2)$ or better. Thus, the agent $a_r$ will be envious of the agent $a_{r'}$---a contradiction. 
    \end{proof}

    \begin{claim_app_EF}
    \label{claim85:EF}
        For each $r \in [l]$, color agent $a_r$ receives exactly $\frac{n}{l}$ vertex goods and no edge good under $A$.
    \end{claim_app_EF}
    \begin{proof}
        From Claim~\ref{claim8:EF} all edge goods are distributed among edge agents,
        and all vertex goods are distributed among color agents.
        Since there is no envy among color agents and they have identical preferences,
        every color agent must receive exactly the same number of vertex goods, i.e., $\frac{n}{l}$.
    \end{proof}
    
    \begin{claim_app_EF}
    \label{claim9:EF}
        For every $e \in E$ with $e = (v,v')$, there is no $r \in[l]$ such that the color agent $a_r$ receives both the vertex goods $v$ and $v'$ under $A$. 
    \end{claim_app_EF}
    \begin{proof}
        Suppose, for a contradiction, that there is some color agent $a_r$ who receives both of the vertex goods $v$ and $v'$ for some edge $e=(v, v')$.
        By Claim~\ref{claim8:EF},
        edge agent $a_e$ has two edge goods and nothing more.
        However, by Claim~\ref{claim85:EF}, agent $r$ has $\frac{n}{l}$ goods,
        which by our initial assumptions that $n \geq 3l$ is at least 3.
        Since both $v$ and $v'$ are in the first indifference class of $e$
        that makes agent $a_e$ envy agent $a_r$, which is a contradiction.
    \end{proof}

    To conclude the proof,
    observe that we can identify the color of each vertex with the color agent that received the corresponding vertex good.
    By Claim~\ref{claim9:EF} such coloring will be proper and by Claim~\ref{claim85:EF} it will be equitable as well. 
\end{proof}

Observe that the \EF{} allocation constructed in the proof for the NP-completeness of the \EF{}-existence problem satisfies \PO{} as well. This is because every good is assigned to an agent who has it in its first indifference set. As a result, the problem of determining whether an \EF{} and \PO{} allocation is also NP-complete.  

\begin{corollary}
    Deciding whether an \EF{} and \PO{} allocation exists for a given weakly lexicographic goods-only instance is NP-complete.
\end{corollary}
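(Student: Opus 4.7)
The plan is to observe that this corollary follows with minimal additional work from the NP-hardness proof of \cref{thm:goods:ef:hardness}. Membership in NP is routine: given a candidate allocation, envy-freeness can be checked by pairwise comparison of bundles, and Pareto optimality can be checked in polynomial time via the alternating-path characterization of \cref{thm:po:aziz}. For NP-hardness, the plan is to reuse verbatim the reduction from equitable coloring constructed for \cref{thm:goods:ef:hardness}, and to argue that for the constructed instance the existence of an \EF{} allocation is equivalent to the existence of an \EF{} and \PO{} allocation.

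The ($\Leftarrow$) direction of that equivalence is immediate, since every \EF{} and \PO{} allocation is in particular \EF{}, and the proof of \cref{thm:goods:ef:hardness} already shows that the existence of an \EF{} allocation in the constructed instance implies a proper equitable $l$-coloring of $G$. The ($\Rightarrow$) direction reduces to checking that the \EF{} allocation $A$ built in the forward direction of \cref{thm:goods:ef:hardness} is in fact \PO{}. In that construction every edge agent $a_e$ receives only the edge goods $e,e'$, which lie in its first indifference class $E\cup E' \cup \{v,v'\}$, and every color agent $a_r$ receives only vertex goods, all of which lie in its first indifference class $V$. Hence every agent holds items exclusively from its first indifference class.

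Once this observation is in place, \PO{} follows directly from \cref{thm:po:aziz}: for any alternating path $p=(g_0,i_1,g_1,\dots,i_s,g_s)$ and agent $i_0$ with $(g_s,i_0)\in A$, the weight $\psi(g_s,i_0)$ equals $1$ because $i_0$ holds only first-class items, and since indifference classes are indexed from $1$ the strict inequality $\psi(i_0,g_0)<\psi(g_s,i_0)$ can never hold. Therefore no violating alternating path exists, and $A$ is Pareto optimal. I do not anticipate any real obstacle here; the argument is essentially a one-line addendum to the proof of \cref{thm:goods:ef:hardness}, whose only content is the structural observation that each agent's bundle in the forward construction lies entirely in its top indifference class.
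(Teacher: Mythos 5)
Your proposal is correct and follows essentially the same route as the paper: the paper's argument is exactly that the \EF{} allocation constructed in the forward direction of the reduction for \cref{thm:goods:ef:hardness} assigns every good to an agent holding it in its first indifference class, hence is \PO{}, so the same reduction establishes hardness for \EF{} plus \PO{}. Your added details (NP membership via \cref{thm:po:aziz} and the explicit weight argument $\psi(g_s,i_0)=1$ ruling out any violating alternating path) are just a more explicit justification of the same observation.
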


\section[Correctness of Algorithm 1]{Correctness of Algorithm~\ref{alg:goods:efx}}
\label{app:goods:efx}

\thmgoodsefx*
\begin{proof}
Throughout the proof,
we will use $t$ to denote the number of an
iteration of the main loop (lines 2--8),
and each variable used in the algorithm
with superscript $t$, e.g., $A^t$ or $U^t$,
will denote the value of this variable at the end of iteration $t$.
For convenience, by $A^0$ or $ U^0$, etc.,
we denote the initial values of 
these variables.

We start the proof by showing that \cref{alg:goods:efx} is well defined
and always returns a complete allocation for all statements (A), (B), (C) and (D) of \cref{thm:goods:efx}.
To this end, observe that in each iteration of the main loop (lines 2--8),
we take one still unallocated good, $g_s$, and assign it to 
one of the agents identified by \cref{alg:checkcriteria:goods:efx}
(possibly, changing the assignments of some of the
already allocated goods on the way).
Hence, after $m$ iterations of the main loop,
we allocate all items.

The remainder of the proof is organized 
into the three main parts as follows: 
\begin{enumerate}
    \item First, we will prove several basic
    properties regarding the concepts and
    notions used in the proof, 
    which hold for all considered criteria $F$ 
    (\cref{claim:goods:efx:tracking,claim:goods:efx:basic,claim:goods:efx:unavailable}).
    \item Next we will show several results that make some assumptions on the criteria we consider (\cref{claim:goods:efx:nonempty,claim:goods:efx:equal,claim:goods:efx:mms}).
    \item Then, for each statement, 
    we will provide a separate 
    proof for the fairness of 
    the output allocation: 
    \EF{1} for (A),
    \EFX{} for (B), 
    \MMS{} for (C), and
    \EFX{} and \MMS{} for (D). 
    \item Finally, we will prove 
    \PO{} of the output allocation 
    once for all statements. 
\end{enumerate}

\noindent
\textbf{\underline{Part 1: Basic common properties.}}

\noindent
By $X^t$ let us denote the set of
all \emph{unavailable} goods
in iteration $t$, i.e.,
goods from which there is
no alternating path to an unallocated good
in $G_{pref}$.
Let us prove the basic properties regarding unavailable goods and score vectors of agents.

%%% %%% %%% %%% %%%
%  CLAIM TRACKING %
%%% %%% %%% %%% %%%

\begin{claim_app_EFX}
    \label{claim:goods:efx:tracking}
    For every agent $i \in N$, its indifference classes $k,k' \in [k_i]$, and iterations $t,t' \in [m]$, such that $k' < k$ and $t' < t$, it holds that
    \begin{itemize}
        \item[(a)] an unavailable good stays  unavailable,
            i.e., $X^{t'} \subseteq X^{t}$,
        \item[(b)] agents weakly prefer their goods over the available ones, i.e.,
        if $s_i(k,A^t_i) > 0$, then $\rhd_i(k') \subseteq X^{t-1}$,
        \item[(c)] the score of the own bundle of an agent is (weakly) monotonic, i.e.,
        $s_i(k,A^t_i) \ge s_i(k,A^{t'}_i)$,
        \item[(d)] if an agent has an unavailable good, then the whole indifference class with this good is unavailable, i.e., if $\rhd_i(k, A^t_i) \cap X^t \neq \emptyset$, then $\rhd_i(k) \subseteq X^t$.
    \end{itemize}

\end{claim_app_EFX}
\begin{proof}
    For \textit{(a)},
    we will show
    that for every good $g \in M$,
    if $g \in X^t$,
    then $g \in X^{t+1}$,
    which will imply the thesis by induction.
    If $g \in X^t$, then,
    for allocation $A^{t}$,
    there is no alternating path in $G_{pref}$ that
    starts in $g$ and ends in an unallocated good.
    Hence, the same is true for every good $g'$,
    for which there exists an alternating path that
    starts in $g$ and ends in $g'$. 
    This means that in the next iteration,
    the allocation of every such good $g'$ as well as good $g$
    has to remain the same
    (only the available goods can change ownership in lines 5 and 6 of the algorithm).
    Thus, again, for allocation $A^{t+1}$,
    there is no alternating path in $G_{pref}$
    starting in $g$ and ending in an unallocated good.
    Thus, $g \in X^{t+1}$.

    For \textit{(b)},
    assume by contradiction that
    there exists an available good
    that is strictly preferred by agent $i$
    over one of its owned goods, i.e.,
    there exist $g \in M \setminus X^{t-1}$ and $g' \in A^t_i$ such that
    $g \succ_i g'$.
    Let $k$ be an indifference class of good $g'$, i.e., $g' \in \rhd_i(k)$, and $t'$ be the first iteration in which $s_i(k',A^{t'}_i) > 0$ for some $k' \ge k$.
    Since by an exchange in an alternating path each agent always weakly prefers its new goods over the old ones,
    this means that in iteration $t'$    
    agent $i$ was chosen in line 3 of the algorithm
    and received a new good,
    $g'' \in \rhd_i(k')$,
    in line 5.
    Observe also that $t' \le t$,
    which means, by \textit{(a)}, that $g$ was available at the end of iteration $t'-1$.
    However, since $g \succ_i g' \succeq_i g''$, agent $i$ should receive good $g$ instead of $g''$ in iteration $t'$---a contradiction.

    For \textit{(c)},
    observe that the bundle of an agent can change only in line 5 and 6 of the algorithm.
    In line 5 an agent receives an additional good, thus the score $s_i(k,A^t_i)$ can only increase.
    Hence, let us prove that it is not possible for a score $s_i(k,A^t_i)$ to decrease as an effect of the exchange along the alternating path in line 6 of the algorithm.
    To the end, we will show that in the alternating paths an agent exchanges only goods it sees as indifferent.
    More formally, let $p = (g_0, i_1, g_1, \dots, i_s, g_s)$ be an alternating path chosen in line 4 of the algorithm.
    We will prove that for every $j \in [s]$,
    it holds that $g_{j-1} \sim_{i_j} g_j$.
    
    Observe that $g_{j} \succeq_{i_j} g_{j-1}$ by the the definition of an alternating path.
    Observe also that $g_{j}$ is available at the end of iteration $t-1$, as witnessed by alternating path $p'=(g_{j}, i_{j+1}, g_{j+1},\dots,i_s,g_s)$.
    Thus, from \textit{(b)} we get that $g_{j-1} \succeq_{i_j} g_{j}$, which proves that
    $g_{j-1} \sim_{i_j} g_j$.

    For \textit{(d)},
    assume for a contradiction that
    there exists $g \in \rhd_i(k,A^t_i) \cap X^t$ and $g' \in \rhd_i(k) \setminus X^{t}$.
    This means that there exists an alternating path
    $p = (g_0,i_1,g_1,\dots,i_r,g_r)$ in $G_{pref}$
    that starts in $g_0 = g'$ and
    ends in some unallocated good $g_r$.
    Thus, since $i$ is indifferent between $g$ and $g'$,
    we get that path
    $(g, i, g_0,i_1,g_1,\dots,i_r,g_r)$
    is also alternating.
    As a result, $g$ is available---a contradiction.
\end{proof}

%%% %%% %%% %%% %%%
%   CLAIM BASIC   %
%%% %%% %%% %%% %%%

Next, let us prove some basic observations regarding set $U$ and envy in our algorithm.
\begin{claim_app_EFX}
\label{claim:goods:efx:basic}
For arbitrary iterations $t', t \in [m]$
and agents $i,j \in N$ such that
$t' < t$,
it holds that
    \begin{itemize}
        \item[(a)] no agent becomes prioritized again, i.e., $U^{t} \subseteq U^{t'}$,
        \item[(b)] the sizes of bundles of agents in $U$ are as equal as possible, i.e., $0 \le |A^t_i| - |A^t_j| \le 1$, if $i, j \in U^t$ and $\sigma(i) < \sigma(j)$, and
        \item[(c)] an agent cannot start envying an agent that did not receive more goods, i.e., if $A^{t'}_j \succeq_j A^{t'}_i$ and $|A^{t'}_i|=|A^t_i|$, then $A^{t}_j \succeq_j A^{t}_i$.
        \item[(d)] an agent never envies agents that are later in the ordering, i.e., if $\sigma(i) < \sigma(j)$, then $A^t_i \succeq_i A^t_j$.
    \end{itemize}

\end{claim_app_EFX}
\begin{proof}
We get~\textit{(a)} directly from the algorithm
as $U^t$ is always chosen as a subset of $U^{t-1}$ in line 7 
by checking criteria $F$ (\cref{alg:checkcriteria:goods:efx}).

For~\textit{(b)},
since in line 3 of the algorithm we choose an agent from $U$
(whenever it is nonempty) with the smallest number of goods and
from \textit{(a)} $U$ never gets new agents,
the difference in the number of goods of agents in $U$
never increases above one.
Since we break ties according to $\sigma$,
if the number of goods is not equal,
it is the earlier agent in $\sigma$ that has
one more good.

For~\textit{(c)},
assume otherwise, i.e.,
$A^{t'}_j \succeq_j A^{t'}_i$ and $|A^{t'}_i|=|A^t_i|$,
but $A^{t}_i \succ_j A^{t}_j$.
From \cref{claim:goods:efx:tracking}c we have
$A^{t}_i \succ_j A^{t}_j \succeq_j A^{t'}_j \succeq_j A^{t'}_i$.
Thus, in order for the envy to appear,
between iteration $t'$ and $t$,
agent $i$ had to exchange one of its goods for a good, $g$,
which is preferred by $j$ over one of its goods, $g'$, in $A^{t'}_j$.
This implies that $g$ was available in iteration $t'$.
However, this and the fact that $g \succ_j g'$
contradicts \cref{claim:goods:efx:tracking}b.

Finally, for (d), assume by contradiction that there is an iteration $t$ such that $A^t_j \succ_i A^t_i$ and let us take the first such $t$.
Since $A^{t-1}_i \succeq_i A^{t-1}_j$,
in iteration $t$,
agent $j$ obtains good $g$
(either receives it in line 7 or exchanges
for another good in the alternating path in line 6)
that is preferred by agent $i$ over one of $i$'s goods.
Since $j$ received $g$ in iteration $t$
it must be that $g$ was available at the end of iteration $t-1$, i.e., $g \not \in X^{t-1}$.
However, this contradicts \cref{claim:goods:efx:tracking}b.
\end{proof}

%%% %%% %%% %%% %%%
%  CLAIM UNVLBL.  %
%%% %%% %%% %%% %%%

Finally, let us show a quite technical result
that if an agent has only unavailable goods
and potentially envies other agent,
then the envied agent has at least the
same amount of unavailable goods.

\begin{claim_app_EFX}
    \label{claim:goods:efx:unavailable}
    For every iteration $t \in [m]$
    and agents $i,j \in N$ such that
    $A^t_i \subseteq X^t$,
    if $i$ potentially envies $j$,
    but not actually envies,
    then $|A^t_i| \le |A^t_j \cap X^t|$.
\end{claim_app_EFX}
\begin{proof}
    Let us denote the set of available goods
    in iteration $t$ by $H^t = M \setminus X^t$.
    By definition,
    if $i$ potentially envies $j$,
    then it strictly prefers bundle
    $A^t_j \cup H^t$ over its bundle.
    This means that there exists a position $\bar{k}$
    such that
    $s_i(\bar{k},A^t_j \cup H^t) > s_i(\bar{k},A^t_i)$ and
    $s_i(k,A^t_j \cup H^t)=s_i(k,A^t_i)$, for every $k < \bar{k}$.
    By $k^*$ let us denote the last position
    for which $s_i(k^*,A^t_i) > 0$.
    From \cref{claim:goods:efx:tracking}b and \cref{claim:goods:efx:tracking}d
    we know that every good in the first $k^*$
    indifference classes of $i$
    is unavailable.
    Thus, for every $k \le k^*$, we have
    $s_i(k,A^t_j) = s_i(k,A^t_j \cup H^t)$.
    Since $i$ does not actually envy $j$,
    this means that $\bar{k} > k^*$
    (otherwise, we would have
    $s_i(\bar{k},A^t_j) = 
    s_i(\bar{k},A^t_j \cup H^t) > 
    s_i(\bar{k}, A^t_i)$)
    Hence, we get the bound in question, i.e.,
    \(
        |A^t_j \cap X^t| \ge 
        \sum_{k=1}^{k^*} s_i(k,A^t_j) = 
        \sum_{k=1}^{k^*} s_i(k,A^t_i) = 
        |A^t_i|.
    \)
\end{proof}

\noindent
\textbf{\underline{Part 2: Criteria specific properties.}}

\noindent
We now move the second part of the proof 
in which  
we consider properties that rely on specific fairness criteria.

%%% %%% %%% %%% %%% %%%
%   SMALL CLAIMS EFX  %
%%% %%% %%% %%% %%% %%%

\begin{claim_app_EFX}
    \label{claim:goods:efx:nonempty}
    If criteria $F = \EFX{}$ or $F=\EFX{}+\MMS{}$, then for every $t \in [m]$
    it holds that $U^t \neq \emptyset$.
\end{claim_app_EFX}
\begin{proof}
    Observe that in every directed graph,
    there exists a strongly connected component without any incoming edge.
    In particular, such a component exists in $G_{envy}(A^t,S)$ for every $S \subseteq N$ and $t \in [m]$.
    Hence, in each iteration $t \in [m]$,
    for some agent $i \in U^{t-1}$,
    \cref{alg:checkcriteria:goods:efx}
    will return \textit{True} and keep it in $U^t$.
\end{proof}

%%% %%% %%% %%% %%%
%  BIG CLAIM EFX  %
%%% %%% %%% %%% %%%

Now, let us show one of the key observation of the proof that allows us to guarantee \EFX{}:
if an agent, $j$, envies another agent, $i$,
with at most the same number of goods,
then there is no potential envy path from $i$ to $j$ in $G_{envy}$.
We will use this result later on to prove that, in such cases, $i$ can be eliminated from $U$.

\begin{claim_app_EFX}
    \label{claim:goods:efx:equal}
    If criteria $F = \EFX{}$ or $F=\EFX{}+\MMS{}$,
    then for every agents $i,j \in N$,
    and iteration $t \in [m]$
    such that $j$ envies $i$ and $|A^t_i| \le |A^t_j|$,
    there is no path
    $p = (i_1, \dots, i_r)$ in
    $G_{envy}(A^{t-1},N)$ such that 
    $i = i_1$ and $j = i_r$.
\end{claim_app_EFX}
\begin{proof}
    We will prove that
    for every agents $i,j \in N$,
    if $t \in [m]$
    is the first iteration in which $j$ envies $i$ and $|A^t_i| \le |A^t_j|$,
    then in iteration $t-1$, it holds that:
\begin{itemize}
        \item[(a)] agent $j$ envies $i$, i.e., $A^{t-1}_i \succ_j A^{t-1}_j$,
        \item[(b)] agent $i$ has one more good than 
    agent $j$ in its bundle, i.e., $|A^{t-1}_i| = s$ and $|A^{t-1}_j| = s-1$
        \item[(c)] all goods of agent $i$
    are unavailable, i.e., $A_i^{t-1} \subseteq X^{t-1}$, and
        \item[(d)] there is no path
    $p = (i_1, \dots, i_r)$ in
    $G_{envy}(A^{t-1},N)$ such that 
    $i = i_1$ and $j = i_r$.
    \end{itemize}
    Observe that point (d) will imply the thesis of the claim, as by \cref{claim:goods:efx:tracking}a and \cref{claim:goods:efx:tracking}c , it is not possible for the potential envy to reappear in our algorithm once it is gone.
    Thus, if there is no potential envy path from $i$ to $j$ in iteration $t$,
    then there is no such path in any later iteration.

    Assume otherwise.
    Fix arbitrary agents $i,j \in N$,
    and take the first iteration  $t \in [m]$
    such that
    $j$ envies $i$ and
    $|A^t_i| \le |A^t_j|$.
    Observe that from \cref{claim:goods:efx:basic}d we know that
    $i$ is before $j$ in the ordering $\sigma$.
    Denote $s = |A^t_i|$.
    Let us now prove each of the points (a)--(d) separately.

    \textit{Point (a).}
    Let us start by showing that actually
    $|A^t_j| = |A^t_i| = s$.
    Assume otherwise, i.e., $|A^t_j| > s$.
    Let $t'$ be the last iteration in which $|A^{t'}_j| = s$.
    This means that in iteration $t' + 1$
    agent $j$ is chosen in line 3 of the algorithm,
    not agent $i$.
    Since $|A^{t'}_j| = s \ge |A^{t'}_i|$ and $\sigma(i) < \sigma(j)$,
    this implies that $i \not \in U^{t'}$ and $j \in U^{t'}$.
    From the minimality of $t$,
    we know that $j$ does not envy $i$ in iteration $t'$, i.e., $A^{t'}_j \succeq_j A^{t'}_i$.
    Hence, $|A^{t'}_i| \neq s$, as otherwise this would violate \cref{claim:goods:efx:basic}c.
    Thus, $|A^{t'}_i| < s$.
    However, since $i \not \in U^{t'}$
    and the fact that, by \cref{claim:goods:efx:nonempty}, set $U$ is never empty, the number of goods of $i$ cannot increase in later iterations.
    But this contradicts the fact that $|A^t_i|=s$.
    
    Let us now show that agent $j$
    envies $i$ in iteration $t-1$ as well.
    Assume otherwise, i.e., the envy appeared in iteration $t$.
    From \cref{claim:goods:efx:basic}c this means that
    $i$ received a new good in line 7 of iteration $t$,
    which means that $|A^{t-1}_i| = s-1$ and $|A^{t-1}_j|=s$.
    Consequently, this implies that $i \in U^{t-1}$ 
    from the fact that $U$ is never empty (\cref{claim:goods:efx:nonempty}).
    Furthermore, from \cref{claim:goods:efx:basic}b we get that $j \not \in U^{t-1}$.
    Let $t'$ be the last iteration in which $j \in U^{t'}$.
    By \cref{claim:goods:efx:basic}b, $|A^{t'}_j| \le |A^{t'}_i| \le |A^{t-1}_i| = s-1$.
    Observe that between iterations $t'+1$ and $t$ agent $j$ cannot receive new goods in line 7 of the algorithm
    (as $i$ should receive a good before $j$).
    Hence,
    $|A^{t' + 1}_j| = |A^{t}_j|$ and the latter is equal to $s$ as we have shown in the previous paragraph.
    Thus, in iteration $t'$ we have
    $|A^{t'}_j| = |A^{t'}_i| = s-1$.
    But then, it is agent $i$ that should receive a new goods in line 7 of iteration $t'+1$ not $j$---a contradiction.

    \textit{Point (b).}
    Since, by (a), agent $j$ envies $i$ in iteration $t-1$,
    then by minimality of $t$,
    this means that $|A^{t-1}_i| > |A^{t-1}_j|$.
    Since in each single iteration,
    the number of goods increases by one for one agent
    and stays the same for all others,
    it must be that
    $|A^{t-1}_i| = s$ and $|A^{t-1}_j| = s -1$.

    \textit{Point (c).}
    Now let us show that at the end of iteration $t-1$ all goods of agent $i$ are unavailable.
    Assume for contradiction that 
    $A^{t-1}_i \not \subseteq X^{t-1}$.
    By $k^*$ let us denote the last position
    for which $s_j(k^*, A^{t-1}_j)>0$.
    Since, by (a), $j$ envies $i$
    there must be a position $\bar{k} \in \mathbb{N}$
    such that
    $s_j(\bar{k}, A^{t-1}_i) > s_j(\bar{k}, A^{t-1}_j)$
    and $s_j(k, A^{t-1}_i) = s_j(k, A^{t-1}_j)$,
    for every $k < \bar{k}$.
    Now, let us consider two cases
    based on relation between $k^*$ and $\bar{k}$.

    If $\bar{k} \ge k^*$, then
    since $j$ has only one less good than $i$,
    $\bar{k}$ is the only position on which
    score vectors $s_j(A^{t-1}_i)$ and $s_j(A^{t-1}_j)$ differ.
    Specifically,
    $s_j(\bar{k}, A^{t-1}_i) = s_j(\bar{k}, A^{t-1}_j) + 1$
    and $s_j(k, A^{t-1}_i) = s_j(k, A^{t-1}_j)$,
    for every $k < \bar{k}$.
    Take arbitrary $i$'s available good,
    $g \in A^{t-1}_i \setminus X^{t-1}$.
    From \cref{claim:goods:efx:tracking}b we know that $g$
    must be in the $\bar{k}$-th or $k^*$-th
    indifference class of agent $j$.
    In line 7 of iteration $t$ agent $j$ receives
    the most preferred available good,
    which, since $g$ is available,
    has to be in $\bar{k}$-th or earlier indifference class.
    Thus, $s_j(A^{t}_j) \ge_{lex} s_j(A^{t}_i)$,
    but this contradicts the fact that $j$ envies $i$.

    Now, consider the case in which 
    $\bar{k} < k^*$.
    By \cref{claim:goods:efx:tracking}b,
    all $i$'s goods in $\bar{k}$-th
    and earlier indifference classes
    of $j$ are unavailable.
    Thus, if we remove all available goods
    we still have
    $s_j(A^{t-1}_i \cap X^{t-1}) >_{lex} s_j(A^{t-1}_j)$.
    Now, let $t' < t-1$ be some iteration in which
    $|A^{t'}_j| = |A^{t'}_i| = |A^{t-1}_i \cap X^{t-1}|$
    (from \cref{claim:goods:efx:basic}b and \cref{claim:goods:efx:nonempty}
    we know that there exists at least one).
    By \cref{claim:goods:efx:tracking}c,
    $A^{t-1}_j \succeq_j A^{t'}_j$,
    thus
    $A^{t-1}_i \cap X^{t-1} \succ_j A^{t'}_j$.
    On the other hand,
    from the minimality of $t$
    we know that $j$ does not envy $i$
    in iteration $t'$, i.e.,
    $A^{t'}_j \succeq_j A^{t'}_i$.
    Since $j$ weakly prefers $A^{t'}_j$
    over $A^{t'}_i$ but not over $A^{t-1}_i \cap X^t$
    and we know that $|A_i^{t'}| = |A^{t-1}_i \cap X^{t-1}|$,
    there must be a good in the latter but not former,
    $g \in (A^{t-1}_i \cap X^{t-1}) \setminus A^{t'}_i$,
    that $j$ strictly prefers
    over one of its goods $g' \in A^{t'}_j$.
    However, since $g$ is assigned to agent $i$
    in some iteration later than $t'$,
    $g$ is available in $t'$.
    But then the fact that $g \succ_j g'$ 
    contradicts \cref{claim:goods:efx:tracking}b.
    Therefore, all of the goods
    in $A^{t-1}_i$ are unavailable.
    
    \textit{Point (d)}.
    In order to prove this point,
    let us take minimal iteration $t \in [m]$ for which there are agents $i,j \in N$ such that $j$ envies $i$, $|A^t_j| \ge |A^t_i|$ and there is a
    path $p = (i_1, i_2, \dots, i_r)$ in $G_{envy}(A^{t-1},N)$ such that $i = i_1$ and $j = i_r$.
    We will show by induction that each agent in $\{i_1, i_2, \dots, i_r\}$ has $|A_i^t|=s$ unavailable goods in iteration $t-1$.
    This will lead to contradiction as from (b) we know that agent $j = i_r$ has $s-1$ goods in total.
    
    The basis of induction follows from (c).
    For the inductive step, assume that $i_u$ has
    $s$ unavailable goods
    for arbitrary $u \in [r-1]$.
    Observe that since in iteration $t-1$ we have that $j \in U^{t-1}$ and it has $s-1$ goods, then by \cref{claim:goods:efx:basic}b,
    no agent has more than $s$ goods.
    Thus, all of the goods of $i_u$ are unavailable.
    Hence, by \cref{claim:goods:efx:unavailable},
    either $i_u$ actually envies $i_{u+1}$ or
    $i_{u+1}$ also has $s$ unavailable goods.
    In the former case, by minimality of $t$,
    we get that there is no potential envy path from $i_{u+1}$ to $i_u$ in 
    $G_{envy}(A^{t-2},N)$.
    Since $j$ potentially envies $i$ in $t-2$ (as otherwise it would not envy it in $t-1$), this means that there is no
    potential envy path from $i_{u+1}$ to $j$ in $G_{envy}(A^{t-1},N)$,
    which contradicts our assumptions.
    Hence, $i_{u+1}$ has $s$ unavailable goods.
    Thus, by induction, we obtain that $j$
    also has $s$ unavailable goods,
    but that contradicts the fact
    that $|A^{t-1}_j|=s-1$, which concludes the proof.
\end{proof}

%%% %%% %%% %%% %%%
%    CLAIM MMS    %
%%% %%% %%% %%% %%%

Next, we show the key result 
for the proof of \MMS{}.
Specifically, we consider an arbitrary agent $i$ and iteration $t$ such that all goods in the first $k$ indifference classes of agent $i$ are unavailable.
We show that in such a case
the score vector of $i$'s bundle is lexicographically as large as its MMS vector up to first $k$ coordinates (a).
Moreover, if all of $i$'s goods are from its first $k$ classes and there is an agent with less goods than $i$, then $i$ strictly prefers its bundle to its MMS (b).
Finally, if the first $k$ positions of the score of $i$'s bundle are exactly the same as the first $k$ positions of $i$'s MMS,
then we can identify a certain subset of agents that are not in $U$ with a specific number of goods in total (c).
To this end, let us introduce some additional notation.
For an agent $i$, let us denote set
$\rhd_i(1 \colon k)=\rhd_i(1)\cup \dots \cup \rhd_i(k)$
and vectors $s_i(1 \colon k,A^t_i)=(s_i(1,A^t_i),\dots,s_i(k,A^t_i))$
and 
$\MMS{}_i(1 \colon k)=(\MMS{}_i(1),\dots,\MMS{})_i(k))$.

\begin{claim_app_EFX}
    \label{claim:goods:efx:mms}
    If fairness criterion $F=\MMS{}$ or $F=\EFX{}+\MMS{}$, then for every iteration $t \in [m]$, agent $i \in N$, and $k \in \mathbb{N}$ such that $\rhd_i(1 \colon k) \subseteq X^t$, it holds that:
\begin{itemize}
        \item[(a)] $s_i(1 \colon k, A^t_i) \ge_{lex} \MMS{}_i(1 \colon k)$,
        \item[(b)] if $s_i(k', A^t_i) = 0$ for every $k' > k$ and
        there is an agent $j \in U^t$ such that
        $|A^t_j| < |A^t_i|$,
        then $A^t_i >_{lex} \MMS{}_i$,
        \item[(c)] if $s_i(1 \colon k, A^t_i) = \MMS{}_i(1 \colon k)$ and $i \in U^t$,
        then
        there is a subset of agents $L \subseteq N \setminus U^t$ such that $|L|=n - r_{k+1}$ and
        \[
        \sum_{j \in L} |A^t_j| =
            \sum_{u = 1}^k s_i(u) - r_{k+1} \cdot \sum_{u = 1}^k \MMS{}_i(u),
        \]
        where, as in \cref{prop:goods:mms:threshold},
        $r_1 = n$ and $r_{u+1} = r_u - (s_i(u) - \MMS{}_i(u)\cdot r_{u})$,
        for every $u \in [k_i - 1]$.
    \end{itemize}
\end{claim_app_EFX}
\begin{proof}
    We will prove the claim by induction on the iteration number $t$.
    Observe that for $t=0$,
    for every agent $i$ we have $\rhd_i(1)\not \subseteq X^0$.
    Hence, the thesis holds vacuously.
    Thus, let us take $t>0$
    and assume that for every $t' \in [t-1]$
    the thesis holds.
    In what follows, we will show that each point (a)--(c) also holds in iteration $t$.

    \textit{Point (a)}.
    Assume otherwise, i.e., there is $k \in \mathbb{N}$ such that $\rhd_i(k') \subseteq X^t$ for every $k' \le k$,
    but $s_i(k', A^t_i) =_{lex} \MMS{}_i(k')$,
    for every $k' \in [k-1]$, and
    $s_i(k, A^t_i) < \MMS{}_i(k)$.
    Observe that since the point (a) was satisfied in iteration $t-1$,
    it must be that there is $k^* \in [k]$
    such that $\rhd_i(k^*) \not \subseteq X^{t-1}$.
    Since $s_i(A^t_i) <_{lex} \MMS{}_i$, we know that $i \in U^t$
    and, by Claim~\ref{claim:goods:efx:basic}a, $i \in U^{t-1}$.
    Thus, from (c) for iteration $t-1$ we get that
    there is a set $L \subseteq N \setminus U^{t-1}$
    such that $|L| = n - r_{k^*}$ and
    \begin{equation}
    \label{eq:goods:efx:mms:1}
        \sum_{j \in L} |A^t_j| =
        \sum_{u = 1}^{k^*-1} s_i(u) - r_{k^*} \cdot \sum_{u = 1}^{k^*-1} \MMS{}_i(u).
    \end{equation}
    
    Let us denote $s = |A^{t}_i|$.
    Since $i \in U^{t}$,
    by \cref{claim:goods:efx:basic}b,
    no agent has more than $s+1$ goods.
    Now, let us consider two cases, based on whether $k^*=k$ (1) or $k^* < k$ (2).

    \textit{Case (a.1)}
    Assume $k^*=k$.
    Then, since by \cref{claim:goods:efx:basic}b $s_i(k,A^t_i)=0$ for every $k' \ge k$, we have
    \begin{equation}
    \label{eq:goods:efx:mms:2}
    |A^t_i| = \sum_{u=1}^{k} s_i(u,A^t_i)  < \sum_{u = 1}^{k} \MMS{}_i(u),
    \end{equation}
    while for every $j \in N \setminus L \setminus \{i\}$,
    we have
    \begin{equation}
    \label{eq:goods:efx:mms:3}
    |A^t_j| \le  |A^t_i| + 1 \le \sum_{u = 1}^{k} \MMS{}_i(u).
    \end{equation}
    Adding inequalities \eqref{eq:goods:efx:mms:1}, \eqref{eq:goods:efx:mms:2}, and \eqref{eq:goods:efx:mms:3} sidewise together, we get
    \(
    \sum_{j \in N} |A^t_j| <
        \sum_{u = 1}^{k-1} s_i(u) - r_{k} \cdot \sum_{u = 1}^{k-1} \MMS{}_i(u)
        + r_{k} \cdot \sum_{u = 1}^{k} \MMS{}_i(u),
    \)
    which gives
    \begin{align*}
    \sum_{j \in N} |A^t_j| &<
        \sum_{u = 1}^{k-1} s_i(u) + r_{k} \cdot \MMS{}_i(k) \le
        \sum_{u = 1}^{k} s_i(u),
    \end{align*}
    where the last inequality comes from the definition of the sequence $r_1,\dots,r_{k_j}$.
    However, this stands in a contradiction to the fact that all items from $\rhd_i(1),\dots,\rhd_i(k)$ are unavailable, hence allocated.

    \textit{Case (a.2)}.
    If $k^* < k$, then
    by \cref{claim:goods:efx:tracking}b,
    it must be that $s_i(k',A^{t-1}_i) = 0$, for every $k' > k^*$.
    Also, since some goods from $\rhd_i(k^*)$ are still available after iteration $t-1$, if it is agent $i$ that is chosen in line 3 of iteration $t$, it will receive a good from this class.
    Thus, also $s_i(k',A^{t}_i) = 0$, for every $k' > k^*$.

    Thus, we have,
    \begin{equation}
    \label{eq:goods:efx:mms:4}
    |A^t_i| = \sum_{u = 1}^{k^*} s_i(u, A^t_i) = \sum_{u = 1}^{k^*} \MMS{}_i(u),
    \end{equation}
    while for every $j \in N \setminus L \setminus \{i\}$,
    we have
    \begin{equation}
    \label{eq:goods:efx:mms:5}
    |A^t_j| \le |A^t_i| + 1 = 1 + \sum_{u = 1}^{k^*} \MMS{}_i(u).
    \end{equation}
    Adding equation \eqref{eq:goods:efx:mms:1} and inequalities \eqref{eq:goods:efx:mms:4} and \eqref{eq:goods:efx:mms:5} sidewise together, we get
    \begin{equation}
    \label{eq:goods:efx:mms:a:6}
    \sum_{j \in N} |A^t_j| \le
        \sum_{u = 1}^{k^*-1} s_i(u) + r_{k^*} \cdot \MMS{}_i(k^*) + (r_{k^*}-1).
    \end{equation}
    From the definition of the sequence $r_1,\dots,r_{k_j}$ we have that
    \begin{align*}
        \sum_{u=k^*}^k s_i(u) &=
        \sum_{u=k^*}^k ( r_u - r_{u+1} + r_u \cdot \MMS{}_i(u)) \\ &\ge
        r_{k^*} \cdot \MMS{}_i(k^*) + r_k \cdot \MMS{}_i(k) + \sum_{u=k^*}^k ( r_u - r_{u+1}) \\ &\ge
        r_{k^*} \cdot \MMS{}_i(k^*) + r_k \cdot \MMS{}_i(k) + r_{k^*} - r_k \\ &\ge
        r_{k^*} \cdot \MMS{}_i(k^*) + r_{k^*},
    \end{align*}
    where the last inequality comes from the fact that $s_i(k, A^t_i) < \MMS{}_i(k)$, thus $\MMS{}_i(k) \ge 1$.
    Combining this with inequality~\eqref{eq:goods:efx:mms:a:6}, yields
    \(
    \sum_{j \in N} |A^t_j| <
    \sum_{u=1}^k s_i(k),
    \)
    which contradicts the assumption that $\rhd_i(k')$ is unavailable, thus also allocated, for every $k' \in [k]$.

    %%% B
    
    \textit{Point (b).}
    Assume otherwise, i.e., there is $k \in \mathbb{N}$ such that $\rhd_i(1:k) \subseteq X^t$, agent $j \in U^t$ such that $|A^t_j| <|A^t_i|$,
    and it holds that
    $s_i(A^t_i) \le \MMS{}_i$ and
    $s_i(k',A^t_i) = 0$, for every $k' > k$.
    From (a) we know that this means that
    $s_i(1 \colon k,A^t_i) = \MMS{}_i(1 \colon k)$.

    Observe that since the point (b) was satisfied in iteration $t-1$, it must be that there is $k^* \in [k]$ such that $\rhd_i(k^*) \not \subseteq X^{t-1}$.
    Since, by Claim~\ref{claim:goods:efx:tracking}c, $s_i(A^{t-1}_i) \le s_i(A^t_i) \le \MMS{}_i$, this means that $i \in U^{t-1}$.
    Thus, from (c) for iteration $t-1$, we know that there is $L \subseteq N \setminus U^{t-1}$ such that $|L|=n-r_{k^*}$ and
    \begin{equation}
    \label{eq:goods:efx:mms:b:1}
    \sum_{h \in L} |A^t_h| =
        \sum_{u = 1}^{k^*-1} s_i(u) - r_{k^*} \cdot \sum_{u = 1}^{k^*-1} \MMS{}_i(u).
    \end{equation}
    
    Let us denote $s = |A^{t}_i|$.
    Then, $|A^{t}_j| < s$.
    Since $j \in U^{t}$,
    from \cref{claim:goods:efx:basic}b
    we know that $|A^{t}_j| = s-1$
    and no agent has more than $s$ goods.
    If $k^* < k$, then
    from \cref{claim:goods:efx:tracking}b
    it must be that $s_i(k',A^{t-1}_i) = 0$, for every $k' > k^*$.
    Hence, no matter the relation between $k^*$ and $k$, it holds that
    \[
        s = 
        \sum_{u=1}^k s_i(u, A^{t}_i) = 
        \sum_{u=1}^{k^*} s_i(u, A^{t}_i) =
        \sum_{u=1}^{k^*} \MMS{}_i(u).
    \]
    Therefore,
    \begin{equation}
    \label{eq:goods:efx:mms:b:2}
    |A^t_j| = s-1 = \sum_{u = 1}^{k^*} \MMS{}_i(u) -1,
    \end{equation}
    while for every $h \in N \setminus L \setminus \{j\}$,
    we have
    \begin{equation}
    \label{eq:goods:efx:mms:b:3}
    |A^t_h| \le s = \sum_{u = 1}^{k^*} \MMS{}_i(u).
    \end{equation}
    Adding inequalities \eqref{eq:goods:efx:mms:b:1}, \eqref{eq:goods:efx:mms:b:2}, and \eqref{eq:goods:efx:mms:b:3} sidewise together, we get
    \(
    \sum_{h \in N} |A^t_h| \le
        \sum_{u = 1}^{k^*-1} s_i(u) - r_{k^*} \cdot \sum_{u = 1}^{k^*-1} \MMS{}_i(u)
        + r_{k^*} \cdot \sum_{u = 1}^{k^*} \MMS{}_i(u) - 1,
    \)
    which gives
    \begin{align*}
    \sum_{h \in N} |A^t_h| &<
        \sum_{u = 1}^{k^*-1} s_i(u) + r_{k^*} \cdot \MMS{}_i(k^*) \\ &\le
        \sum_{u = 1}^{k^*} s_i(u).
    \end{align*}
    However, this stands in a contradiction to the fact that all items from $\rhd_i(1),\dots,\rhd_i(k^*)$ are unavailable, hence allocated.

    %%% C

    \textit{Point (c)}.
    Take iteration $t\in [m]$, agent $i \in U^t$, and class $k \in [k_i]$ such that
    $\rhd_i(1 \! : \! k) \subseteq X^t$ and
    $s_i(1 \! : \! k, A^t_i) = \MMS{}_i(1 \! :\! k)$.
    First, let us consider the case in which 
    the first $k$ classes of $i$
    are unavailable also in iteration $t-1$, i.e.,
    $\rhd_i(1 \colon k) \subseteq X^{t-1}$.
    Observe that $U$ does not increase (\cref{claim:goods:efx:basic}a) and
    the agents in $U$ do not receive more goods if $U$ is not empty
    (and since $i \in U^t$, it is not empty).
    Thus, since (c) holds for iteration $t-1$,
    we get that it holds also for $t$.
    Therefore, for the remainder of the proof,
    we will assume that there is $k^* \in [k]$
    such that $\rhd_i(k^*) \not \subseteq X^{t-1}$.

    From Claim~\ref{claim:goods:efx:basic}a and the assumption that $i \in U^t$ we get that $i \in U^{t-1}$.
    Thus, from (c) for iteration $t-1$ we know that $|U^{t-1}| \le r_{k^*}$ and there is $L \subseteq N \setminus U^{t-1}$ such that $|L| = n - r_{k^*}$ and
    \begin{equation}
    \label{eq:goods:efx:mms:c:1}
    \sum_{j \in L} |A^t_j| =
        \sum_{u = 1}^{k^*-1} s_i(u) - r_{k^*} \cdot \sum_{u = 1}^{k^*-1} \MMS{}_i(u).
    \end{equation}

    The fact that $\rhd_i(1 \colon k) \subseteq X^t$, implies that $\sum_{j \in N} |A^t_j \cap \rhd_i(1 \colon k)| = \sum_{u = 1}^k s_i(u)$.
    Combining this with equation~\eqref{eq:goods:efx:mms:c:1}
    implies that among agents in $N \setminus L$
    there are
    $\sum_{u=k^*}^k s_i(u) + r_{k^*} \cdot \sum_{u = 1}^{k^*-1} \MMS{}_i(u)$ goods from $\rhd_i(1 \colon k)$.
    From the definition of the sequence $r_1, \dots, r_{k_i}$, we have that
    \begin{align*}
        \sum_{u=k^*}^k s_i(u) &=
        \sum_{u=k^*}^k ( r_u - r_{u+1} + r_u \cdot \MMS{}_i(u)) \\ &=
        \sum_{u=k^*}^k r_{u} \cdot \MMS{}_i(u) + r_{k^*} - r_{k+1}.
    \end{align*}
    Thus,
    \[
        \sum_{j \in N \setminus L} |A^t_j \cap \rhd_i(1 \colon k)| = 
        r_{k^*} - r_{k+1} + r_{k^*} \cdot \sum_{u = 1}^{k^*} \MMS{}_i(u) +
        \sum_{u = k^* + 1}^{k} r_u \cdot \MMS{}_i(u).
    \]
    Since $\rhd_i(k^*) \not \in X^{t-1}$,
    then by Claim~\ref{claim:goods:efx:tracking}b,
    we know that $s_i(u,A^t_i) =0$ for every $u > k^*$.
    Hence, by the assumption that $s_i(1 \! : \! k, A^t_i) = MMS{}_i(1 \! : k)$,
    we get that $\MMS{}_i(u) = 0$ for every $u \in \{k^*, k^*+1, \dots, k\}$.
    Therefore,
    \begin{equation}
    \label{eq:goods:efx:mms:c:2}
        \sum_{j \in N \setminus L} |A^t_j \cap \rhd_i(1 \colon k)| = 
        r_{k^*} - r_{k+1} + r_{k^*} \cdot \sum_{u = 1}^{k^*} \MMS{}_i(u).
    \end{equation}

    Now, let us denote $s = |A^{t}_i| = \sum_{u=1}^{k_i} s_i(u, A^t_i) = \sum_{u = 1}^{k^*} \MMS{}_i(u)$.
    By \cref{claim:goods:efx:basic}b,
    no agent has more than $s+1 = 1 + \sum_{u = 1}^{k^*} \MMS{}_i(u)$ goods.
    By equation~\eqref{eq:goods:efx:mms:c:2},
    the fact that $|N \setminus L| = r_{k^*}$,
    and a pigeonhole principle kind of argument,
    this means that there are at least $r_{k^*} - r_{k+1}$
    agents with $s + 1$ goods from $\rhd_i(1 \colon k)$.

    Let $j$ be an arbitrary agent with $s+1$ goods from $\rhd_i(1 \colon k)$
    at the end of iteration $t$.
    We will show that necessarily $j \not \in U^t$.
    Observe that all goods of agent $j$ belong to $\rhd_i(1 \colon k)$,
    hence they are unavailable.
    In particular, if $l$ is the last indifference class such that $s_j(l,A^t_j) > 0$,
    then by Claim~\ref{claim:goods:efx:tracking}d we know that $\rhd_j(l) \subseteq X^t$.
    Also, from Claim~\ref{claim:goods:efx:tracking}b we know that
    this is the case for every indifference class of $j$ before $l$.
    Thus, since $i$ has less goods than $j$,
    from (b) we have that $s_j(A^t_j) >_{lex} \MMS{}_j$.
    Hence, if fairness criterion $F=\MMS{}$,
    then $j \not \in U^t$.
    
    Now, let us show that the same holds
    if fairness criterion $F=\EFX{}+\MMS{}$.
    We can show that $s_j(A^t_j) >_{lex} \MMS{}_j$ in the same way,
    hence let us focus on showing that there is no path from $j$ to $i$ in graph $G_{envy}(A^t,U^{t-1})$.
    Assume otherwise, i.e., there exists a path $(i_0,i_1,\dots,i_r)$ in $G_{envy}(A^t,U^{t-1})$ such that $j=i_0$ and $i=i_r$.
    Similarly to the proof of Claim~\ref{claim:goods:efx:equal}c,
    we will inductively show that this would imply
    that every agent in the path has $s+1$ unavailable goods,
    which would contradict the fact that $|A^t_i| = s$.
    The basis of induction we have already established, i.e.,
    agent $j$ has $s+1$ goods and all of them are unavailable.
    For the inductive step, assume that for some $u \in [r-1]$
    agent $i_u$ has $s+1$ unavailable goods.
    Since no agent has more than $s+1$,
    this means that all $i_u$'s goods are unavailable.
    Hence, if agent $i_u$ does not actually envy $i_{u+1}$,
    but only potentially,
    then $i_{u+1}$ has $s+1$ unavailable goods from Claim~\ref{claim:goods:efx:unavailable}.
    Thus, assume that $i_u$ actually envy agent $i_{u+1}$.
    Observe that $i_{u+1}$ has to be in $U^t$ if there is a potential envy path
    from $i_{u+1}$ to $i_r$ in $G_{envy}(A^t,U^{t-1})$.
    Moreover, since $i_u$ envies $i_{u+1}$, from Claim~\ref{claim:goods:efx:basic}d
    we know that $i_{u+1}$ is before $i_u$ in the ordering $\sigma$.
    Hence, by Claim~\ref{claim:goods:efx:basic}b, agent $i_{u+1}$ also has $s+1$ goods.
    But then, from Claim~\ref{claim:goods:efx:equal}c we get that all of the goods of $i_{u+1}$ are unavailable.
    Thus, by induction, we get that $i$ has $s+1$ unavailable goods, which is a contradiction.
    Hence, also in this case, $j \not \in U^t$.
    Let us denote by $L'$ the union of $L$ and the set of such agents in set $N \setminus L$.
    Observe that $|L'| = (n - r_{k^*}) + (r_{k^*} - r_{k+1}) = n - r_{k+1}$.
    Moreover, from the definition of the sequence $r_1,\dots,r_{k_i}$ and the fact that $\MMS{}_i(u)=0$ for every $u \in \{k^*+1, k^*+2,\dots, k\}$ we get that
    \begin{align*}
        \sum_{j \in L' \setminus L} |A^t_j| &= (r_{k^*} - r_{k+1}) \cdot (s+1)\\
        &= (r_{k^*} - r_{k+1}) \cdot (1+\sum_{u=1}^{k} \MMS{}_i(u))\\
        &= (r_{k^*} - r_{k+1})(1 + \MMS{}_i(k^*) + \sum_{u=1}^{k^*-1} \MMS{}_i(u))\\
        &= \sum_{u=k^*}^k s_i(u) - r_{k+1} \cdot \MMS{}_i(k^*) + (r_{k^*} - r_{k+1}) \cdot \sum_{u=1}^{k^*-1} \MMS{}_i(u).
    \end{align*}
    Hence, from equation~\eqref{eq:goods:efx:mms:c:1}, we get
    \begin{align*}
        \sum_{j \in L'} |A^t_j| &= \sum_{j \in L' \setminus L} |A^t_j| + \sum_{j \in L} |A^t_j| \\
        &= \sum_{u = 1}^{k} s_i(u) - r_{k+1} \cdot \sum_{u = 1}^{k^*} \MMS{}_i(u) \\
        &= \sum_{u = 1}^{k} s_i(u) - r_{k+1} \cdot \sum_{u = 1}^{k} \MMS{}_i(u).
    \end{align*}
    Therefore, $L'$ is a desired set of agents.
\end{proof}

%%% %%% %%% %%% %%%
%     PART III    %
%%% %%% %%% %%% %%%

\textbf{\underline{Part 3: Fairness guarantees.}}

In this part of the proof,
we will show how \cref{claim:goods:efx:tracking,claim:goods:efx:basic,claim:goods:efx:unavailable,claim:goods:efx:nonempty,claim:goods:efx:equal,claim:goods:efx:mms} imply the fairness guarantees in the statements (A)--(D).
Let us consider each of the statements separately.

%%% %%% %%% %%%
%      A      %
%%% %%% %%% %%%

\textbf{\underline{(A): EF1.}}
Let us show that if there is no criteria, i.e., $F = null$,
then  the output allocation is \EF{1}. 
To this end, observe that at the end of every iteration, 
$U$ contains all agents, i.e., for every $t \in [m], U^t = N$.

For iteration $t$ and two agents $i,j$ let $e^t_{i,j}$ be a vector such that
\[
    e^t_{i,j}(k) =
    \begin{cases}
        1, & \mbox{if } k \mbox{ is minimal s. t. } s_j(k,A^t_i)>0,\\
        0, & \mbox{otherwise.}
    \end{cases}
\]
Observe that an allocation $A^t$ is \EF{1}, if and only if, for every agents $i,j \in N$ it holds that $s_j(A^t_j) \ge_{lex} s_j(A^t_i) - e^t_{i,j}$.
Let us show that indeed, for every $t \in [m]$, allocation $A^t$ is \EF{1}.
For a contradiction, let us assume that there is $t \in [m]$ such that
$s_j(A^t_i) - e^t_{i,j} >_{lex} s_j(A^t_j)$
for some $i,j \in N$ and let us take the first such $t$.

Let $t' < t$ be the last iteration in which agent $j$ has less goods than in iteration $t$, i.e., $|A^{t'}_j| < |A^{t}_j|$.
This means that in iteration $t'+1$ agent $j$ is chosen in line 3 of the algorithm and receives a good, $g$, in line 5.
Let $k \in [k_j]$ be such that $g \in \rhd_j(k)$ and let $e_k$ be a vector with $1$ on $k$-th position and 0 otherwise.
Then, by \cref{claim:goods:efx:tracking}c,
\begin{equation}
    \label{eq:goods:ef1:1}
    s_j(A^{t}_j) = s_j(A^{t'}_j) + e_k.
\end{equation}

Now, let us compare vectors
$\mathbf{x} = (x(1),\dots,x(k_j)) = s_j(A^{t'}_i) + e_k - e^{t'}_{i,j}$ and $\mathbf{y} = (y(1),\dots,y(k_j))= s_j(A^{t}_i) - e^t_{i,j}$.
Observe that:
\begin{itemize}
    \item[(a)] $|A^t_j| \ge \sum_{u=1}^{k_j} y(u)$ by Claim~\ref{claim:goods:efx:basic}b;
    \item[(b)] $x(u) \ge y(u)$ for every $u < k$, since,
    by \cref{claim:goods:efx:tracking}b,
    all goods from the first $k-1$ classes of $j$ are unavailable $\rhd_j(1:k-1) \subseteq X^t$, which means that between iteration $t'$ and $t$, agent $i$ can only gain new goods from the $k$-th and further indifference classes of $j$;
    \item[(c)] $s_j(u, A^t_j) = 0$ for every $u \in [k]$, by Claim~\ref{claim:goods:efx:tracking}b and the fact that $g \in \rhd_j(k)$ is available in iteration $t-1$; and finally
    \item[(d)] $s_j(A^t_j) \ge_{lex} \mathbf{x}$.
    This holds since from minimality of $t$, we have
    $s_j(A^{t'}_j) \ge s_j(A^{t'}_i) - e^{t'}_{i,j}$.
    Combining this with equation~\eqref{eq:goods:ef1:1}, yields
    $s_j(A^{t}_j) \ge_{lex} s_j(A^{t'}_i) + e_k - e^{t'}_{i,j} = \mathbf{x}$.
\end{itemize}

Finally, let us show that $s_j(A^t_j) \ge_{lex} \mathbf{y}$.
If there is $u \in [k-1]$ such that either
$x(u) > y(u)$ or
$s_j(u,A^t_j) > x(u)$,
then by (b) and (d) we get that $s_j(A^t_j) >_{lex} \mathbf{y}$.
Hence, assume otherwise, i.e., $s_j(u, A^t_j) = x(u) = y(u)$ for every $u \in [k-1]$.
Then, from (c) and (a) we get that
\[
 s_j(k, A^t_j) =
 |A^t_j| - \sum_{u=1}^{k-1} s_j(u, A^t_j) \ge
 \sum_{u=1}^{k_j} y(u) - \sum_{u=1}^{k-1} y(u) \ge
 y(k),
\]
where the equality is only possible if $y(u)=0$ for every $u > k$.
Thus, $s_j(A^t_j) \ge_{lex} \mathbf{y}$,
which contradicts the assumption that $i$ and $j$ violate \EF{1} at the end of iteration $t$.

%%% %%% %%% %%%
%      B      %
%%% %%% %%% %%%

\textbf{\underline{(B): \EFX{}.}}
Let us show that if criteria $F=EFX$,
then the output allocation satisfies \EFX{}.

Assume by contradiction
that for some iteration $t \in \{0,\dots,m\}$
allocation $A^t$ is not \EFX{}
and let us take the first such $t$.
Since for $t = 0$,
empty allocation $A^0$ is \EFX{},
we know that $t > 0$.
This means that there exist
agents $i,j \in N$
and good $g \in A^t_i$
such that $j$ prefers $A^t_i \setminus \{g\}$
over its bundle.
From \cref{claim:goods:efx:basic}d we know that
$i$ is before $j$ in ordering $\sigma$.

Let $t' < t$ be the last iteration
in which agent $i$ had smaller number of goods, i.e.,
$|A^{t'}_i| < |A^{t}_i|$.
This means that in iteration $t' +1$ agent $i$
was chosen in line 4 of the algorithm.
Thus, by Claim~\ref{claim:goods:efx:nonempty}, $i \in U^{t'}$ and
$|A^{t'}_i| = |A^{t'+1}_i| - 1 = |A^{t}_i| - 1$.
Observe that $j \in U^t$ as otherwise
there would be some iteration $t' < t$
in which $j$ did not potentially envy $i$,
so, by \cref{claim:goods:efx:tracking}c,
$j$ could not envy $i$ in iteration $t$.
Hence, since $i$ is before $j$ in ordering $\sigma$,
from \cref{claim:goods:efx:basic}b
we get that in iteration $t'$
agents $i$ and $j$ had equal number of goods, i.e.,
$|A^{t'}_i| = |A^{t'}_j|$.
Let us show that necessarily $t' + 1 \neq t$.
Assume otherwise.
By \cref{claim:goods:efx:equal} and the fact that $i \in U^{t'}$
this means that $j$ does not envy $i$ in iteration $t'$.
Let $k^* \in \mathbb{N}$ be the last position
for which there is a positive number in $j$'s bundle score, i.e.,
$s_j(k^*, A^{t'}_j) > 0$.
By Claim~\ref{claim:goods:efx:tracking}c, $s_j(k^*, A^{t'+1}_j) > 0$ as well.
Then, a good, $g'$, that agent $i$ receives in iteration $t' +1$
has to be in a $k$-th indifference class of $j$ for some $k < k^*$
(otherwise removing any good from
$A^{t'+1}_i$ would eliminate the envy).
However, since $g'$ was available in iteration $t'$,
this contradicts \cref{claim:goods:efx:tracking}b.

Therefore, $t' + 1 < t$ and
$j$ and $i$ do not violate \EFX{} in iteration $t' + 1$.
Since $|A^t_i| = |A^{t'+1}_i|$
and, by \cref{claim:goods:efx:tracking}c, $A^t_j \succeq_j A^{t'+1}_j$,
this means that between iteration $t$ and $t'+1$
agent $i$ exchanged one of its goods for a good $g$
that is preferred by $j$ over one of goods from $A^{t'+1}_j$.
However, this means that $g$ was available at iteration $t'+1$,
which contradicts \cref{claim:goods:efx:tracking}b.
Thus, $A^t$ is \EFX{} for every $t \in [m]$.

%%% %%% %%% %%%
%      C      %
%%% %%% %%% %%%

\textbf{\underline{(C): \MMS{}.}}
Observe that if criteria $F=MMS$,
then the fact that the output allocation satisfies \MMS{}
follows directly from \cref{claim:goods:efx:mms}a:
at the end of the algorithm all goods are unavailable, thus $s_i(A_i) \ge_{lex} \MMS{}_i$, for every agent $i \in N$.

%%% %%% %%% %%%
%      D      %
%%% %%% %%% %%%

\textbf{\underline{(D): \EFX{}+\MMS{}.}}
Let us show that if criteria $F=\EFX{}+\MMS{}$,
then the output allocation satisfies \EFX{} and \MMS{}.
Observe that \MMS{} follows directly from \cref{claim:goods:efx:mms}a:
at the end of the algorithm all goods are unavailable, thus $s_i(A_i) \ge_{lex} \MMS{}_i$, for every agent $i \in N$.

Therefore, let us focus on proving \EFX{}.
Assume by contradiction
that for some iteration $t \in \{0,\dots,m\}$
allocation $A^t$ is not \EFX{}
and let us take the first such $t$.
Since for $t = 0$,
empty allocation $A^0$ is \EFX{},
we know that $t > 0$.
This means that there exist
agents $i,j \in N$
and good $g \in A^t_i$
such that $j$ prefers $A^t_i \setminus \{g\}$
over its bundle.
From \cref{claim:goods:efx:basic}d we know that
$i$ is before $j$ in ordering $\sigma$.

Let $t' < t$ be the last iteration
in which agent $i$ had smaller number of goods, i.e.,
$|A^{t'}_i| < |A^{t}_i|$.
This means that in iteration $t' +1$ agent $i$
was chosen in line 3 of the algorithm.
Thus, by \cref{claim:goods:efx:nonempty},
$i \in U^{t'}$ and
$|A^{t'}_i| = |A^{t'+1}_i| - 1 = |A^{t}_i| - 1$.
Observe that $j \in U^t$ as otherwise
there would be some iteration $t'' < t$
in which $j$ did not potentially envy $i$,
so, by \cref{claim:goods:efx:tracking}c,
$j$ could not envy $i$ in iteration $t$.
Hence, since $i$ is before $j$ in ordering $\sigma$,
from \cref{claim:goods:efx:basic}b
we get that in iteration $t'$
agents $i$ and $j$ had equal number of goods, i.e.,
$|A^{t'}_i| = |A^{t'}_j|$.

Let us prove that $j$ does not envy $i$ in iteration $t'$.
Assume otherwise and observe that, by Claim~\ref{claim:goods:efx:equal}c,
all goods of $j$ are unavailable in iteration $t'-1$.
Hence, by Claim~\ref{claim:goods:efx:tracking}d,
also all indifference classes of $j$ containing these goods are unavailable.
Therefore,
from Claim~\ref{claim:goods:efx:mms}b,
$A^{t'-1}_i >_{lex} \MMS{}_i$.
However, combining this with Claim~\ref{claim:goods:efx:equal}d yields $j \not \in U^{t'-1}$,
which is a contradiction.

Now, we will show that necessarily $t'+1 \neq t$.
Let us denote the good that agent $i$ receives in line 6 of iteration $t'+1$ as $g'$. 
Also, let $k^* \in \mathbb{N}$ be the last position
for which there is a positive number in $j$'s bundle score, i.e.,
$s_j(k^*, A^{t'}_j) > 0$.
By \cref{claim:goods:efx:tracking}b,
$g' \in \rhd_j(k^*:k_j)$.
Since by \cref{claim:goods:efx:tracking}c
$s_j(A^{t'}_j)=s_j(A^{t'+1}_j)$,
this means that removing any good from $A^{t'+1}_i$
would eliminate the envy.
Thus, $i$ and $j$ do not violate \EFX{} in iteration $t'+1$, hence $t'+1 \neq t$.

Therefore, $t' + 1 < t$ and
$j$ and $i$ do not violate \EFX{} in iteration $t' + 1$.
Since $|A^t_i| = |A^{t'+1}_i|$
and, by \cref{claim:goods:efx:tracking}c, $A^t_j \succeq_j A^{t'+1}_j$,
this means that between iteration $t$ and $t'+1$
agent $i$ exchanged one of its goods for a good $g$
that is preferred by $j$ over one of goods from $A^{t'+1}_j$.
However, this means that $g$ was available at iteration $t'+1$,
which contradicts \cref{claim:goods:efx:tracking}b.
Thus, $A^t$ is \EFX{} for every $t \in [m]$.

%%% %%% %%% %%% %%% %%% %%% %%% %%% 
%%%             PO              %%% 
%%% %%% %%% %%% %%% %%% %%% %%% %%% 

\textbf{\underline{Part 4: Pareto optimality.}}

\noindent
Finally, let us prove that the output of our algorithm is always a \PO{} allocation.

To this end, we will show that for each iteration $t \in \{0,\dots,m\}$,
partial allocation $A^t$ is \PO{}.
Assume otherwise.
Let us take the first iteration $t$
such that $A^t$ is not \PO{}.
From \cref{thm:po:aziz}
we know that
there is an alternating path,
$p = (g_0, i_1, g_1, \dots, i_s, g_s)$ and agent $i_0$ such that
$(g_s, i_0) \in A^t$ and
$\psi(i_0, g_0) < \psi(g_s,i_0)$.
Since allocation $A^{t-1}$ is \PO{},
either $(g_s, i_0) \not \in A^{t-1}$ or
path $p$ is not alternating for $A^{t-1}$.
Either way, this means that
at least one of the goods on path $p$
changed the owner in iteration $t$.
Let us take the one such good $g_u$
with the smallest $u \in [s]$.
As $g_u$ changed the owner,
it has to be available in $t-1$,
i.e., $g_u \not \in X^{t-1}$.
Since $u$ is the smallest,
the part of the original path,
$(g_0, i_1, g_1, \dots, i_u, g_u)$,
is still an alternating path in $A^{t-1}$.
Thus, $g_0$ is available in $t-1$ as well.
By \cref{claim:goods:efx:tracking}b,
this means that agent $i_0$ weakly prefers
each of its goods in $A^t$ over $g_0$.
However, this stands in contradiction to the fact that
$(g_s, i_0) \in A^t$ and
$\psi(i_0, g_0) < \psi(g_s,i_0)$.
\end{proof}

\section[Comparison with Aziz et al. 2023]{Comparison with \citetapp{aziz2023possible}}
\label{app:comparison}

In this appendix,
we compare \cref{alg:goods:efx} and the algorithm developed by \citetapp{aziz2023possible}
for the purpose of studying possible fairness
that also achieves \EFX{}, \MMS{} and \PO{} for every weakly lexicographic goods-only instance.
We note that both algorithms were developed independently and there are important differences between them both in terms of the techniques they utilize and the final outcomes they produce (see \cref{ex:goods:difference}).

Our algorithm (in contrast to \citet{aziz2023possible}'s) directly checks a predefined fairness requirement
for the bundle of each agent in each round of the runtime.
Then, the agents with an unfair advantage stop 
receiving more goods
and the priority is given to the worse-off agents.
Such construction has several benefits.

Firstly, the working mechanism behind our 
algorithm is well-structured and easy to understand, 
which is in a sharp contrast with the
more black-box approach of \citet{aziz2023possible}.
Thus, we believe that our formulation of the algorithm 
holds an educational value as it allows for explanation of where the fairness comes from.
    
Secondly, our algorithm is customizable in the sense that
a set of imposed fairness requirements can be modified, which will directly affect 
the outcome of the algorithm.
This in turn 
allows for easy modification of the algorithm to different notions of fairness
(including and beyond the notions studied in this paper).

Finally, the construction of our algorithm introduces the techniques of potential envy and preference graph, which can be found useful on their own.

In contrast, the algorithm of \citetapp{aziz2023possible} uses an exchange graph, which is a directed graph whose vertices are goods and there is an edge $(g, g')$ for $g,g' \in M$ if there is an agent $i \in N$ such that $g \in A_i$ and $g' \succeq_i g$.
The algorithm starts by initializing an empty bundle for each agent.
The agents that can still receive goods are kept in set $N$, which is updated over the run of the algorithm.
In each round, each agent in $N$ (in an arbitrary order) receives its most preferred good out of the unallocated ones or the ones that can be freed if other agents exchange their goods in along the edges of the exchange graph.
After the goods are allocated,
the set $N$ is updated.
To this end, the algorithm looks at the set of goods distributed in this round.
If some agent, $i$, strictly prefers the good received by another agent, $j$, over its own,
then agent $j$ is removed.
Moreover, whenever an agent is removed,
all agents with whom it can exchange goods received in this round are also removed.
The algorithm proceeds this way until all of the goods are allocated.

In addition to the differences in the construction,
the outcomes of the two algorithms can also be significantly different.
Let us consider the following example to illustrate this.

\begin{example} 
    \label{ex:goods:difference}
    Consider the instance given in \cref{ex:goods:efx}.

    \begin{align*}
        1 &: \quad \{g_1, {\tikz[baseline=(char.base)]{\node[shape=circle,draw,inner sep=1pt,line width=0.5pt,color=red] (char) {$\textcolor{black}{\underline{g_2}}$};}}\} \rhd \{g_3, {\tikz[baseline=(char.base)]{\node[shape=circle,draw,inner sep=1pt,line width=0.5pt,color=red] (char) {$\textcolor{black}{g_4}$};}} \} \\
        2 &: \quad {\tikz[baseline=(char.base)]{\node[shape=circle,draw,inner sep=1pt,line width=0.5pt,color=red] (char) {$\textcolor{black}{\underline{g_1}}$};}} 
        \rhd \{g_2, g_3, g_4\} \\
        3 &: \quad g_1 \rhd \{g_2, {\tikz[baseline=(char.base)]{\node[shape=circle,draw,inner sep=1pt,line width=0.5pt,color=red] (char) {$\textcolor{black}{\underline{g_3}}$};}},
        \underline{g_4}\} 
    \end{align*} 
    
    \underline{The outcome of 
    \cref{alg:goods:efx} with $F = \EFX{} + \MMS{}$:} 

    In iteration $1$, agent $1$ 
    receives $g_1$ (see the red highlight in \cref{fig:goods:ex:g_pref}). 
    Observe that good $g_1$ is still available since 
    there exists an alternating path $(g_1, 1, g_2)$ 
    ending in an unallocated good.
    Thus, in iteration $2$, agent $1$ exchanges $g_1$ 
    for $g_2$ and agent $2$ receives $g_1$.
    Then, in the potential envy graph 
    at the end of iteration $2$, i.e., $G_{envy}(A^2, U^1)$,
    agent $3$ is the only agent that forms
    a strongly connected component 
    with no incoming edge, $U_e^2$. 
    Moreover, since for each of agents $1$ and $2$,
    its bundle is above the respective 
    \MMS{} threshold, we remove them from $U$, 
    thus $U^2 = \{3\}$. 
    Therefore, agent $3$ receives goods $g_3$ and $g_4$.  
    The final allocation is underlined.

    \underline{The outcome of the algorithm of
    \citetapp{aziz2023possible}:}
    
    In the first round,
    the algorithm begins by giving
    agent $1$ good $g_1$.
    Next, in the same round,
    since agent $1$ can exchange good $g_1$ for $g_2$, agent $2$ gets good $g_1$ and agent $1$, $g_2$.
    The first round ends,
    with agent $3$ receiving good $g_3$.
    Observe that agent $3$ strictly prefers good $g_1$ over $g_3$, thus agent $2$
    is removed from set $N$, 
    i.e., $N = \{1,3\}$. 
    In the second round, agent $1$ receives good $g_4$.
    The final allocation
    is circled with red.
    
    Observe that although the allocation
    produced by the algorithm of \citetapp{aziz2023possible} is \EFX{}, 
    agent $3$ envies agent $1$.
    In our \cref{alg:goods:efx},
    since agent $1$ received a good from its first indifference class, we know that it will never envy agent $3$, thus we do not give any more goods to agent $1$.
    As a result,
    agent $3$ does not envy agent $1$ and the total number of envy relations between the agents is smaller in the output of our algorithm.
\end{example}

\section{Examples of Running Algorithm 1}
\label{app:goods:examples}

In this section, we analyze several examples 
for the goods-only setting. 

In
\cref{ex:goods:ef1,ex:goods:efxnotmms,ex:goods:mms}, 
we illustrate how  
\cref{alg:goods:efx} runs
with different fairness criteria $F$ in its input
on the same instance.

\begin{example} 
    \label{ex:goods:ef1}
    Consider a goods-only instance with three agents, 
    seven goods and preferences as follows. 
    Each agent partitions the goods into 
    two indifference classes. 
    \begin{align*}
        1 &: \quad \{\underline{g_1}, g_2, \underline{g_3}\} \rhd \{g_4, g_5, g_6, \underline{g_7}\} \\
        2 &: \quad \{g_1, \underline{g_2}, g_3\} \rhd \{g_4, g_5, \underline{g_6}, g_7\} \\
        3 &: \quad \{g_1, g_2, g_3, \underline{g_4}, \underline{g_5}\} \rhd \{g_6, g_7\}
    \end{align*}
    
    \underline{\cref{alg:goods:efx} with $F = null$ 
    (\EF{1} and \PO{}):}

    First, agents $1$, $2$ and $3$ receive 
    goods $g_1$, $g_2$ and $g_3$, respectively. 
    Then, agent $1$ points to $g_3$, and we update 
    the allocation along the alternating path 
    $p = (g_3, 3, g_4)$ and assign $g_3$ to agent $1$.
    In the following iterations, 
    agents $2$ and $3$ receive 
    goods $g_6$ and $g_5$ in order. 
    Finally, agent $1$ receives good $g_7$.
    The final allocation is underlined.

\end{example}
\begin{example} 
    \label{ex:goods:efxnotmms}
    Consider the instance given 
    in \cref{ex:goods:ef1}.
    \begin{align*}
        1 &: \quad \{\underline{g_1}, g_2, \underline{g_3}\} \rhd \{g_4, g_5, g_6, g_7\} \\
        2 &: \quad \{g_1, \underline{g_2}, g_3\} \rhd \{g_4, g_5, g_6, g_7\} \\
        3 &: \quad \{g_1, g_2, g_3, \underline{g_4}, \underline{g_5}\} \rhd \{\underline{g_6}, \underline{g_7}\}
    \end{align*}
    
    \underline{\cref{alg:goods:efx} with $F = EFX$ 
    (\EFX, \PO{}, but not \MMS{}):}

    First, agents $1$, $2$ and $3$ receive 
    goods $g_1$, $g_2$ and $g_3$, respectively. 
    Observe that every agent potentially 
    envies every other agent 
    at the current allocation. 
    Thus, agent $1$ points to $g_3$, and we update 
    the allocation along the alternating path 
    $p = (g_3, 3, g_4)$ and assign $g_3$ to agent $1$.
    Then, agent $3$ is the only agent 
    that is not potentially envied. 
    Therefore, $3$ receives the rest of the goods. 
    The final allocation is underlined. 
    Observe that this allocation is \EFX{}. 
    However, agent $2$'s \MMS{} threshold is not met, 
    as it receives no good from 
    its second indifference class. 

\end{example}

\begin{example} 
    \label{ex:goods:mms}
    Consider the instance given 
    in \cref{ex:goods:ef1}.
    \begin{align*}
        1 &: \quad \{\underline{g_1}, g_2, \underline{g_3}\} \rhd \{g_4, g_5, g_6, g_7\} \\
        2 &: \quad \{g_1, \underline{g_2}, g_3\} \rhd \{g_4, g_5, \underline{g_6}, \underline{g_7}\} \\
        3 &: \quad \{g_1, g_2, g_3, \underline{g_4}, \underline{g_5}\} \rhd \{g_6, g_7\}
    \end{align*}
    
    \underline{\cref{alg:goods:efx} with $F = MMS$ 
    (\MMS{} and \PO{}):}

    First, agents $1$, $2$ and $3$ receive 
    goods $g_1$, $g_2$ and $g_3$, respectively.  
    Then, agent $1$ points to $g_3$, and we update 
    the allocation along the alternating path 
    $p = (g_3, 3, g_4)$ and assign $g_3$ to agent $1$.
    Agent $1$ prefers its bundle to 
    its \MMS{} threshold. However, 
    agents $2$ and $3$ have not yet 
    met their \MMS{}.  
    Thus, agent $2$ receives goods 
    $g_6$ and $g_7$, 
    while agent $3$ gets $g_5$. 
    The final allocation is underlined.

\end{example}

%%%%%%%%%%%%%%%%%%%%%%%%%%%%%%%%%%%%%
%%%%%%%%%%%%%% TWO AGENTS %%%%%%%%%%%
%%%%%%%%%%%%%%%%%%%%%%%%%%%%%%%%%%%%%

\section{EFX and MMS Relation}
\label{app:efx:mms:relation}

In the current section, we present 
examples for the relation between
\EFX{} and \MMS{} under both 
goods-only and chores-only settings.

The fact that \EFX{} (even with \PO{})
does not necessarily imply 
\MMS{} in the goods-only setting is visible in \cref{ex:goods:efxnotmms}. 
In contrast, as shown in \cref{subsec:chores:efxtomms},
\EFX{} implies \MMS{} for chores-only instances. 
However, \cref{ex:chores:mmsponotefx}
illustrates that 
\MMS{} coupled with \PO{} does not imply \EFX{}
for this setting, even when 
agents have at most two indifference classes. 

\begin{example} 
    \label{ex:chores:mmsponotefx}
    Consider a chores-only instance 
    with three agents, 
    three chores and preferences as follows. 
    Agent 3 has all chores as one class while 
    the others partition the chores into two indifference classes. 
    \begin{align*}
        1 &: \quad c_1 \rhd \{\underline{c_2}, \underline{c_3}\} \\
        2 &: \quad c_1 \rhd \{c_2, c_3\} \\
        3 &: \quad \{\underline{c_1}, c_2, c_3\}
    \end{align*}
     
    Observe that the underlined allocation is 
    \MMS{} and \PO{}. 
    But agent $1$ envies agent $2$
    more than \EFX{}. 
\end{example}

\section[Correctness of Algorithm 3]{Correctness of Algorithm~\ref{alg:chores:ef1}}
\label{app:chores:ef1}
\setcounter{claim}{0}
In this appendix, we show that
our \cref{alg:chores:ef1} indeed always finds an \EF{1} and \PO{} allocation.

Helpful for that will be
the observation made by \cite[Lemma 4.14]{ebadian2022fairly}
that the PO characterization in \cref{thm:po:aziz}
can be also adapted to chores-only instances.
For completeness, we provide its proof in \cref{app:chores:po}.

\begin{restatable}{theorem}{thmchorespo}\label{thm:po:ebadian}
Given a weakly lexicographic
    chores-only instance $(N,M,\rhd)$,
    an (possibly partial) allocation $A$ is PO,
    if and only if,
    there is no alternating path,
    $p = (c_0, i_1, c_1, \dots, i_s, c_s)$ and agent $i_0$ such that
    $(c_s, i_0) \in A$ and
    $\psi(i_0, c_0) > \psi(c_s,i_0)$.
\end{restatable}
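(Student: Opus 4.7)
The characterization mirrors the goods case (\cref{thm:po:aziz}) with preference directions reversed, and I would adapt that proof to chores, proving each direction by contrapositive.

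For the ``if'' direction (contrapositive: the existence of such an alternating path and $i_0$ implies $A$ is not PO), I would construct a Pareto improvement $A'$ of $A$ by rotating chores along the cycle $i_0 \to c_0 \to i_1 \to \cdots \to c_s \to i_0$:
\[
A' \;=\; \bigl(A \setminus \{(c_{r-1}, i_r) : r \in [s]\} \setminus \{(c_s, i_0)\}\bigr) \cup \{(c_r, i_r) : r \in [s]\} \cup \{(c_0, i_0)\}.
\]
Since the chores in $p$ are pairwise distinct, $A'$ is a well-defined allocation. For each agent $j$, the score change $s_j(A'_j) - s_j(A_j)$ decomposes as a sum of elementary vectors $e_{\psi(c_{r-1}, j)} - e_{\psi(c_r, j)}$ (one per $r \in [s]$ with $j = i_r$) plus $e_{\psi(c_s, j)} - e_{\psi(c_0, j)}$ if $j = i_0$. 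Here $e_k$ denotes the unit vector with $+1$ at coordinate $k$, and since $s_j(k, \cdot) = -|\rhd_j(k) \cap \cdot|$, losing a chore at class $k$ contributes $+e_k$ while gaining one contributes $-e_k$. The alternating-path inequality yields $\psi(c_{r-1}, j) \le \psi(c_r, j)$ for the path summands, and the hypothesis $\psi(i_0, c_0) > \psi(c_s, i_0)$ gives strict inequality for the $i_0$ summand. I would then invoke a short supporting fact: for any list $\{(a_l, b_l)\}_l$ with $a_l \le b_l$, the partial sum up to coordinate $k$ of $\sum_l (e_{a_l} - e_{b_l})$ equals $|\{l : a_l \le k < b_l\}| \ge 0$, so the full vector is lex-nonnegative; moreover, multiset equality $\{a_l\} = \{b_l\}$ combined with $a_l \le b_l$ forces $a_l = b_l$ for all $l$, so the sum is lex-strictly-positive whenever some $a_l < b_l$. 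Consequently every agent in the cycle weakly benefits under $A'$ while $i_0$ strictly benefits, so $A'$ Pareto-dominates $A$.

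For the ``only if'' direction (contrapositive: $A$ not PO implies such a path exists), I would take a Pareto improvement $B$ of $A$ with $|A \triangle B|$ minimal. Fix any $i_0$ with $B_{i_0} \succ_{i_0} A_{i_0}$ and let $k^*$ be the first indifference class at which $s_{i_0}(A_{i_0})$ and $s_{i_0}(B_{i_0})$ differ; by the chores-preference definition, $|\rhd_{i_0}(k^*) \cap A_{i_0}| > |\rhd_{i_0}(k^*) \cap B_{i_0}|$, so I pick $c_s \in \rhd_{i_0}(k^*) \cap (A_{i_0} \setminus B_{i_0})$, ensuring $(c_s, i_0) \in A$ and $\psi(c_s, i_0) = k^*$. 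I would then build the path backwards inductively: given $c_{s-t+1}$ with $t \ge 1$, let $i_{s-t+1}$ be the $B$-owner of $c_{s-t+1}$; since $i_{s-t+1}$ gains $c_{s-t+1}$ at class $\psi(c_{s-t+1}, i_{s-t+1})$ but satisfies $B_{i_{s-t+1}} \succeq_{i_{s-t+1}} A_{i_{s-t+1}}$, a short score-vector argument (the first class at which $i_{s-t+1}$'s bundles differ cannot exceed $\psi(c_{s-t+1}, i_{s-t+1})$, and at that class there must be a net loss) forces $i_{s-t+1}$ to lose some chore $c_{s-t} \in A_{i_{s-t+1}} \setminus B_{i_{s-t+1}}$ with $\psi(c_{s-t}, i_{s-t+1}) \le \psi(c_{s-t+1}, i_{s-t+1})$. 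Iterate; by finiteness and minimality of $|A \triangle B|$, the chain closes cleanly at $i_0$, with the returning chore $c_0$ at a class strictly greater than $k^*$ for $i_0$ (otherwise the extracted cycle would itself not constitute an improvement for $i_0$, and one could adjust $B$ along the cycle to reduce $|A \triangle B|$ while preserving Pareto dominance).

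The main obstacle is the ``only if'' direction. While the forward direction reduces to routine score-vector arithmetic, the reverse requires careful combinatorial analysis to show (i) that minimality of $|A \triangle B|$ forces the backward chain to close as a simple cycle at $i_0$ without branching into independent sub-cycles of chores, and (ii) that the returning chore $c_0$ is at a strictly later class than $c_s$ for $i_0$. Both points hinge on leveraging the minimality condition jointly with the score-vector dynamics along the chain, and ruling out degenerate cases where $A \triangle B$ decomposes into multiple interacting cycles that benefit different agents in incompatible ways.
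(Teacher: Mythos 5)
Your ``if'' direction is correct and is essentially the paper's argument: both construct the same reallocation $A'$ by rotating chores along the cycle and observe that each intermediate agent weakly improves while $i_0$ strictly improves. Your extra care in decomposing the score change into elementary vectors $e_{a}-e_{b}$ with $a\le b$ and checking lex-nonnegativity of the sum is sound (and slightly more detailed than the paper, which asserts the conclusion directly from the definition of an alternating path).

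The ``only if'' direction, however, has a genuine gap, and it is exactly the one you flag yourself. Your backward chain starts from $c_s$ in $i_0$'s first score-differing class and repeatedly follows ``$B$-owner of the current chore $\to$ some chore that owner loses in a weakly earlier class.'' Two things can go wrong. First, the chain need not return to $i_0$ at all: by finiteness it must eventually repeat a chore, but the resulting cycle may live entirely among intermediate agents, so ``the chain closes cleanly at $i_0$'' is not justified and is false in general. Second, even if the chain does return to $i_0$, the returning chore $c_0$ need not lie in a class strictly later than $k^*=\psi(c_s,i_0)$: since the first $k^*-1$ classes of $i_0$ have equal \emph{counts} in $A$ and $B$, agent $i_0$ may gain a chore in some class $k<k^*$ (compensated by losing another chore in that same class), and if the chain closes through such a chore you get $\psi(i_0,c_0)<\psi(c_s,i_0)$, which is the wrong inequality. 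Your proposed repair---rotate the offending cycle back in $B$ to contradict minimality of $|A\,\triangle\,B|$---only works when \emph{every} agent on the cycle swaps within a single indifference class (so that rotating back preserves all scores); if some intermediate agent strictly improves along the cycle, rotating back may push that agent below its $A$-bundle and Pareto dominance is lost. These cases are precisely the ``careful combinatorial analysis'' you defer, so the argument is incomplete as written.

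The paper's proof avoids both problems by a different construction. For a minimal Pareto improvement $A'$ it builds a directed multigraph on the agents whose bundles change, with an edge from $i$ to the $A'$-recipient of $c$ \emph{only} for chores $c$ in $i$'s earliest class containing a given-away chore, $x_i$. Every vertex has out-degree at least one, so \emph{some} cycle exists; Pareto dominance forces every chore received by an agent $i$ on the cycle to lie in class $x_i$ or later (otherwise $i$ would prefer $A$), which yields the alternating-path inequalities for free; and minimality rules out the all-indifferent cycle, so some agent on the cycle strictly prefers its received chore and serves as $i_0$. Crucially, the theorem only asks for \emph{some} agent and path, so there is no need to steer the cycle back to a prescribed strictly-improving agent---which is the step your construction cannot guarantee. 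If you want to keep your backward-chain idea, you should restrict, as the paper does, each agent's outgoing choice to a chore lost in its earliest given-away class and then argue about an arbitrary extracted cycle rather than insisting on closure at $i_0$.
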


Let us now move to the proof of the correctness of \cref{alg:chores:ef1}.

\thmchoresefone*

\begin{proof}
For every $t \in [m]$,
by $A^t$, we will denote the allocation
we have at the end of the 
$t$-th iteration of
the main loop of the algorithm (lines 2--7).
By $A^0$, we denote the initial empty allocation.

The proof is split in two parts
in which we focus on showing that the output allocation
is \EF{1} (part 1) and \PO{} (part 2).

\textbf{\underline{Part 1: The output allocation is EF1.}}

Let $e_k$ denote a vector with all zeros
except for 1 in the $k$-th position.
Also, let $X^t$ be the set of all unavailable
chores at the end of iteration $t$,
i.e., allocated chores from which
there is no alternating path
ending in an unallocated chore.
Let us show two claims that correspond to
\cref{claim:goods:efx:tracking}a and
\cref{claim:goods:efx:tracking}b
from the proof of \cref{thm:goods:efx}.
First, let us show that unavailable chores
do not become available.
\begin{claim_app_EF1}
    \label{claim:chores:ef1:available}
    For every iterations $t,t' \in [m]$
    such that $t < t'$, it holds that
    $X^t \subseteq X^{t'}$.
\end{claim_app_EF1}
\begin{proof}
    The proof is almost identical to that of
    \cref{claim:goods:efx:tracking}a in the proof of \cref{thm:goods:efx}.
    It suffices to show that $X^{t-1} \subseteq X^t$,
    for every $t \in [m]$.
    Take an arbitrary chore $c \in X^{t-1}$.
    Observe that every chore $c'$ such that
    there is an alternating path starting in $c$ and ending in $c'$ has to be unavailable.
    Thus, all of these chores
    will not change the ownership in iteration $t$.
    Thus, they will remain unavailable,
    i.e., $c \in X^t$.
\end{proof}

The second claim ensures that agents
weakly prefer their chores over the available ones.

\begin{claim_app_EF1}
    \label{claim:chores:ef1:best}
    For every agent $i \in N$,
    iteration $t \in [m]$,
    and chore $c \in A^t_i$,
    it holds that
    $c \succeq_i c'$,
    for every $c' \in M \setminus X^{t-1}$.
\end{claim_app_EF1}
\begin{proof}
    The proof is similar to that of
    \cref{claim:goods:efx:tracking}b
    in the proof of \cref{thm:goods:efx}.
    Assume by contradiction
    that there exists 
    $c \in M \setminus X^{t-1}$
    such that $c \succ_i c'$
    for some $c' \in A_i^t$.
    Let $k$ be such that $c' \in \rhd_i(k)$
    and let $t'\in [m]$ be the first iteration
    in which $i$ received a chore, $c''$,
    from its $k$-th or earlier indifference class.
    Thus, $t' \le t$.
    Since all chores in $A_i^{t'-1}$
    are preferred over chore $c''$,
    agent $i$ could not obtain it in line 5
    of the algorithm,
    by the definition of the alternating path.
    Hence, $i$ received it in line 6.
    However, by \cref{claim:chores:ef1:available},
    chore $c$ was also available at the end of iteration $t'-1$.
    Therefore, since $c \succ_i c' \succeq_i c''$,
    agent $i$ should receive $c$
    instead of $c''$---a contradiction.
\end{proof}

Next,
let us show how the score vectors
of agents are changing as the algorithm progresses.

\begin{claim_app_EF1}
    \label{claim:chores:ef1:scores}
    For every agent $i \in N$
    and iteration $t \in [m]$
    it holds that
    \[
        s_i(A_i^{t}) =
        \begin{cases}
            s_i(A_i^{t-1}), \quad \mbox{if i is not chosen in line 3 of iteration t} \\
            s_i(A_i^{t-1}) - e_k, \quad \mbox{otherwise} \\
        \end{cases}
    \]
    where $k \in [k_i]$ is the number of
    the last indifference class of $i$
    with available chores, i.e.,
    $s_i(k, M \setminus X^t) > 0$.
\end{claim_app_EF1}
\begin{proof}
First, consider iteration $t \in [m]$
such that $i$ is not chosen in line 3.
Then, the bundle of $i$ can change
only due to an alternating path.
By definition of an alternating path
agent $i$ cannot get worse off, i.e.,
$s_i(A_i^{t}) \ge_{lex} s_i(A_i^{t-1})$.
Assume by contradiction that
$s_i(A_i^{t}) >_{lex} s_i(A_i^{t-1})$.
This means that there is a chore
$c \in A_i^{t-1} \setminus A_i^{t}$
that got exchanged for
$c' \in A_i^t \setminus A_i^{t-1}$
such that $c' \succ_i c$.
This means that $c'$ was available
at the end of iteration $t-1$,
so, by \cref{claim:chores:ef1:available} also $t-2$.
However, this contradicts \cref{claim:chores:ef1:best}.
Hence, $s_i(A_i^{t}) = s_i(A_i^{t-1})$.

Now, if in iteration $t \in [m]$
agent $i$ is chosen in line 3,
then in line 6 it receives 
an available chore from its
last possible indifference class.
The thesis follows.
\end{proof}

Now, let us move to the main observation of the proof
that is \cref{claim:chores:ef1:just_before}.
Intuitively, the following claim guarantees that an agent
that is just about to pick a chore, $i$,
does not envy any other agent.
However, it is even a bit stronger than that.
Denote by $k$ the first indifference class of agent $i$
from which $i$ has some chores, i.e., $s_i(k,A^t_i)>0$.
Then, the claim guarantees that agent $i$
would not envy other agents,
even if we move all allocated chores from 
the first $k$ indifference classes of $i$ to the $k$-th class of $i$.
For this purpose, some additional notation will be helpful.
For every agent $i \in N$,
indifference class $k \in [k_i]$,
and a subset of chores $X \subseteq M$,
let us denote the score vector of $i$
with the first $k$-th positions merged as
\(
    s^k_i(X) = (\sum_{u \in [k]} s_i(u),
        s_i(k+1), \dots,
        s_i(k_i - k + 1)).
\)

\begin{claim_app_EF1}
    \label{claim:chores:ef1:just_before}
    For every agent $i \in N$,
    iteration $t \in [m]$,
    and $k \in [k_i]$
    such that agent $i$ is chosen in line 3
    of the algorithm in iteration $t+1$
    and $k$ is the first indifference class of $i$
    such that $s_i(k, A^t_i) > 0$
    or $k=k_i$ if $s_i(A^t_i) = (0,\dots,0)$,
    it holds that
    \(
        s_i^k(A_i^t) \ge_{lex} s_i^k(A_j^t),
    \)
    for every agent $j \in N$.
\end{claim_app_EF1}
\begin{proof}
    Assume otherwise and take the first iteration $t$
    for which there exist agents $i, j \in N$
    such that $i$ is chosen in line 3 of the algorithm
    in iteration $t+ 1$,
    and 
    \(
        s_i^k(A_j^t) >_{lex} s_i^k(A_i^t).
    \)

    Observe that $t > n$
    (otherwise, agent $i$ has no chores
    at the end of iteration $t$).
    Let $t' = t - n$.
    Note that agent $i$ is chosen in line 3 of the algorithm in iteration $t'+1$ and,
    by \cref{claim:chores:ef1:scores}, receives a chore from its $k$-th indifference class in line 6.
    Thus, from the minimality of $t$,
    we get that
    \(
        s_i^{k'}(A_i^{t'}) \ge_{lex} s_i^{k'}(A_j^{t'}),
    \)
    where $k'$ is the first indifference class such that
    $s_i(k',A^{t'}_i) > 0$.
    Since $k' \ge k$, this implies that
    \begin{equation}
        \label{eq:claim:chores:ef1:just_before}
        s_i^{k}(A_i^{t'}) \ge_{lex} s_i^{k}(A_j^{t'}).
    \end{equation}

    Since in iteration $t'+1$ agent $i$ gets a chore from its $k$-th class,
    all of the chores from its later classes are no longer available.
    Thus, the chores from these classes cannot change owners between iteration $t'$ and $t$.
    In particular, agent $j$ has the same chores
    from these classes in $t'$ and $t$.
    Since in line 3 of some iteration
    between $t'+1$ and $t$
    agent $j$ receives an additional chore,
    which has to be
    in the first $k$ indifference classes of $i$,
    we get that
    \(
        s_i^{k}(A_j^{t}) = s_i^{k}(A_j^{t'}) - e_1.
    \)
    Therefore, by \cref{claim:chores:ef1:scores} and equation~\eqref{eq:claim:chores:ef1:just_before},
    \[
        s_i^k(A_i^t) =
        s_i^{k}(A_i^{t'}) - e_1 \ge_{lex}
        s_i^{k}(A_j^{t'}) - e_1 =
        s_i^{k}(A_j^{t}),
    \]
    which is a contradiction.
\end{proof}

Now, let us prove the main theorem.
Fix arbitrary $i,j \in N$.
We will show by induction
that for every iteration $t \in [m]$
it holds that
$s_i(A_i^t) + e_k \ge_{lex} s_i(A_j^t)$,
where $k$ is the first indifference class of $i$
such that $s_i(k, A^t_i) > 0$,
or $k=k_i$, if $s_i(A^t_i)=(0,\dots,0)$.
Since there exists a chore $c \in A_i^t$ such that $s_i(A_i^t \setminus \{c\}) = s_i(A^t_i) + e_k$,
this will imply that $A^t$ is \EF{1}.
Clearly, for $t=0$ the thesis holds.
Assume that the thesis holds for some $t \in [m]$
and consider iteration $t + 1$.
Let $k$ be the first indifference class such that $s_i(k, A^{t+1}_i) >0$ or let $k=k_i$ if $i$ does not have chores.
By \cref{claim:chores:ef1:best},
the chores from 
the later than $k$-th indifference classes
of agent $i$ are not available.
Hence,
agent $j$ has the same chores
from these classes in $t$ and $t+1$.
Thus,
\begin{equation}
    \label{eq:chores:ef1}
    s^k_i(A^{t+1}_i) + e_k = s_i^k(A_j^t) \ge_{lex} s_i^k(A_j^{t+1})
\end{equation}
(the inequality is only possible if $j$
received an additional chore in line 6
of the iteration $t+1$).

If $i$ was chosen in line 3 of iteration $t+1$,
then, by \cref{claim:chores:ef1:just_before},
we have
\(
    s_i^k(A_i^t) \ge_{lex} s_i^k(A_j^t).
\)
Let $c$ be a chore
that $i$ received in line 6 of iteration $t$.
Then, by \cref{claim:chores:ef1:scores}
and \cref{eq:chores:ef1},
\[
    s_i^k(A_i^{t+1} \setminus \{c\}) =
    s_i^k(A_i^t) \ge_{lex}
    s_i^k(A_j^t) \ge_{lex}
    s_i^k(A_j^{t+1}).
\]
Since $s_i(k',A_i^{t+1} \setminus \{c\})=0$
for every $k' < k$,
this implies the inductive thesis.

Thus, let us assume that $i$ was
not chosen in line 3 in iteration $t+1$.
Then, by \cref{claim:chores:ef1:scores},
\cref{eq:chores:ef1},
and the inductive assumption,
\[
    s_i^k(A_i^{t+1}) + e_k =
    s_i^k(A_i^t) + e_k \ge_{lex}
    s_i^k(A_j^t) \ge_{lex}
    s_i^k(A_j^{t+1}).
\]
Again, the fact that
$s_i(k',A_i^{t+1} \setminus \{c\})=0$
for every $k' < k$,
implies the inductive thesis.

\textbf{\underline{Part 2: The output allocation is \PO{}.}}

The proof of \PO{} is similar to
the respective part in the proof of \cref{thm:goods:efx}.
We will prove that
for every $t \in \{0,\dots,m\}$,
allocation $A^t$ is PO.
Assume otherwise and take the first $t$
such that $A^t$ is not \PO{}.
By \cref{thm:po:ebadian},
this means that there is an alternating path,
$p = (c_0, i_1, c_1, \dots, i_s, c_s)$
and agent $i_0$ such that $(c_s, i_0) \in A^t$
and $\psi(i_0, c_0) > \psi(c_s,i_0)$.
Since $A^{t-1}$ is \PO{},
either $(c_s, i_0) \not \in A^{t-1}$ or
$p$ is not an alternating path for $A^{t-1}$.
In both cases,
at least one chore on path $p$
belongs to different agents in
$A^t$ and $A^{t-1}$.
Let us take the first on the path
such chore $c_u$, i.e.,
such that $u \in [s]$ is the smallest.
Observe that $c_u$ is available in $t-1$.
Also, the part of path $p$ that is 
$(c_0, i_1, c_1, \dots, i_u, c_u)$
is still an alternating path in $A^{t-1}$
as chores $c_0,\dots,c_{u-1}$
belong to the same agents in $A^t$ and $A^{t-1}$.
Both facts imply, that $c_0$ is available in $t-1$.
Hence, by \cref{claim:chores:ef1:best},
agent $i_0$ weakly prefers
each of its chores in $A^t$ over $c_0$.
However, this contradicts the fact that
$\psi(i_0, c_0) > \psi(c_s,i_0)$
and $(c_s, i_0) \in A^t$.
\end{proof}

\section{PO Characterization for Chores}
\label{app:chores:po}

In this appendix, we present the proof of the \PO{} characterization for weakly lexicographic chores only instances.

\thmchorespo*

\begin{proof}
    If there exists such alternating path
    and agent $i_0$, 
    then let us obtain $A'$ from $A$ 
    by updating it along this path, i.e., 
    \(
        A' =  A \setminus \{(c_s, i_0)\} \setminus
            \{(c_{r-1}, i_r) : r \in [s]\} \cup
            \{(i_r, c_r) : r \in \{0, \dots, s\}\}.
    \)
    From the definition of alternating path,
    for every $r \in [s]$,
    agent $i_r$ weakly prefers
    its bundle in $A'$ over its bundle in $A$.
    Moreover, $i_0$ strictly prefers its bundle in $A'$.
    Thus, we get that $A'$ Pareto dominates $A$.

    Thus, it remains to show that if $A$ is not \PO{},
    then there exists an alternating path 
    and agent $i_0$ in question.
    Take an arbitrary $A'$ that Pareto dominates $A$
    and for which the number of reallocated chores, i.e.,
    \(
        \sum_{i \in N} |A_i \setminus A'_i|,
    \)
    is minimal.
    By $N' = \{ i \in N : A'_i \neq A_i\}$
    let us denote
    the set of agents with different bundles in $A'$ and $A$.

    For every agent $i \in N'$,
    by $x_i$ let us denote the first position
    such that $s_i(x_i, A_i \setminus A'_i)>0$
    (observe that there always exists such $x_i$
    since $i$ cannot just receive chores without
    giving away any).
    For every chore $c \in M$ and agent $i \in N'$ such that 
    $c \in A_i \setminus A'_i$, let us use $b_c(i)$ to denote
    agent $j$ that has $c$ in $A'$, i.e.,  
    $c \in A'_j \setminus A_j$. 
    
    Next, let us consider a directed multigraph, $G = (N',E)$
    in which the set of vertices is $N'$
    and for every agent $i \in N'$
    and chore $c \in \rhd(x_i, A_i \setminus A'_i)$
    we put an edge $(i, b_c(i))$.
    Intuitively, graph $G$ denotes who has to give the chores to whom to convert $A$ to $A'$
    (where we take into account only the most important chores from the perspective of the giver).

    Observe that every vertex in $G$ has an outgoing edge.
    Hence, there has to be a cycle, $C = (i_0,i_1,i_2,\dots,i_s)$.
    For every $r \in [s]$,
    let $c_{r-1}$ be a chore that is "given" by $i_r$ to $i_{r-1}$
    and $c_s$ by $i_0$ to $i_s$, i.e., 
    $b_{c_{r-1}}(i_r) = i_{r-1}$ and $b_{c_s}(i_0) = i_s$.
    Fix $r \in [s]$.
    Since $c_{r-1}$ is in the $x_{i_r}$-th indifference class of agent $i_r$
    chore $c_{r}$ has to be in the class $x_{i_r}$ or further
    otherwise agent $i_r$
    would strictly prefer its bundle in $A$ over $A'$.
    Similarly, chore $c_{s}$ has to be in the class $x_{i_0}$ or further for agent $i_0$.
    If for every $r \in \{0,\dots,s\}$
    it would be in exactly $x_{i_r}$-th class, i.e., not higher,
    then consider $A''$ obtained from $A'$, i.e.,
    \(
        A'' = A' \setminus
        \{(c_{r-1}, i_{r}) : r \in [s]\} \setminus
        \{(c_s,i_0)\} \cup
        \{(c_{r}, i_r) : r \in \{0, \dots,s\} \}.
    \)
    Observe that every agent is indifferent between the bundles in $A''$ and $A'$.
    Hence, $A''$ also Pareto dominates $A$.
    However, 
    \(
        \sum_{i \in N} |A_i \setminus A''_i| <
        \sum_{i \in N} |A_i \setminus A'_i|,
    \)
    which is a contradiction.
    Thus, there exists an agent $i_r$ in the cycle $C$ that strictly prefers the new chore to the old one.
    Without loss of generality, let us assume that $r=0$, i.e.,
    $c_0$ is in the later indifference class of $i_0$ than $x_{i_0}$.
    Then, observe that $(c_0,i_1,c_1,\dots,i_s,c_s)$
    is an alternating path for allocation $A$ such that 
    $(c_s,i_0) \in A$
    and $\psi(i_0,c_0) > \psi(c_s,i_0)$,
    which concludes the proof.
\end{proof}

\end{document}